
\documentclass[preprint, 12pt]{elsarticle}





\usepackage{amsmath}
\usepackage{amssymb}
\usepackage{amsthm}
\usepackage{bbm}
\usepackage{algorithm}
\usepackage{algorithmic}
\usepackage{verbatim}
\usepackage{tikz}
\usepackage{graphicx}
\usepackage{subcaption}
\usetikzlibrary{positioning}
\usetikzlibrary{arrows}
\usepackage{verbatim}
\usepackage{cite}
\usepackage{hyperref}


\journal{Theoretical Computer Science}

\newcommand{\A}{\mathcal{A}}

\renewcommand{\P}{\mathbf{P}}


\renewcommand{\b}[1]{( #1 )}
\newcommand{\B}[1]{\{ #1 \}}
\renewcommand{\sb}[1]{[ #1 ]}

\newcommand{\R}{\mathbb{R}}
\newcommand{\N}{\mathbb{N}}

\newcommand{\1}{\mathbbm{1}}

\newcommand{\EE}{\mathcal{E}}
\newcommand{\fl}[1]{\lfloor #1 \rfloor}
\newcommand{\X}{\mathbf{X}}
\newcommand{\PPi}{\mathbf{\Pi}}
\newcommand{\F}{\mathcal{F}}

\newcommand{\W}{\mathbf{W}}

\newcommand{\Wu}[1]{#1}
\newcommand{\WWu}[1]{{#1}}

\newcommand{\Wurevision}[1]{#1}
\newcommand{\SecondRevision}[1]{{\color{blue}#1}}

\setlength{\marginparwidth}{30mm}
\newcommand{\rightnote}[1]{}
\newcommand{\leftnote}[1]{}

\newtheorem{theorem}{Theorem}
\newtheorem{corollary}{Corollary}

\newtheorem{lemma}{Lemma}
\newtheorem{claim}{Claim}


\hyphenation{run-time cr-oss en-tropy}


\begin{document}

\begin{frontmatter}

\title{Stochastic Runtime Analysis of a  Cross-Entropy Algorithm for Traveling Salesman Problems}





\author[zw]{Zijun Wu\corref{cor1}\fnref{fn1}}
\ead{zijunwu@bjut.edu.cn}

\author[rm]{Rolf H. M{\"o}hring\corref{cor2}\fnref{fn2}}
\ead{rolf.moehring@tu-berlin.de}

\author[lj]{Jianhui Lai\corref{cor3}}
\ead{laijianhui@bjut.edu.cn}

\cortext[cor1]{Principal corresponding author}
\cortext[cor2]{Corresponding author}
\cortext[cor3]{Corresponding author}

\address[zw]{Beijing Institute for Scientific and Engineering Computing (BISEC), Pingle Yuan 100, Beijing, P. R. China}
\address[rm]{Beijing Institute for Scientific and Engineering Computing (BISEC), Pingle Yuan 100, Beijing, P. R. China}
\address[lj]{College of Metropolitan Transportation, Beijing University of Technology, Pingle Yuan 100, Beijing, P. R. China}

\fntext[fn1]{The author is also affiliated with the School of Applied Mathematics and Physics at Beijing University of Technology}
\fntext[fn2]{The author is a professor emeritus of mathematics at Berlin University
	of Technology}

\begin{abstract}
This article 
analyzes the stochastic runtime of a Cross-Entropy Algorithm \Wurevision{mimicking an Max-Min Ant System with iteration-best reinforcement. It investigates the impact of magnitude of the sample size on the runtime to find optimal solutions for TSP instances.}

For simple \Wurevision{TSP} instances that have a $\{1,n\}$-valued distance function and a unique optimal solution,  we \Wurevision{show that sample size $N\in \omega(\ln n)$ results in a stochastically polynomial runtime, and $N \in O(\ln n)$ results in a stochastically exponential runtime, where
``stochastically" means with a probability
of $1-n^{-\omega(1)},$ and $n$ represents number of cities. In particular, for $N\in \omega(\ln n),$ we prove
a stochastic runtime} of \Wurevision{$O\b{N\cdot n^{6}}$} with the vertex-based random solution generation,
and a stochastic runtime of \Wurevision{$O\b{N\cdot n^{3}\ln n}$} with the edge-based random solution generation.
These runtimes \Wurevision{are very close to} the \Wurevision{best} known expected runtime for variants of Max-Min Ant System with best-so-far reinforcement \Wurevision{by choosing a small $N\in \omega(\ln n).$}
They are obtained for the stronger notion of stochastic runtime, \Wurevision{and analyze the runtime in most cases.}

We also inspect more complex instances with $n$ vertices positioned on an $m\times m$ grid.
When the $n$ vertices span a convex polygon, we obtain a stochastic runtime of \Wurevision{$O(n^{4}m^{5+\epsilon})$} with the vertex-based random solution generation,
and a stochastic runtime of \Wurevision{$O(n^{3}m^{5+\epsilon})$} for the edge-based random solution generation.
When there are \Wurevision{$k \in O\b{1}$} many vertices inside a convex polygon spanned by the other $n-k$ vertices, we obtain
a stochastic runtime of \Wu{$O(n^{4}m^{5+\epsilon}+n^{6k-1}m^{\epsilon})$} with the vertex-based random solution
generation, and a stochastic runtime of \Wu{$O(n^{3}m^{5+\epsilon}+n^{3k}m^{\epsilon})$} with the edge-based 
random solution generation. These runtimes
are better than the expected runtime for the so-called $(\mu\!+\!\lambda)$ EA reported in a recent article, and again obtained for the stronger notion of stochastic runtime.
\end{abstract}

\begin{keyword}
probabilistic analysis of algorithms, 
stochastic runtime analysis of evolutionary algorithms, Cross Entropy algorithm,  Max-Min Ant System, $(\mu\!+\!\lambda)$ EA. 
\end{keyword}

\end{frontmatter}


%
%



\section{Introduction}
The Cross 
Entropy (CE) algorithm
is a general-purpose evolutionary algorithm (EA) that has been applied successfully to many 
$\mathcal{NP}$-hard combinatorial optimization problems,
see e.g. the book \citep{rubinstein2013cross} for an overview.
It was initially designed for rare event simulation  by Rubinstein \citep{Rubinstein1997}
in 1997, and thereafter 
formulated as
an optimization tool for both continuous and discrete optimization (see \citep{Rubinstein1999}).


CE has much in common with the famous ant colony optimization (ACO, see \citep{dorigobook}) and 
the estimation of distribution algorithms (EDAs, see \citep{hauschild2011introduction}). 
They all belong to the so-called {\em model-based search} paradigm (MBS),
see \citep{zlochin2004model}. Instead of only manipulating
 solutions, which is very typical in traditional heuristics 
like Genetic Algorithms \citep{whitley1994genetic}  and Local Search \citep{lourencco2003iterated} and others, MBS algorithms attempt to optimize the solution reproducing mechanism. 
In each iteration, they produce
new solutions by sampling from a probabilistic distribution 
on the search space. 
The distribution is often called a {\em model} in the literature (see e.g. \citep{zlochin2004model} and \citep{wuthesis}).  This model evolves
iteratively by incorporating information
from some elite solutions occurring in the search history, so as to asymptotically
model the spread of optimal solutions in the search space.  See
the recent Thesis \citep{wuthesis} for more details on MBS algorithms and their mathematical properties. 

An important issue for MBS algorithms is to determine a suitable size for the sampling
in each iteration. A large
sample size makes each iteration unwieldy, however a small sample size may mislead the
underlying search due to the randomness in the sampling. Sample size reflects the {\em iterative complexity} (computational complexity in each iteration). Whether a large sample size is harmful depends on the required \Wurevision{{\em optimization time} (i.e., the total number of iterations required to reach an optimal solution).} 
This article aims to shed a light on this issue by theoretically analyzing the relation between sample size 
and \Wurevision{optimization time} for a CE {\em variant} that includes also some essential
features of the famous \WWu{Max-Min} Ant System ($\mathcal{MMAS}$ \citep{Stuetlze2000}).  To this end, a 
thorough runtime
analysis is needed. 



The theoretical runtime analysis of EAs has gained rapidly growing interest in recent years,
see e.g.
\citep{Stefan1999On}, \citep{He2001}, \citep{Neumann2006},  \citep{witt2006runtime}, \citep{Neumann2009Runtime}, \citep{Frank2011Runtime}, 
\citep{Gutjahr2008Runtime}, \citep{Zhou2007A}, \citep{Zhou2009Runtime}, \citep{Neumann2010A}, \citep{Oliverto2015}, \citep{Sudholt2012}, 
\citep{lissovoi2015}, \citep{chen2014runtime}, and \citep{Sudholt2016Update}.  In the analysis, an oracle-based view of
computation is adopted, i.e., the {\em runtime} of an algorithm is expressed as 
{\em the total number of solutions evaluated before reaching
an optimal solution}. Since the presence of 
randomness, the runtime of an EA is often conveyed in expectation or with high probability. 
Due to the famous No Free Lunch Theorem \citep{Wolpert1997No}, 
the analysis must be problem-specific. The first steps towards this type of analysis were made for the \Wurevision{so-called} (1+1) EA \citep{Stefan1999On}
on some test problems that use {\em pseudo boolean functions} as cost functions, e.g., \textsc{OneMax} \citep{Neumann2009Runtime}, \textsc{LeadingOnes} \citep{Neumann2009Analysis}
and \textsc{BinVar}~\citep{Stefan1999On}. Recent research addresses problems of practical importance, such as the computing a minimum spanning trees (MST) \citep{Neumann2004Randomized},
matroid optimization \citep{Reichel2007Evolutionary}, traveling salesman problem  \citep{Sutton2014Parameterized}, the shortest 
path problem~\citep{lissovoi2015}, the maximum satisfiability problem \citep{Sutton2012SA} and the max-cut problem
\citep{Zhou2015Approximation}.

\Wu{Runtime analysis \Wurevision{generally} considers two cases}: {\em expected runtime analysis} and {\em stochastic runtime analysis}. Expected runtime is the average runtime
of an algorithm on a particular problem, see, e.g., the runtime results of $(1+1)$ EA reported in \citep{Stefan1999On}. Expected runtime reflects the oracle-based 
average performance of an algorithm.  A mature technique for expected runtime
analysis is the so-called drift analysis \citep{He2001}. However, this technique requires that the
algorithm has a finite expected runtime for the underlying problem.  
\Wurevision{By} \citep{Wu2014Asymptotic}, drift analysis is not applicable to the {\em traditional} CE \citep{Rubinstein1999}. 

An algorithm
with a smaller expected runtime need not be more efficient, see 
\citep{Wu2016} for details. In contrast, stochastic runtime
provides a better understanding of the performance of a (randomized) EA. Stochastic
runtime is a runtime result conveyed with an overwhelming probability guarantee (see, e.g., the classic
runtime result of 1-ANT in \citep{Neumann2009Runtime}), where an overwhelming
probability means a probability \Wurevision{tending} to $1$ \Wurevision{superpolynomially} fast in the problem size.
It therefore reflects the efficiency of an algorithm for most cases in the sense
of uncertainty. This article is concerned with stochastic runtime analysis, \Wurevision{
aiming to figure out the relation
between stochastic runtime and \Wurevision{magnitude of} the sample size.} 

\Wurevision{R}untime analysis of CE algorithms has been initiated in
 \citep{Wu2014Asymptotic}, where Wu and Kolonko proved a pioneering stochastic runtime result
for the traditional CE on the standard test problem \textsc{LeadingOnes}. As a
continuation of the study of \citep{Wu2014Asymptotic},
Wu et al \citep{Wu2016} further investigated the stochastic runtime of the traditional
CE on another test problem \textsc{OneMax}. The runtime results reported in
\citep{Wu2014Asymptotic} and \citep{Wu2016}
showed that sample size  plays a crucial role
in efficiently finding an optimal solution. In particular, Wu et al \citep{Wu2016}
showed that if \Wurevision{the problem size $n$
is moderately adapted to the sample size $N$,} then the stochastic runtime of the traditional CE on
\textsc{OneMax} is $O\b{n^{1.5+\frac{4}{3}\epsilon}}$ for arbitrarily small $\epsilon>0$ and
a constant smoothing parameter $\rho>0$,  which beats
the best-known stochastic runtime $O(n^2)$ reported in \citep{Neumann2006} for the classic $1$-ANT algorithm, although $1$-ANT employs a much smaller sample size (i.e., sample size equals one). Moreover,
by imposing upper and lower bounds on the sampling probabilities as was done in
$\mathcal{MMAS}$ \citep{Stuetlze2000}, Wu et al \citep{Wu2016} showed
further  that the stochastic
runtime of the resulting CE can be significantly improved even in a \Wu{very rugged search space.}
\rightnote{I used "So far, so good" to mark how far I read the paper. I probably forgot to delete it. It was not meant to be a Section title. Please choose a better title or do not start a new section\\
	\Wu{I have removed the section title}} 


The present article continues the stochastic runtime analysis of \citep{Wu2016}, but now in combinatorial optimization with a study of CE on the traveling salesman problem (TSP). \Wurevision{We  emphasize the impact of the magnitude of $N$ on 
	the stochastic runtime, put $\rho=1,$ and consider} a
CE variant \Wurevision{resembling} an $\mathcal{MMAS}$ with iteration-best reinforcement under two different
random solution generation mechanisms, namely, a vertex-based random solution generation and
an edge-based random solution generation. 

\Wurevision{
	Stochastic runtime
	for $\mathcal{MMAS}$ with iteration-best reinforcement on simple problems like $\textsc{OneMax}$
	has been studied in \citep{Neumann2010A} and \citep{Sudholt2016Update}. In particular,
	Neumann et al \citep{Neumann2010A} showed that to obtain a stochastically polynomial runtime for \textsc{OneMax}, $N/\rho\in\Omega(\ln n)$ is necessary. We shall not only extend this to TSP for the case of $\rho=1,$ but also prove that $N\in \omega(\ln n)$ is already sufficient to guarantee a stochastically polynomial runtime for simple TSP instances.
}

TSP is a famous $\mathcal{NP}$-complete combinatorial optimization problem. It concerns finding a shortest Hamiltonian cycle on a  weighted complete graph. Existing algorithms exactly
solving TSP
generally have a prohibitive complexity. For instance, the Held-Karp algorithm \citep{Held1962A} solves 
the problem with a complexity of $O(n^22^n).$ 
A well-known polynomial time approximation algorithm for metric TSP is the so-called Christofides algorithm
\citep{Christofides1976Worst}, which finds a solution with a cost at most $3/2$ times the cost of optimal solutions.
As mentioned in \citep{Goodrich2014Algorithm}, this is still the best known approximation algorithm
for the general metric TSP so far. For Euclidean TSP there \Wu{exists} a famous
polynomial-time approximation scheme (PTAS) by Arora, see \citep{Arora98}.
To design a superior approximation algorithm, researchers in recent years tend to study TSP instances with particular structures, see, e.g.,
\citep{Mitchell2010A}.

Due to the prohibitive running time of exact algorithms,  heuristics are frequently employed in practice
so as to efficiently compute an acceptable solution for a TSP problem, e.g., the Lin-Kernighan (LK) algorithm
\citep{Lin1973An}. 
As a popular heuristic, CE  has also been applied in practice to solve TSP instances, see \citep{Boer2014A} and \citep{Rubinstein1999}.  The implementation
there shows that CE can also efficiently compute an acceptable solution.

In view of the high complexity of general TSP, we consider
in our analysis two classes of TSP instances with a particular structure. The first kind of instances
has been used in \citep{Zhou2009Runtime} and \citep{K2012Theoretical} for analyzing
 the expected runtime of some $\mathcal{MMAS}$ variants \Wurevision{with best-so-far reinforcement}. These TSP instances have polynomially
  many objective function values and a unique optimal solution.  \Wurevision{Moreover, on these TSP instances, solutions
 containing more edges from the optimal solution have a smaller cost than those with fewer such edges.} For more details on these instances,
 see Section \ref{sec:RunTime_PCE}. 
 
 For these simple instances,  we \Wurevision{prove in Theorem \ref{theo:G_1_Small_N} that \Wurevision{with a probability $1-e^{-\Omega(n^{\epsilon})},$} the runtime is $O\b{n^{6+\epsilon}}$ with the vertex-based random solution generation,
 and  $O\b{n^{3+\epsilon}\ln n}$ with the edge-based random solution generation,} 
 for any constant $\epsilon\in (0,1)$ and \Wurevision{$N\in \Omega(n^{\epsilon})$}. \Wurevision{For the case of $N\in \omega(\ln n),$
 	we show that the runtimes (resp., $O(n^6N$ and $n^3(\ln n)N$) are even smaller with probability $1-n^{-\omega(1)},$ see Corollary \ref{theo:G1_Perfect}.}
 These results are very close to the known expected runtime $O(n^6+\frac{n\ln n}{\rho})$ for $(1\!+\!1)$ MMAA reported
 in \citep{Zhou2009Runtime}, and the expected runtime $O(n^3\ln n + \frac{n\ln }{\rho})$ for MMAS$^*_{Arb}$ reported
 in \citep{K2012Theoretical} (where $\rho \in (0,1)$ is an evaporation rate), \Wurevision{if $N\in \omega(\ln n)$ is suitably small.}  
 But
 they give the stronger guarantee of achieving the optimal solution in the respective 
 runtime with an overwhelming probability. \Wurevision{Moreover, \SecondRevision{we show} 
 	a stochastically exponential runtime 
 	for a suitable choice of $N\in O(\ln n)$, see Theorem \ref{theo:G_1_VerySmall_N}. This generalizes the 
 	finding in \citep{Neumann2010A} for \textsc{OneMax} to TSP instances.
 	Therefore,
 	$N\in \Omega(\ln n)$ is necessary, and $N\in \omega(\ln n)$ 
 	is sufficient for a stochastically polynomial runtime for
 	simple TSP instances.}


 We also inspect more complex instances with $n$ vertices positioned on an $m\times m$ grid, and \WWu{the Euclidean distance} as distance function.
 These instances have been employed in \citep{Sutton2012A} and \citep{Sutton2014Parameterized} for
 analyzing the expected runtime of $(\mu\!+\!\lambda)$ EA and randomized local search (RLS). When the $n$ vertices
 span a convex polygon without vertices in the interior of the polygon (so they
 are the corners of that polygon),
 we prove a stochastic runtime of \Wurevision{$O(n^{4}m^{5+\epsilon})$} for the vertex-based random solution generation,
 and a stochastic runtime of \Wurevision{$O(n^{3}m^{5+\epsilon})$} for the edge-based random solution generation,
 see Theorem \ref{theo:TSP_Convex_Case} for details. \Wurevision{Similarly,
 	the $\epsilon$ in the stochastic runtimes can be removed by slightly
 	decreasing the probability guarantee, see Corollary \ref{theo:Corr2}.}
 When the vertices span a convex polygon with $k \in O(1)$ vertices in the interior, we show
 a stochastic runtime of \Wu{$O(n^{4}m^{5+\epsilon}+n^{6k-1}m^{\epsilon})$} with the vertex-based random solution
 generation, and a stochastic runtime of \Wu{$O(n^{3}m^{5+\epsilon}+n^{3k}m^{\epsilon})$} with the edge-based 
 random solution generation, see
 Theorem \ref{theo:grid_2} for details. These runtimes
 are better than the expected runtime for the so-called $(\mu\!+\!\lambda)$ EA and RLS reported in the recent paper
 \citep{Sutton2014Parameterized}.

The remainder of this paper is arranged as follows. 
Section \ref{sec:algorithm} defines the traditional CE and related algorithms, 
Section \ref{sec:TSP} defines the traveling salesman problem and provides more details of the
used CE variants, \Wurevision{Section \ref{sec:RNG_Property} shows some important facts on the two random solution generation methods}, and
Section \ref{sec:RunTime_PCE} reports the stochastic runtime results on the TSP instances. A short conclusion
and \WWu{suggestions for future work are given in Section \ref{sec:conclusion}.}

\section*{Notations for runtime}
\label{sec:notations}
Our analysis 
employs some commonly used notations from complexity theory. We use $O\b{f(n)}$ to denote the class
of functions which are {\em bounded from above} by the function $f(n)$, \Wurevision{i.e., those functions $g(n)$
with $g(n) \le c\cdot f(n)$ for large enough $n$ and some constant $c\ge 0$ not depending on $n.$} Similarly,
$\Omega\b{f(n)}$ is the class of functions that are {\em bounded from below} by $f(n)$, i.e., 
for any $g(n)\in \Omega\b{f(n)}$ there exists a constant $c>0$ not depending on $n$ such
that $g(n)\ge c\cdot f(n)$ \Wurevision{for large enough $n.$} Class $\Theta\b{f(n)}$ is the intersection of $\Omega\b{f(n)}$
and $O\b{f(n)}.$ Class $o\b{f(n)}$ is the class of functions $g(n)$ with 
$g(n)/f(n)\to 0$ as $n \to \infty,$ and class $\omega\b{f(n)}$ is the class of
functions $g(n)$ with $g(n)/f(n) \to +\infty$ as $n \to \infty.$ Obviously,
$o\b{f(n)} \subset O\b{f(n)}$ and $\omega\b{f(n)} \subset \Omega\b{f(n)}.$ 

\section{The general cross entropy algorithm and related algorithms} 
\label{sec:algorithm}

We now \Wu{introduce} the traditional CE algorithm. 
\Wu{The CE variant we will analyze inherits the framework
	of this traditional version.} 
To compare our results with those in
the literature, we shall give also details about some related algorithms.

\subsection{The \WWu{traditional} cross entropy algorithm}
\label{sec:algorithm_CE}


Algorithm
\ref{alg:ce} lists the traditional CE that was proposed in \citep{Rubinstein1999}, adapted to an abstract notion of combinatorial optimization problems.
The algorithm assumes a combinatorial minimization problem $(S, f )$, \Wu{where
$S$ is a {\em finite} search space \WWu{of ``feasible" solutions
	and $f$ is the {\em cost} function. Every} feasible
solution $s\in S$ is composed \WWu{of elements from} a fixed finite set $\A$, the ground set of the problem, i.e., we assume $S\subseteq \A^n$ for some
integer $n\in \N$.}   
\rightnote{Now I understand the model much better}  
\rightnote{The explanation of $\A^n$ is misleading. Later, $p_{ij}$ is the probability to continue with edge $\{i,j\}$ from vertex $i$ in a cycle. So factor $i$ of  $\A^{n}$ contains only probabilities for the pairs $(i,j)$, $i$ fixed.  You do not have edge probabilities
	\Wu{I added a little more details in the next paragraph and will give full details
		in the next section.}} 
Furthermore there is a product distribution on the product space \WWu{$\A^n$}  that induces a distribution on
$S\subseteq \A^n.$ 
The
distribution on $\A^n$ can usually be represented as a vector (or matrix) of \WWu{real-valued probabilities}.
The convex combination of the two distributions in Step 6 of Algorithm \ref{alg:ce}
then corresponds to a convex combination of the two vectors (or matrices).

\Wu{Specific to the TSP, the ground set 
$\A$ can be the set of nodes or edges, $n$ is the number of nodes, and
a feasible solution \WWu{is a sequence of elements from $\A$ that forms a Hamiltonian cycle.}}  \Wu{The product distribution
for the TSP is represented as an $n\!\times\! n$ matrix.
\rightnote{\WWu{Let the following as a new paragraph as you suggested!}}

 When we consider the
set of nodes as our ground set $\A$, each row $i$ of the matrix is a marginal
distribution that specifies \WWu{choice probabilities
for {\em all} nodes following the current node $i$.} A random Hamiltonian \WWu{cycle} is 
sequentially constructed from the product distribution  by \WWu{allowing
only nodes} not yet visited as continuations in each step, see Algorithm \ref{alg:TSP_RNG_Vertex} for more details. 

When we consider the
set of edges as $\A,$ marginals of the product 
distribution \WWu{will be represented 
by} the same $n\!\times\! n$ matrix where the sum of the $(i,j)$-th 
and $(j,i)$-th entries \WWu{reflects
the probability that the edge $\{i,j\}$ occurs in a random solution.}
A random Hamiltonian \WWu{cycle is still constructed  sequentially
	and only edges leading to a feasible solution are taken in each step,} see Algorithm \ref{alg:TSP_RNG_Edge} for details.}
\begin{algorithm}[!htb]
\caption{The general Cross-Entropy algorithm}
\label{alg:ce}
\begin{algorithmic}[1]
\REQUIRE ~~\\
  an {\em initial distribution} $\PPi_0$ on the solution space, a fixed {\em smoothing
  parameter} $\rho\in (0,1],$ a {\em sample size} $N\in \N_+,$ an {\em elite size} $M \in \N_{+}$ with
  $M \le N$
\STATE $t=0;$
\LOOP 
\STATE independently generate $N$  random solutions $\X_t^{(1)}, \ldots, \X_t^{(N)}$ \WWu{with \Wu{the current distribution}} $\PPi_t;$
\STATE sort these $N$ \WWu{solutions in non-decreasing} order as $f\b{\X_t^{[1]}}\le \cdots\le f\b{\X_t^{[N]}}$ according to
the cost function $f$;
\STATE learn an empirical distribution $\W_t$ from the $M$ best solutions $\X_t^{[1]}, \ldots, \X_t^{[M]}$;
\STATE set $\PPi_{t+1}=(1-\rho)\PPi_t + \rho \W_t;$
\STATE $t=t+1;$
\ENDLOOP
\end{algorithmic}
\end{algorithm}

Traditionally, CE sets a small elite ratio $\alpha\in (0,1)$ and uses the best $\fl{\alpha \cdot N}$
 solutions in Step 5 to build the empirical distribution $\W_t$. Here, we use the elite size $M$
instead. This does not intrinsically change the original algorithm. Steps 3 and 5 depend on the detailed definition of the underlying problem.
We shall give details to them in Subsection \ref{sec:CE_details}.

Step $6$  of Algorithm \ref{alg:ce} plays a crucial role in the different theoretical analyses of the algorithm,
see, e.g., \citep{Costa2007Convergence}, \citep{Wu2014Asymptotic}, \citep{wu2014absorption}, 
\citep{wuthesis}, \citep{Wu2016}. The occurrence of good solutions are probabilistically enforced by
incorporating the new information $\W_t$ into $\PPi_{t+1}.$ This idea, somehow, coincides with the reinforcement learning in \citep{thomas1997}.
The smoothing parameter $\rho$ reflects the relative importance
of the new information $\W_t$ in the next sampling. It balances global exploration and local exploitation
to a certain degree.
A larger $\rho$ makes the algorithm concentrate more on the particular area spanned by the elite solutions
$\X_t^{[1]}, \ldots, \X_t^{[M]}$, while
a smaller $\rho$ gives more opportunities to solutions outside that area.

However, balancing global exploration and local exploitation through tuning $\rho$ is ultimately
limited.  Wu and Kolonko \citep{Wu2014Asymptotic}
proved that the famous
``genetic drift" \citep{Asoh1994} phenomenon also happens in this algorithmic scheme, i.e., the sampling
(Step 3) 
eventually freezes at a single solution and that solution \Wu{needs} not to be optimal.
This means that the algorithm gradually loses the power of global exploration.

As a compensation for global exploration, Wu et al \citep{Wu2016} proved that a moderately large sample size $N$ might be helpful. The results there showed that a moderately large $N$ configured with a large $\rho$ (e.g., $\rho=1$)
can make the algorithm very efficient.
Although a large $N$  introduces a high computational burden in each iteration, the total
number of iterations required for getting an optimal solution is considerably reduced. 

Wu et al \citep{Wu2016} also indicated another way to compensate the global exploration, i.e.,
imposing a lower bound \WWu{$\pi_{\min} \in (0,1)$} and
an upper bound \WWu{$\pi_{\max}\in (0,1)$} on the sampling distributions in each iteration. This idea
is originated from $\mathcal{MMAS}$ \citep{Stuetlze2000}.
In each iteration $t,$ {\em after} applying Step 6, the entries of distribution $\PPi_{t+1}$ \WWu{that are} out of the range
$[\pi_{\min}, \pi_{\max}]$ are reset to that range by assigning to them the nearest bounds,
see \eqref{eq:adjustment} in Section \ref{sec:TSP} for more details.
Wu et al \citep{Wu2016} \WWu{have proved} that this can make
CE more efficient even in the case of \Wu{a rugged search space. }

To follow these theoretical suggestions made
in \citep{Wu2016}, we shall in our stochastic runtime analysis use a CE 
that modifies the traditional CE (Algorithm \ref{alg:ce}) accordingly. 
We shall see that these modifications make the CE very efficient for the considered \WWu{TSP instances.}

\subsection{Related evolutionary algorithms}
\label{sec:algorithm_related_EA}


Related evolutionary algorithms for TSP whose runtime has been extensively studied are RLS \citep{Neumann2004Randomized}, $(\mu + \lambda)$ EA \citep{Sutton2014Parameterized}, 
and those theoretical abstractions of $\mathcal{MMAS}$ \citep{Stuetlze2000}
including 
MMAS$^*_{bs}$ \citep{Gutjahr2008Runtime}, 
(1+1) MMAA \citep{Zhou2009Runtime}. We now give algorithmic details of them.
\WWu{In order to facilitate the comparison, their runtimes for
	TSP instances will be discussed in Section \ref{sec:RunTime_PCE}.}

$(\mu \!+\!\lambda)$ EA is an extension of the famous $(1\!+\!1)$ EA \citep{Stefan1999On}. $(\mu \!+\!\lambda)$ EA randomly chooses
\Wurevision{$\mu$} solutions as the initial population. In each iteration, $(\mu \!+\!\lambda)$ EA  
randomly chooses $\lambda$ parents from current population, then produces $\lambda$ children by
applying randomized mutation to each of the selected parents, and forms the next population by taking
the best $\mu$ solutions from these $\mu\!+\!\lambda$ solutions in the end of current iteration.
The expected runtime of $(\mu \!+\!\lambda)$ EA on TSP instances is studied in \citep{Sutton2014Parameterized},
where Sutton et al uses a Poisson distribution to determine the number of randomized mutations (2-opt move or
jump operation) should be taken by a selected parent in each iteration.

RLS is a local search technique \citep{Pirlot1996General}.  It employs a randomized
neighborhood. In each iteration, it randomly chooses a  number of components of the
best solution found so far and then changes these components. The expected runtime of
RLS for TSP instances is also studied in \citep{Sutton2014Parameterized}, where 
\rightnote{the referee for the first paper suggested to replace Poisson by Bernoulli\\
\Wu{It is not the Poisson trials, it is Poisson distribution here.}} 
the neighborhood is taken to be a $k$-exchange neighborhood with $k$ randomly determined
by a Poisson distribution.

$(1\!+\!1)$ MMAA \WWu{is a simplified version} of the famous $\mathcal{MMAS}$ \citep{Stuetlze2000}, where
the sample size is \Wurevision{set to $1$ and} pheromones are updated only with the best solution found
so far ({\em best-so-far reinforcement}) in each iteration. In each iteration of $(1\!+\!1)$ MMAA,
the ant which constructed the best solution found so far deposits an amount  $\pi_{\max}$ of pheromones
on the traversed edges, and an amount $\pi_{\min}$ of pheromones on the non-traversed edges, and
the pheromones are updated by linearly combining the old and these newly added pheromones as in Algorithm
\ref{alg:ce}. The expected runtime of $(1\!+\!1)$ MMAA on simple TSP instances is studied in 
\citep{Zhou2009Runtime}. The expected runtime of its variant MMAS$^*_{Arb}$ on simple TSP instances is studied
in \citep{K2012Theoretical}.

\section{The traveling salesman problem and \Wu{details of the CE variant}}
\label{sec:TSP}


Now, we formally define TSP, and give more details of the CE \Wu{variant} we will analyze.
\subsection{The traveling salesman problem}
\label{sec:TSP_Definition}
 We consider an undirected graph $G=(V,E)$ with 
vertex set $V=\{1,\ldots,n\}$ and edge set $E=\big\{\{i,j\}\ |\ i\in V, j\in V, i\ne j\big\}.$
\rightnote{\Wu{I changed the ',' to '.'}\\
	\WWu{Probably, we should use sequence instead of set, because this representation is more consistent with our formalization of the problem!}}
A {\em Hamiltonian cycle} is \WWu{a sequence} $\{\{i_l, i_{l+1}\}\ |\ l=1,\ldots,n\}$ of edges  such that
\begin{itemize}
\item[a)] $i_1=i_{n+1};$
\item[b)] $(i_1, \ldots, i_n)$ is a permutation of $\{1,2,\ldots,n\}.$
\end{itemize}
\Wu{This definition actually considers $E$ as the ground set $\A.$ As mentioned above, we can also put $\A=V$
	and represent Hamiltonian cycles in a more
	compact way as permutations of $V.$}
Note that
a Hamiltonian cycle corresponds to $n$ different permutations, whereas a permutation
corresponds to a unique Hamiltonian cycle. \Wu{However, the two representations are intrinsically
\WWu{the same.} We shall use them interchangeably in the sequel.}
To facilitate our discussion, we shall refer to a Hamiltonian cycle by just referring to one
of the $n$ corresponding permutations, \Wu{and \WWu{denote by $S$} the set of all possible 
	permutations}. We employ the convention that two permutations
are said to be same iff they form the same underlying Hamiltonian cycle.
The notation $\{k,l\} \in s$ shall mean that the edge $\{k,l\}$ {\em belongs to} the
underlying Hamiltonian cycle of the
solution (permutation) $s.$ 

Once a distance function $d: E \mapsto \R_{+}$ is given, the {\em (total
traveling) cost} $f(s)$ of a feasible solution $s=(i_1, i_2, \ldots, i_n)\in S$ is then calculated by
\begin{equation}
\label{eq:TSP_Evaluation}
f(s):= \sum_{j=1}^{n-1} d(i_j, i_{j+1}) + d(i_n, i_1).
\end{equation}
We denote by $S^*\subseteq S$ the set of feasible solutions (Hamiltonian cycles) that 
minimize the  cost \eqref{eq:TSP_Evaluation}. 



\subsection{Details of the CE variant}
\label{sec:CE_details}
The CE variant we consider in the analysis completely inherits the structure of Algorithm \ref{alg:ce}, and additionally employs a component from $\mathcal{MMAS}.$

We now formalize the sampling distribution,
and define Steps 3 and 5 in more detail. \Wu{As mentioned, we represent a
	sampling distribution (a product distribution on $\A^n$) for the TSP}
\WWu{by a} matrix $\PPi=(\pi_{i,j})_{n\times n},$ such that 
\begin{itemize}
\item[a)] $\sum_{j=1}^{n} \pi_{i,j}=1,$ for all $i=1, \ldots,n,$
\item[b)] $\pi_{i,i}=0$ for all $i=1,\ldots,n,$
\item[c)] $\pi_{i,j}=\pi_{j,i}$ for each edge $\{i,j\}\in E.$
\end{itemize}
\rightnote{I would say that  is the probability that a Hamilton cycle continues with vertex $j$ when it is in vertex $i$. The edge probability is $\pi_{ij}+\pi_{ji}$ \Wu{You are right!}} 
For each edge $\{i,j\}\in E,$ $\pi_{i,j}$ \WWu{reflects the probability} that  \Wu{a Hamilton cycle continues with vertex $j$ when it is in vertex $i$}. In the sequel, we write
the sampling distribution $\PPi_t$ in iteration $t$ as $(\pi_{i,j}^t)_{n\times n},$ where the
superscript $t$ of $\pi_{i,j}^t$ indicates the iteration. The {\em initial distribution} $\PPi_0=(\pi_{i,j}^0)_{n\times n}$
is, without loss of generality, set to be the uniform distribution, i.e., $\pi_{i,j}^0=\pi_{j,i}^0=\frac{1}{n-1}$ for all
edges $\{i,j\}\in E.$

We shall consider two random solution generation methods, a {\em vertex-based random solution generation}
and an {\em edge-based random solution generation}. 
\Wu{Algorithm \ref{alg:TSP_RNG_Vertex} lists the vertex-based random solution generation  method.} 
\Wu{This method uses $V$ as the ground set $\A$. A product
	distribution of $\A^n$ is therefore represented as a matrix $\PPi=(\pi_{i,j})_{n\times n}$ \WWu{ satisfying
	a)-c) above,} i.e., each row of 
	$\PPi$ represents a sampling distribution on $\A=V.$ Directly sampling
	from $\PPi$ may produce infeasible solutions from $\A^n-S.$ To 
	avoid that, Algorithm \ref{alg:TSP_RNG_Vertex} starts with a 
	randomly fixed initial node, and then sequentially extends a partial solution with an unvisited vertex until a complete permutation is obtained.}
This method is efficient and rather popular in practice, see, e.g., \citep{Boer2014A} and \citep{dorigobook}. Here, ``$s+(v)$" means that appends a vertex
 $v$ to the
end of a partial solution $s.$ 
\begin{algorithm}[!htb]
\caption{Vertex-based random solution generation}
\label{alg:TSP_RNG_Vertex}
\begin{algorithmic}[1]
\REQUIRE ~~\\
    a distribution $\PPi=\b{\pi_{i,j}}_{n\times n}$
\ENSURE ~~\\
    a permutation of $1,2,\ldots,n$
\STATE $s=\emptyset,$ and $V_{univisted}=V;$ 
\STATE randomly select $v$ from $V,$ $s = s + (v),$ and $V_{unvisited}=V_{unvisited}-\{v\};$
\WHILE{$(|V_{unvisited} \neq \emptyset|)$}
\STATE select a random vertex $v'$ from $V_{unvisited}$ with a probability
\begin{equation}
\label{eq:TSP_continuation_select_prob}
\P[v'\ |\ s]=\frac{\pi_{{v, v'}}}{\sum_{k\in V_{unvisited}} \pi_{{v,k}}};
\end{equation}
\STATE set $s\! =\! s\!+\!(v'),$ $V_{unvisited}=V_{unvisited}-\{v'\};$
\STATE  $v=v';$
\ENDWHILE
\RETURN $s;$
\end{algorithmic}
\end{algorithm}

The edge-based random solution generation  is listed in Algorithm \ref{alg:TSP_RNG_Edge}.
The idea is from 
\citep{K2012Theoretical}.
This method considers \WWu{edge set} $E$ as \Wu{the ground set $\A$}. 
\Wu{A feasible solution is then a sequence of edges \WWu{that form} a Hamiltonian \WWu{cycle,} i.e. $S\subseteq E^n$. 
	To unify the notation of feasible solutions, Algorithm \ref{alg:TSP_RNG_Edge} \WWu{translates} its outcomes into permutations. As the actual 
	ground set is $E,$ a product distribution is an $n\times \frac{n(n-1)}{2}$ matrix such that each row is a marginal specifying a sampling distribution on $E.$
	Algorithm \ref{alg:TSP_RNG_Edge} only considers those with identical marginals, a product
	distribution can be therefore fully characterized by one of its marginals and is therefore again \WWu{represented
	by an $n\times n$ matrix $\PPi=(\pi_{i,j})_{n\times n}$ as above. }
	An edge $\{i,j\}\in E$ is then sampled from $\PPi$ with probability $(\pi_{i,j}+\pi_{j,i})/\sum_{k=1}^{n}\sum_{l=1}^{n} \pi_{k,l}=2\pi_{i,j}/n$
	since each row of $\PPi$ sums up to $1$. A random \WWu{sequence $\in E^n$} is generated
	by independently sampling from $\PPi$ $n$ times. To avoid infeasible solutions, Algorithm \ref{alg:TSP_RNG_Edge} \WWu{ considers  in every sampling only edges} that are {\em admissible} by the edges selected before.}  Given a set $\mathcal{B}$ of edges such that the subgraph  $(V, \mathcal{B})$ does neither
contain a cycle nor a vertex of degree $\ge 3,$ an edge \WWu{$e' \in E$} is said to be  admissible by $\mathcal{B}$
if and only if the subgraph \WWu{$(V, \mathcal{B}\cup\{e'\})$} still does neither contain a cycle nor a vertex of degree
$\ge 3.$ 
We denote by $B_{admissible}$ the set of edges $\notin \mathcal{B}$ that are admissible by $\mathcal{B}.$
\begin{algorithm}[!htb]
\caption{Edge-based random solution generation}
\label{alg:TSP_RNG_Edge}
\begin{algorithmic}[1]
\REQUIRE ~~\\
    a distribution $\PPi=\b{\pi_{i,j}}_{n\times n}$
\ENSURE ~~\\
    a permutation of $1,2,\ldots,n$
\STATE $\mathcal{B}=\emptyset, B_{admissible}=E;$
\WHILE{$(|\mathcal{B}|\le n-1)$}
\STATE select an edge \Wu{$\{i,j\}$} from $B_{admissible}$ with a probability
\begin{equation}
\label{eq:TSP_edge_select_prob}
\P[e\ |\ s]=\frac{\pi_{i,j}+\pi_{j,i}}{\sum_{\{k,l\}\in B_{admissible}} \Wu{\pi_{k,l}+\pi_{l,k}}};
\end{equation}
\STATE set \Wu{$\mathcal{B}\! =\! \mathcal{B}\!\cup\!\{\{i,j\}\};$}
\STATE update $B_{admissible};$
\ENDWHILE
\STATE let $s=(1,i_2, i_3,\ldots,i_n)$ with $\{1,i_2\}, \{i_j, i_{j+1}\}\in \mathcal{B}$ for \Wurevision{$j\!=\!2,\ldots,n\!-\!1;$}
\RETURN $s;$
\end{algorithmic}
\end{algorithm}

The $N$ random solutions $\X_t^{(1)}, \ldots, \X_t^{(N)}$ in iteration $t$ are then generated by $N$ runs of Algorithm \ref{alg:TSP_RNG_Vertex} or Algorithm \ref{alg:TSP_RNG_Edge} with \WWu{\em  the current distribution} $\PPi_t=(\pi_{i,j}^t)_{n\times n}$. The empirical distribution
$\W_t=(w_{i,j}^t)_{n\times n}$ is then calculated from the $M$ elite solutions by setting 
\begin{equation}
\label{eq:empirical_dist}
w_{i,j}^t=\frac{\sum_{k=1}^{M} \1_{\{e'\in E\ | \ e' \in \X_t^{[k]} \}}(\{i,j\})}{M},
\end{equation}
\rightnote{I think in (3) above it should be $\pi_{kl}+\pi_{lk}$ in the denominator, since an edge $\{i,j\}$ occurs only once in the sum\\
\Wu{Thank you for pointing out this mistake!}}
\WWu{where} $\1_A(\cdot)$ is the indicator function of set \Wurevision{$A = \{e'\in E\mid e'\in \X_t^{[k]}\}$} for each $\{i,j\}\in E.$ The next 
distribution $\PPi_{t+1}=(\pi_{i,j}^{t+1})_{n\times n}$ is therefore obtained as
\begin{equation}
\label{eq:pi_t+1_details}
\pi_{i,j}^{t+1}=(1-\rho)\pi_{i,j}^t+\rho w_{i,j}^t
\end{equation}
for each $\{i,j\}\in E.$

We \Wu{ continue with} the suggestions made in \citep{Wu2016}. In the CE variant, we shall use a moderately large $N$ and a large $\rho=1$.
To fully use the best elite solutions, we take $M=1.$
To prevent premature convergence (i.e., a possible stagnation at a non-optimal solution), we \Wurevision{employ a feature} from $\mathcal{MMAS}$ 
\citep{Stuetlze2000}, called  {\em max-min calibration}, in the construction of $\PPi_{t+1},$. We choose a lower bound $\pi_{\min}\in (0,1)$ and an upper bound $\pi_{\max}\in (0,1),$
and, after applying \eqref{eq:pi_t+1_details}, adjust $\PPi_{t+1}$ 
by 
\begin{equation}
\label{eq:adjustment}
\pi_{i,j}^{t+1}=
\begin{cases}
\pi_{\min} &\text{if } \pi_{i,j}^{t+1}<\pi_{\min},\\
\pi_{i,j}^{t+1} &\text{if } \pi_{i,j}^{t+1}\in [\pi_{\min}, \pi_{\max}],\\
\pi_{\max} &\text{if } \pi_{i,j}^{t+1} >\pi_{\max},
\end{cases}
\end{equation}
for any edge $\{i,j\}\in E$. Note that
the max-min calibration is the {\em only} step that does not occur
in the general CE (i.e., Algorithm \ref{alg:ce}).

This setting turns CE into \Wurevision{an $\mathcal{MMAS}$ with {\em iteration-best reinforcement}, i.e., only the iteration-best solution $\X_t^{[1]}$ is allowed to change
the `pheromones' \Wu{$\PPi_t$}.} \Wurevision{St{\"utzle} and Hoos \citep{Stuetlze2000} indicated 
in an empirical study that the practical performance of iteration-best reinforcement is comparable
to best-so-far reinforcement for TSP instances. Thus,
it should also be worthwhile to compare the theoretical runtime
of iteration-best reinforcement with the known expected runtimes of best-so-far
reinforcement for TSP instances presented in, e.g., \citep{Zhou2009Runtime} and \citep{K2012Theoretical}.} 


\Wurevision{\section{Properties of the random solution generation methods}
	\label{sec:RNG_Property}}

\Wurevision{Before we start with our runtime analysis, we shall discuss some relevant 
	properties of the two random solution generation methods, which concern the
	probability of producing a $k$-exchange move of the iteration-best solution in the next
	sampling.

	Formally, a {\em $k$-exchange move} on a Hamiltonian cycle is an operation
	that removes $k$ edges from the cycle and adds $k$ new edges to obtain again a cycle. A {\em $k$-opt move} is a $k$-exchange move reducing the total travel cost.	
	Figure \ref{fig:2exchange} shows an example of 
	a 2-exchange move, in which edges $\{i,j\},\{k,l\}$ are removed, and edges $\{i,l\}, \{k,j\}$
	are added.Figure \ref{fig:3exchange} shows an example
	of a 3-exchange move.
	\begin{figure}[!htb]
		\begin{subfigure}{0.47\textwidth}
			\includegraphics[scale=0.5]{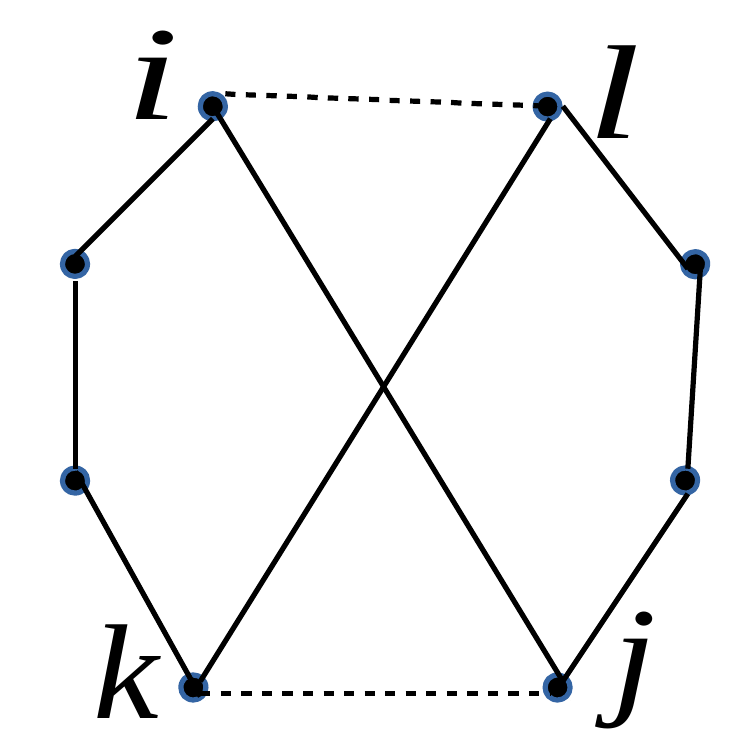}
			\caption{A 2-exchange move}
			\label{fig:2exchange}
		\end{subfigure}
		\begin{subfigure}{0.47\textwidth}
			\includegraphics[scale=0.5]{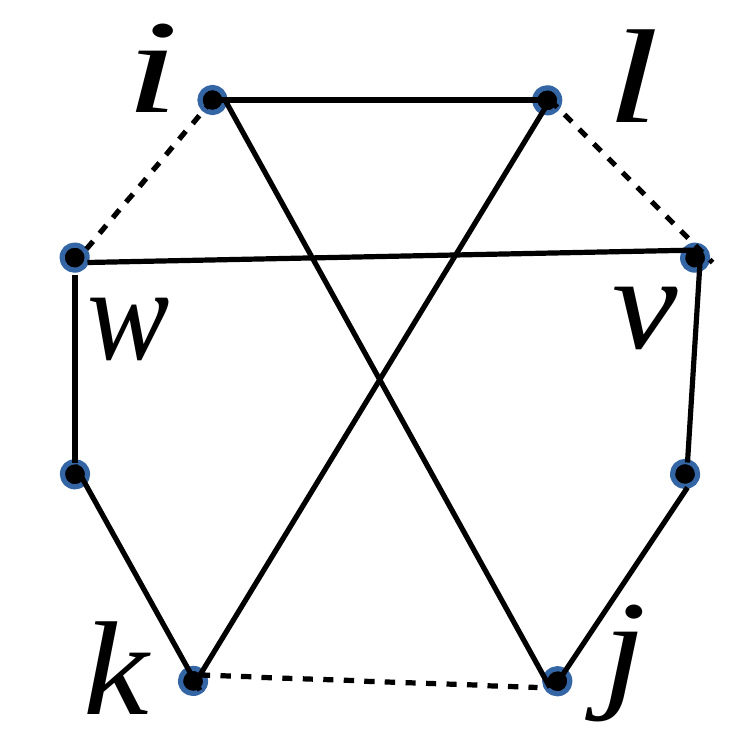}
			\caption{A 3-exchange move}
			\label{fig:3exchange}
		\end{subfigure}
		\caption{Examples for edge exchange moves}
		\label{fig:edgeexchange}
	\end{figure}
	
	In our analysis, we shall consider only iteration-best reinforcement with
	$\rho=1$ and the
	max-min calibration \eqref{eq:adjustment}. The empirical distribution $\W_t=(w_{i,j}^t)_{n\times n}$ for each iteration $t\in N$ in this particular case therefore satisfies
	\begin{equation}
	\label{eq:G1_pi_t}
	\Wu{\pi_{i,j}^{t+1}=\pi_{j,i}^{t+1}=}
	\begin{cases}
	\Wu{\min\{1,\pi_{\max}\}=\pi_{\max}} &\text{if \WWu{edge} } \{i,j\} \in \X_t^{[1]},\\
	\Wu{\max\{0,\pi_{\min}\}=\pi_{\min}} &\text{otherwise},
	\end{cases}
	\end{equation}
	\Wu{for every edge $\{i,j\}\in E$ and iteration $t \in \N.$ Furthermore, $\PPi_{t\!+\!1}=\W_t$.}
	
	Since $\PPi_{t\!+\!1}$ is biased towards the iteration-best solution $\X_t^{[1]},$ $k$-exchanges of $\X_t^{[1]}$ with a large $k$ are unlikely to happen among the $N$ draws from $\PPi_{t\!+\!1}$ by 
	either of the two generation methods. Thus, an optimal solution is more likely to be reached by a sequence of repeatedly
	$k$-exchange moves with small $k$ from iteration-best solutions. 
	Therefore, it is necessary to estimate the probabilities of producing a $k$-exchange of $\X_t^{[1]}$ 
in the two generation methods, especially for the case of small $k.$ 
}

	\Wurevision{
	
\subsection{Probabilities of producing $k$-exchanges in the vertex-based random solution generation}
	
	The probability of producing $k$-exchanges with $k=2,3$ in the vertex-based random solution generation has been studied in Zhou \citep{Zhou2009Runtime}. With $\pi_{\min}=\frac{1}{n^2}$
	and $\pi_{\max}=1-\frac{1}{n},$ Zhou \citep{Zhou2009Runtime} proved for $(1\!+\!1)$ MMAA that
	with a probability of $\Omega\b{1/n^5},$ Algorithm \ref{alg:TSP_RNG_Vertex} produces a random 
	solution having more edges from $s^*$ than  $x_t^*$  (the best solution found so far) provided that $x_t^*$ is not optimal.
	Zhou \citep{Zhou2009Runtime}
	actually showed that if $x_t^*\ne s^*,$  then there exists either a 2-opt move or a 3-opt move for $x_t^*,$
	and Algorithm \ref{alg:TSP_RNG_Vertex} produces 
	an arbitrary 2-exchange of $x_t^*$  with a probability of $\Omega\b{1/n^3}$, and an arbitrary 3-exchange
	of $x_t^*$ with a probability of $\Omega\b{1/n^5}.$
	
	Although we use $\pi_{\min}=\frac{1}{n(n-2)}$ and
	consider iteration-best reinforcement, a similar 
	result holds in our case. Claim \ref{theo:Lemma_RNG_Vertex}
	below gives a \SecondRevision{lower bound on the} probability of producing a
	$k$-exchange move of the iteration-best solution in the next round with the vertex-based random solution generation. 
	\begin{claim}\label{theo:Lemma_RNG_Vertex}
		Let $M\!=\!1, \rho=1,$ and consider a $k$-exchange move
		of $\X_t^{[1]}$ for some integer $k=2,3,\ldots, n.$ Then, Algorithm 
		\ref{alg:TSP_RNG_Vertex} produces the given $k$-exchange move 
		with a probability $\Omega(1/n^{2k-1})$ in every
		of the $N$ draws in iteration $t+1.$
	\end{claim}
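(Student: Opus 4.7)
The plan is to construct the specified target $k$-exchange Hamiltonian cycle $H$ step by step with Algorithm~\ref{alg:TSP_RNG_Vertex} under the current distribution $\PPi_{t+1}$, and to lower bound the probability that every single random choice made by the algorithm is the ``correct'' one. Because $M=1$ and $\rho=1$, the matrix $\PPi_{t+1}$ has the structure \eqref{eq:G1_pi_t}: the $n-k$ edges that $H$ shares with $\X_t^{[1]}$ (call them the \emph{good} edges) carry weight $\pi_{\max}=\Theta(1)$, while the $k$ new edges (the \emph{bad} edges) carry weight $\pi_{\min}=\frac{1}{n(n-2)}=\Theta(1/n^{2})$. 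The row sum $2\pi_{\max}+(n-3)\pi_{\min}$ equals $1$, so every denominator $\sum_{u\in V_{unvisited}}\pi_{v,u}$ appearing in Step~4 of Algorithm~\ref{alg:TSP_RNG_Vertex} is at most $1$. Removing the $k$ bad edges from $H$ decomposes it into $k$ vertex-disjoint good-edge paths (possibly some of length zero).

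Algorithm~\ref{alg:TSP_RNG_Vertex} picks the first vertex uniformly at random in Step~2 and then makes only $n-1$ further random choices, with the closing edge determined implicitly. I therefore fix a starting vertex $s$ and a traversal direction around $H$ so that the implicit closing edge coincides with one of the $k$ bad edges of $H$; this is always possible because $k\ge 2$. The starting vertex is hit with probability exactly $1/n$, and the $n-1$ subsequent choices must reproduce the remaining $n-k$ good and $k-1$ bad edges of $H$ in order. At any step a desired bad edge is taken with probability at least $\pi_{\min}/1=\Omega(1/n^{2})$, so the $k-1$ explicit bad-edge choices contribute at least $\Omega\b{1/n^{2(k-1)}}$.

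For the good-edge choices I distinguish two sub-cases at the current vertex $v$. If both good neighbours of $v$ in $\X_t^{[1]}$ are still unvisited (Case~A), then the denominator is $2\pi_{\max}+(n-3)\pi_{\min}=1$ and the desired good edge is taken with probability at least $\pi_{\max}\ge \tfrac13$. If exactly one of $v$'s good neighbours has already been visited (Case~B), then the denominator is at most $\pi_{\max}+(n-3)\pi_{\min}$, and the desired good edge is taken with probability $1-O(1/n)$. Along each of the $k$ good-edge paths only the entry step can possibly be in Case~A, because every later interior step of the same path automatically has its immediate predecessor visited and therefore falls in Case~B. Consequently at most $k$ of the good-edge steps are in Case~A and at least $n-2k$ are in Case~B, yielding a joint good-edge probability of at least $(1/3)^{k}\b{1-O(1/n)}^{n-2k}=\Omega(1)$ for any fixed $k$.

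Multiplying the three factors --- the starting-vertex factor $1/n$, the good-edge contribution $\Omega(1)$, and the bad-edge contribution $\Omega\b{1/n^{2(k-1)}}$ --- gives a lower bound of $\Omega\b{1/n^{2k-1}}$ on the probability that a single draw of Algorithm~\ref{alg:TSP_RNG_Vertex} reproduces the prescribed $k$-exchange, which is the claim. The main technical obstacle is controlling the good-edge product: a uniform bound $\pi_{\max}/1\approx \tfrac12$ at every good step would decay as $(1/2)^{n-k}$ and destroy the bound, so the Case~A versus Case~B distinction, together with the observation that only $O(k)$ good-edge steps can be of type~A (one per good-edge path), is exactly what keeps the good-edge product bounded below by a positive constant. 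A minor bookkeeping point is that good-edge paths of length zero turn the two bad edges incident to the isolated vertex into consecutive bad-edge choices, but this merely reshuffles the $k-1$ bad-edge and $n-k$ good-edge positions in the walk and does not change the asymptotics.
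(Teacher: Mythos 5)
Your proposal is correct and follows the same basic route as the paper's proof: fix a start vertex and traversal direction so that the implicit closing edge is one of the $k$ new edges, and lower bound the probability of the resulting construction sequence as (start-vertex factor $1/n$) $\times$ ($k-1$ new-edge factors of order $\pi_{\min}=\Theta(n^{-2})$) $\times$ ($n-k$ good-edge factors whose product is $\Omega(1)$). Where you differ is in how the good-edge product is controlled, and here your argument is actually \emph{more} careful than the paper's. The paper simply asserts that the selection probability of an edge $\{v,v'\}$ is at least $\pi^{t+1}_{v,v'}$ ``since each row of $\PPi_{t+1}$ sums up to $1$,'' and multiplies $n-k$ factors of $\pi_{\max}=1-1/n$. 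After the max-min calibration with $M=\rho=1$, however, each row of $\PPi_{t+1}$ sums to $2\pi_{\max}+(n-3)\pi_{\min}\approx 2$, so at a step where \emph{both} high neighbours of the current vertex are still unvisited the conditional probability of a prescribed high edge is only about $1/2$, not $1-1/n$; a uniform $1/2$ per step would ruin the bound. Your Case~A/Case~B split, together with the observation that at most one step per good-edge path (hence at most $k$ steps in total) can be of type~A, is precisely the missing bookkeeping that makes the $\Omega(1)$ good-edge product rigorous. Two small remarks: first, you repeat the paper's slip in writing that the row sum \emph{equals} $1$ (it does not), but none of your actual bounds ($\ge 1/3$ in Case~A, $1-O(1/n)$ in Case~B, $\pi_{\min}$ divided by a denominator at most $2$ for new edges) depends on that assertion, so it is harmless. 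Second, the residual factor $(1/3)^k$ means your constant in the $\Omega(1/n^{2k-1})$ degrades when $k$ grows with $n$; strictly speaking you (like the paper) establish the claim with a uniform constant only for $k\in O(1)$, which is all that is ever used in the subsequent theorems, but it is worth saying so explicitly given that the claim is stated for $k$ up to $n$.
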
   
	\begin{proof}
		
		Recall that in Algorithm \ref{alg:TSP_RNG_Vertex}, the probability
		\eqref{eq:TSP_continuation_select_prob} to select a continuing edge $\{i,j\}$ is always bounded from below by $\pi^t_{i,j}$ \WWu{(or, equivalently, $\pi_{j,i}^t$)}
		for each iteration $t\in \N,$ since each row of $\PPi_t$ sums up to $1.$
		Given a $k$-exchange move of $\X_t^{[1]},$ one possibility to generate it from 
		$\PPi_{t\!+\!1}=\W_t$ by Algorithm \ref{alg:TSP_RNG_Vertex} is that one of \SecondRevision{the
		new edges} is added in the last step. This happens with a probability at least
		$
		\frac{1}{n}\cdot\big[\frac{1}{n(n-2)}\big]^{k-1}\cdot \big(1-\frac{1}{n}\big)^{n-k}\ge \frac{1}{e\cdot n^{2k-1}},
		$
		where \WWu{$e\approx 2.71828$ is Euler's number,} $\frac{1}{n}$ represents the probability to select the starting vertex, $\frac{1}{n(n-2)}$
		is the common lower bound of the probability to select the remaining $k-1$ new edges, and $1-\frac{1}{n}$ is
		the common lower bound of the probability to select one of the remaining $n-k$ edges
		\Wu{from $\X_t^{[1]}$}. 
		
	\end{proof}
	
	Because of Claim \ref{theo:Lemma_RNG_Vertex}, every $2$-exchange of $\X_t^{[1]}$ is produced 
	from $\PPi_{t+1}$ by 
	Algorithm \ref{alg:TSP_RNG_Vertex} with a probability $\Omega(1/n^{3}),$ and every 
	$3$-exchange is produced  by Algorithm \ref{alg:TSP_RNG_Vertex} with a probability
	$\Omega(1/n^5).$ Note that for any $k=2,3, \ldots,$ if a $k$-opt move of 
	$\X_t^{[1]}$ occurs among the $N$ draws in the next sampling,
	then $f(\X_{t+1}^{[1]})<f(\X_t^{[1]})$ must hold. Thus, if we take a moderately large sample size, say $N=\Theta(n^{5+\epsilon})$
	for some $\epsilon>0,$ with a probability $1-(1-\Omega(n^{-5}))^{\Omega(5+\epsilon)}=1-e^{-\Omega(n^{\epsilon})},$ $f(\X_{t+1}^{[1]})<f(\X_t^{[1]})$ will hold, provided that there still exists a $2$-opt or $3$-opt move
	from $\X_t^{[1]}.$
	
	\begin{claim}\label{theo:Vertex_nonincreasing}
		\SecondRevision{Suppose that $M\!=\!1,\rho\!=\!1.$  Then, for iteration
			$t+1,$ the probability that Algorithm \ref{alg:TSP_RNG_Vertex}
			produces a solution with a cost not larger than $\X_t^{[1]}$ in one
			application is in $\Omega(1).$}
	\end{claim}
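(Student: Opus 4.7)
The plan is to lower-bound the desired probability by the probability that Algorithm~\ref{alg:TSP_RNG_Vertex} reproduces $\X_t^{[1]}$ itself, since any such draw yields a solution of cost $f(\X_t^{[1]})$ and thus satisfies the required inequality. By \eqref{eq:G1_pi_t} together with $\rho=1$ and $M=1$, the new sampling matrix $\PPi_{t+1}$ carries value $\pi_{\max}$ on every edge of $\X_t^{[1]}$ and value $\pi_{\min}=1/(n(n-2))$ on every other edge; hence the analysis reduces to tracking how often Algorithm~\ref{alg:TSP_RNG_Vertex} picks a cycle-neighbor at each of its greedy sampling steps.

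I would then trace the algorithm vertex by vertex. Writing the cycle $\X_t^{[1]}$ as $(v_1, v_2, \ldots, v_n)$ (unique up to rotation and reversal) and conditioning on the uniformly chosen starting vertex being $v_1$ (an event of probability $1/n$), both cycle-neighbors $v_2$ and $v_n$ are initially unvisited, so by \eqref{eq:TSP_continuation_select_prob} the probability of selecting either one is $2\pi_{\max}/(2\pi_{\max}+(n-3)\pi_{\min})$; either choice commits the algorithm to a definite direction along $\X_t^{[1]}$. At every subsequent step $j\in\{3,\ldots,n-1\}$, exactly one cycle-neighbor of the current vertex is still unvisited, so the probability of the correct continuation is at least $\pi_{\max}/(\pi_{\max}+(n-3)\pi_{\min})$, and the last step is forced (probability $1$). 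Summing over the $n$ symmetric starting vertices cancels the initial $1/n$, yielding the lower bound
\[
\frac{2\pi_{\max}}{2\pi_{\max}+(n-3)\pi_{\min}}\cdot\left(\frac{\pi_{\max}}{\pi_{\max}+(n-3)\pi_{\min}}\right)^{n-3}
\]
on the probability of reproducing $\X_t^{[1]}$.

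Since $(n-3)\pi_{\min} = (n-3)/(n(n-2)) = O(1/n)$ and $\pi_{\max}$ is bounded away from $0$ (indeed $\pi_{\max}=\Theta(1)$ by the row-sum constraint $2\pi_{\max}+(n-3)\pi_{\min}=1$), the ratio $(n-3)\pi_{\min}/\pi_{\max}$ is $O(1/n)$, and the displayed bound is of the form $(1-O(1/n))^{n}$, which is $\Omega(1)$ by the standard estimate $(1-c/n)^n \to e^{-c}$. The main obstacle I expect is the careful bookkeeping at the first edge step, where both cycle-neighbors are unvisited so the algorithm has to commit to a direction, in contrast to all later steps; once this asymmetry is handled and the mild fact $\pi_{\max}=\Theta(1)$ is verified so that the product does not collapse, the remainder is a routine product estimate.
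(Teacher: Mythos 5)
Your proposal is correct and follows essentially the same route as the paper: both arguments lower-bound the desired probability by the probability that Algorithm~\ref{alg:TSP_RNG_Vertex} reproduces $\X_t^{[1]}$ itself and show this is $(1-O(1/n))^{n-1}\in\Omega(1)$, and your handling of the first step (where \emph{both} cycle-neighbours carry $\pi_{\max}$, forcing a choice of direction) is in fact more careful than the paper's one-line estimate. One small correction: after the max-min calibration the rows of $\PPi_{t+1}$ no longer sum to $1$ (each row sums to $2\pi_{\max}+(n-3)\pi_{\min}\approx 2$), so the fact $\pi_{\max}=\Theta(1)$ should be read off from the explicit setting $\pi_{\max}=1-\frac{1}{n}$ rather than from a row-sum constraint --- this does not affect the validity of your final bound.
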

	\begin{proof}
		Observe that \SecondRevision{the probability that $\X_t^{[1]}$ is reproduced in one application of Algorithm 
		\ref{alg:TSP_RNG_Vertex}} is larger than \SecondRevision{$(1-1/n)^{n-1}\in \Omega(1),$} which implies
		that the cost of the generated random solution  is not larger than $f(\X_t^{[1]}).$ 
	\end{proof}
	Note that \SecondRevision{if $\X_t^{[1]}$ is reproduced at least once among
	the $N$ draws in the next sampling, then} $f(\X_{t+1})^{[1]}\le f(\X_t^{[1]}).$
	Thus, if the sample size $N\in \Omega(\ln n),$ then $f(\X_{t+1}^{[1]})\le f(\X_t^{[1]})$  with a probability $1-(1-\Omega(1))^N=1-O(1/n).$ Particularly, when $N\in \Omega(n^{\epsilon})$
	for some $\epsilon>0,$ $f(\X_{t+1}^{[1]})\le f(\X_t^{[1]})$ with an overwhelming probability $1-e^{-\Omega(n^{\epsilon})}.$
}

\Wurevision{
	
	\subsection{Probabilities of producing $k$-exchanges in the edge-based random solution generation}
	
	The behavior of the edge-based random solution generation is comprehensively studied in \citep{K2012Theoretical}.
	K{\"o}tzing et al \citep{K2012Theoretical} proved for MMAS$^*_{Arb}$
	and a constant $k\in O(1)$ that, with a probability of
	$\Omega\b{1}$, Algorithm \ref{alg:TSP_RNG_Edge} produces a random solution that is obtained by a $k$-exchange move from the best solution 
	found so far. 
	
	Recall that in each iteration $t,$ either $\pi_{i,j}^{t}\!=\pi_{j,i}^t\!=\pi_{\min}$ or $\pi_{i,j}^{t}\!=\!\pi_{j,i}^t\!=\!\pi_{\max}$ for any
	edge $\{i,j\}\in E.$ For convenience, we will call an edge $\{i,j\}\in E$ with $\pi_{i,j}^t\!=\!\pi_{j,i}^t\!=\!\pi_{\max}$
	a {\em high} edge, and otherwise a {\em low} edge. 
	K{\"o}tzing et al \citep{K2012Theoretical} showed the probability of the event that Algorithm \ref{alg:TSP_RNG_Edge} chooses a high edge in an arbitrary fixed step conditioned on the event
	that  $l \le \sqrt{n}$  low edges have been chosen in some $l$ steps before this step is
	$1-O(1/n).$ Our setting is only slightly different with from theirs, i.e., we use $\pi_{\min}=\frac{1}{n(n-2)}$ but they put $\pi_{\min}=\frac{1}{n(n-1)}.$ Thus, the result should
	also hold here. Claim \ref{theo:Edge_Crucial} below formally asserts this, readers may also
	refer to \citep{K2012Theoretical} for a similar proof.
	\begin{claim}\label{theo:Edge_Crucial}
		Assume $M=1,\rho=1.$ Then, the probability of choosing a high edge at any fixed step in 
		Algorithm \ref{alg:TSP_RNG_Edge} is at least $1-12/n$ if at most $\sqrt{n}$  low edges have
		been chosen before that step and there exists at least one high admissible edge  to be added.
	\end{claim}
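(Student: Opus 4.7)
The plan is to bound the probability of choosing a low edge directly from the selection rule \eqref{eq:TSP_edge_select_prob}, exploiting the large ratio $\pi_{\max}/\pi_{\min}$ between high and low pheromones in the iteration-best regime.

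First I will compute $\pi_{\max}/\pi_{\min}$ exactly. With $M=1, \rho=1$, \eqref{eq:G1_pi_t} implies every entry of $\PPi_{t+1}$ equals either $\pi_{\max}$ or $\pi_{\min}$, the value $\pi_{\max}$ being assigned exactly to edges of $\X_t^{[1]}$. Since each row of $\PPi_{t+1}$ sums to $1$, has a zero on the diagonal, two entries $\pi_{\max}$ (corresponding to the two neighbors on the Hamiltonian cycle $\X_t^{[1]}$), and $n-3$ entries $\pi_{\min}$, we must have $2\pi_{\max}+(n-3)\pi_{\min}=1$. Combined with $\pi_{\min}=1/(n(n-2))$ this yields $\pi_{\max}/\pi_{\min} = (n^2-3n+3)/2 = \Theta(n^2)$. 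Writing $H$ and $L$ for the numbers of admissible high and low edges at the current step, the selection rule gives the probability of a low edge as $L\pi_{\min}/(H\pi_{\max}+L\pi_{\min})$, so the target $\le 12/n$ reduces to showing $L \lesssim 6 H n$.

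Next I will lower bound $H$ via a contraction argument. The high edges form the Hamiltonian cycle $C^*=\X_t^{[1]}$. Contracting each component of $(V,\mathcal{B})$ to a single supernode yields $n-k$ supernodes, where $k=k_{high}+k_{low}$ is the total number of chosen edges; then $C^*$ becomes a closed walk of length $n$ on these supernodes. Because every supernode is visited at least once, the walk contains at least $n-k$ transitions between distinct supernodes, each corresponding to an unchosen high edge whose endpoints lie in different components. An edge of this type can still fail admissibility only if one of its endpoints has degree $2$ in $(V,\mathcal{B})$, which forces a chosen low edge incident to that endpoint; a short counting argument then bounds the number of such blocked edges by $2k_{low}\le 2\sqrt n$. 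Together with the hypothesis $H\ge 1$, this yields $H \ge \max\{1,\,(n-k)-2\sqrt n\}$.

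I will then upper bound $L$ by counting admissible endpoint pairs. Each component of $(V,\mathcal{B})$ is a path and so has at most two vertices of degree $\le 1$, so there are at most $2(n-k)$ endpoint vertices; any admissible edge must connect two endpoints in distinct components, giving $L \le 2(n-k)(n-k-1) \le 2(n-k)^2$.

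Finally I combine the bounds by splitting on $n-k$. When $n-k$ exceeds a suitable multiple of $\sqrt n$, the bound $H \gtrsim n-k$ becomes effective and the ratio $L/H = O(n-k) = O(n)$ delivers the desired $O(1/n)$ probability of a low edge. When $n-k$ is smaller, $L = O(n)$ itself is small, and the hypothesis $H\ge 1$ together with the $\Theta(n^2)$ ratio $\pi_{\max}/\pi_{\min}$ again forces the probability to $O(1/n)$. The main obstacle is the transitional range $n-k=\Theta(\sqrt n)$, where the degree-based lower bound on $H$ degrades to nearly $1$ while $L$ can still be of order $n$; in this regime one must carefully balance the endpoint counting against the structural inequality $\sum_i r_i \ge n-k$ for the number of cross-component unchosen high edges in order to sharpen the constant to exactly $12$.
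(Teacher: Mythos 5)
Your proposal follows essentially the same route as the paper's proof: lower-bound the number $H$ of admissible high edges by (number of components) minus a correction of order $k_{low}\le\sqrt n$, upper-bound the number $L$ of admissible low edges by $O((\text{number of components})^2)$, and split cases according to whether the number of components exceeds a constant multiple of $\sqrt n$. Two points deserve correction. First, your derivation of $\pi_{\max}$ from the row-normalization $2\pi_{\max}+(n-3)\pi_{\min}=1$ is based on a false premise: with $M=1,\rho=1$ the matrix $\PPi_{t+1}$ is the calibrated indicator matrix of $\X_t^{[1]}$, whose rows sum to roughly $2$, not $1$; the paper simply \emph{fixes} $\pi_{\max}=1-\tfrac1n$ and $\pi_{\min}=\tfrac{1}{n(n-2)}$ (see \eqref{eq:G1_pi_t}), so $\pi_{\max}/\pi_{\min}=(n-1)(n-2)$ rather than your $(n^2-3n+3)/2$. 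This costs only a factor of about $2$ in the constant, but you should use the stated values rather than re-deriving them. Second, the ``main obstacle'' you identify in the transitional range $n-k=\Theta(\sqrt n)$ is not an obstacle at all: there your own bounds give $L=O(n)$, $H\ge 1$ and $\pi_{\min}/\pi_{\max}=\Theta(n^{-2})$, hence a low-edge probability of $O(1/n)$ directly, which is exactly how the paper closes that case; no delicate balancing is needed, and the constant $12$ is not sharp even in the paper's own computation (which yields $12/(n-2)$). Your contraction argument for the lower bound on $H$ is a slightly cleaner packaging of the paper's charging argument (``each low edge blocks at most $3$ remaining high edges''), and your endpoint-pair count for $L$ is equivalent to the paper's ``at most $4$ admissible edges between any two components''; with the two fixes above the proof goes through.
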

	%
	\begin{proof}
		We now fix a step $n\!-\!m$ for some $m=0,1, \ldots, n\!-\!1,$ and assume that  $l\le \sqrt{n}$  low edges have been chosen before this
		step. Obviously, we still need to
	add \Wu{$m+1\ge 1$} edges to obtain a complete solution. We now estimate the
	numbers of admissible \Wu{high and low} edges in this step.  \WWu{Note that every of the $l$ low edges blocks at most
		$3$  of the $m\!+\!l$ remaining high edges \SecondRevision{(at most
			two which are incident to the end points of
			the low edge, and at most one that may introduce a cycle).}  So at least
		$m\!+\!l-3l=m\!-\!2l\ge m\!-\!3l$ high edges are available for adding in this step.} \Wu{ 
		Of course, it may happen that there is no admissible high edges in this step.
		However, we are not interested in such a case. We consider only 
		the case that there exists at least one admissible high edge in this step, i.e. the number of
		admissible high edges in this step is at least $\max\{1,m-3l\}.$}  Note also that the $n\!-\! m$ edges added before
	partition the  subgraph of $G=(V,E)$ with vertices $V$ and 
	edges from the partial solution constructed so far into exactly $m$ connected components (here, we see an isolated vertex also as a connected component).
	For any two of the components, there are at most $4$ admissible edges connecting them. Therefore, there
	are at most $\min \B{4\binom{m}{2}, \binom{n}{2}}$ admissible low edges. \WWu{Observing $l\le \sqrt{n}$}, the
	probability of choosing a high edge in this step is bounded from {\em below} by
	\Wu{\begin{equation}
		\label{eq:G1_edge_high1}
		1\!-\!\frac{\min\B{4\binom{m}{2},\binom{n}{2}}}{\max\{1,m-3l\}}\frac{\pi_{\min}}{\pi_{\max}}\ge
		\begin{cases}
	    1	\!-\!\frac{2m^2}{(m\!-\!3l)n(n\!-\!2)}\!\ge\!
		1\!-\!\frac{3}{(n-2)} &\text{if } m > 3\sqrt{n},\\
		1\!-\!\frac{12}{n-2} &\text{if } m\le  3\sqrt{n},
		\end{cases}
		\end{equation}}%
	where the first inequality is obtained by observing that 
	\[
	\min\B{4\binom{m}{2},\binom{n}{2}}\le\min \B{2m^2, \binom{n}{2}} \le 2m^2,
	\]
$\tau_{\max}=1-1/n, \pi_{\min}=\tfrac{1}{n(n-2)},$ and
	$
	\frac{2m^2}{m-3l} \le \frac{2}{\frac{1}{m}-\frac{3\sqrt{n}}{m^2}} \le 3n.
	$ 
	\end{proof}
	With Claim \ref{theo:Edge_Crucial}, we can show that, for any $t\in \N$ and any fixed $k\in O(1),$ the 
	probability
	of the event that  a $k$-exchange of $\X_t^{[1]}$ is produced by one application of Algorithm \ref{alg:TSP_RNG_Edge} is $\Omega(1),$ see 
	Claim \ref{theo:Lemma_RNG_Edge}.
	Here, we shall use a different proof from the one presented by K{\"otzing} et al \citep{K2012Theoretical}, 
	which appears to us as problematic. 
	\begin{claim}\label{theo:Lemma_RNG_Edge}
		Let $M=1,\rho=1.$ For any $k\in O(1),$ with probability $\Omega(1),$ 
		the random solution produced by Algorithm \ref{alg:TSP_RNG_Edge} is
		a $k$-exchange of $\X_t^{[1]}.$
	\end{claim}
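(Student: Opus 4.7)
The goal is to prove $\P(Y=k)=\Omega(1)$, where $Y$ denotes the number of edges in the solution produced by Algorithm~\ref{alg:TSP_RNG_Edge} that do not lie in $\X_t^{[1]}$. Since every Hamiltonian cycle consisting of exactly $n-k$ high edges and $k$ low edges is by definition a $k$-exchange of $\X_t^{[1]}$, the event $\{Y=k\}$ coincides with ``the output is a $k$-exchange of $\X_t^{[1]}$''.

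By Claim~\ref{theo:Edge_Crucial}, as long as at most $\sqrt{n}$ low edges have been selected so far, the conditional probability $p_i^L$ that step $i$ picks a low edge satisfies $p_i^L\le 12/n$; iterating this already yields $\P(Y=0)\ge(1-12/n)^n=\Omega(1)$, which handles the case $k=0$. For a matching lower bound, I would observe that after step $i-1$ the partial solution consists of $n-i+1$ path-components (each added edge must merge two components to avoid premature cycles), so for $i\le n/2$ there are $\Omega(n^2)$ endpoint-pairs between distinct components, of which at most $n$ are edges of $\X_t^{[1]}$, leaving $\Omega(n^2)$ admissible low edges. Since every admissible edge has selection probability at least its own weight $2\pi_{\min}/n=\Omega(1/n^3)$ by \eqref{eq:TSP_edge_select_prob} (the normalizer being at most $1$), this gives $p_i^L\ge\Omega(n^2)\cdot\Omega(1/n^3)=\Omega(1/n)$ at every such early step.

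Given these bounds, for each $k$-subset $S\subseteq\{1,\ldots,\lfloor n/2\rfloor\}$ let $E_S$ be the event that exactly the steps in $S$ pick low edges and all other steps pick high edges. The events $\{E_S\}$ are mutually disjoint and all lie in $\{Y=k\}$, and chaining the conditional bounds along the history gives $\P(E_S)\ge\prod_{i\in S}p_i^L\cdot\prod_{i\notin S}(1-p_i^L)=\Omega(n^{-k})\cdot(1-12/n)^{n-k}=\Omega(n^{-k})$. Summing over the $\binom{\lfloor n/2\rfloor}{k}=\Theta(n^k)$ such subsets yields $\P(Y=k)\ge\Theta(n^k)\cdot\Omega(n^{-k})=\Omega(1)$, as claimed. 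The combinatorial factor $\binom{n/2}{k}$ and the per-step factor $p_i^L=\Theta(1/n)$ balance exactly, which is the essential reason why a Poisson-type constant mass appears at every fixed $k$.

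The main obstacle is the interdependence of the $p_i^L$ on the entire history of picks: the $12/n$ upper bound from Claim~\ref{theo:Edge_Crucial} requires that at most $\sqrt{n}$ low edges have been chosen, while the $\Omega(1/n)$ lower bound requires both $\Omega(n)$ remaining path-components and $\Omega(n^2)$ admissible low edges. I plan to handle this by restricting attention to the first $\lfloor n/2\rfloor$ steps, where the component-count condition holds deterministically, and by arguing via a Chernoff bound on the dominating $\mathrm{Binomial}(n,12/n)$ that at most $\sqrt{n}$ low edges occur throughout the construction with probability $1-o(1)$. Within this good event all the per-step bounds are valid, and the disjoint-union calculation above delivers the desired $\Omega(1)$ probability.
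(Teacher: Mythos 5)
Your proposal is correct and follows essentially the same route as the paper's own proof: both decompose the event over the $\Theta(n^k)$ choices of $k$ steps among the first $n/2$ at which a low edge is taken, lower-bound the per-step low-edge probability by $\Omega(1/n)$ via the count of $\Omega(n^2)$ admissible inter-component pairs each carrying weight $\Omega(\pi_{\min}/n)$, use Claim~\ref{theo:Edge_Crucial} for the $(1-O(1/n))^{n-k}$ factor on the high-edge steps, and sum over the disjoint events. The only cosmetic difference is your extra Chernoff/domination remark, which is not needed since within each event $E_S$ only $k\in O(1)\le\sqrt{n}$ low edges ever occur, so the hypothesis of Claim~\ref{theo:Edge_Crucial} holds automatically.
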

	\begin{proof}
	Let $k\in O(1)$ be arbitrarily fixed, and $\mathcal{M}$ be the set of all $k$-element
	subsets of $\{1,2,\ldots,n/2\}$ (where we assume without loss of generality that
	$n$ is even). 
	Obviously, $|\mathcal{M}|\in\Theta(n^k)$ since
	$k\in O(1).$ Let $\mathbb{M}\in \mathcal{M}$ be an arbitrarily fixed $k$-element subset. 
	The probability of the
	event that Algorithm~\ref{alg:TSP_RNG_Edge} selects $k$ new edges \Wu{(low edges)} at steps \Wu{$i \in \mathbb{M}$} and $n-k$ edges \Wu{(high edges)} from $\X_t^{[1]}$ at other steps, is bounded from below by
	\begin{equation}
	\label{eq:G1_edge_high2}
	\begin{split}
	\big(1-O(\frac{1}{n})\big)^{n-k} \prod_{i\in\mathbb{M}}
	\frac{(\binom{n-i+1}{2}-(n-i+k+1))\pi_{\min}}{n(n-1)\pi_{\min}+(n-i+k+1)\pi_{\max}}
	\ge\Theta(\frac{1}{n^k}),
	\end{split}
	\end{equation}
	where $1-O(1/n)$ is a lower bound for the probability of
	selecting an edge from $\X_t^{[1]}.$ In each step $i\in \mathbb{M},$ the edges chosen before
	partition the graph into $n-i+1$ connected components, and for any two of the components there exists at least $2$ edges
	connecting them without introducing a cycle. Hence, there are at least $\binom{n-i+1}{2}$ admissible edges in each step $i\in \mathbb{M}$. Notice also that  the number of admissible
	high edges in this case is at most $n-i+k+1$ ($n-i\!+\!k\!+\!1$ is the maximal number of high edges that have not been chosen before). Therefore, each factor
	$
	\frac{(\binom{n-i+1}{2}-(n-i+k+1))\pi_{\min}}{n(n-1)\pi_{\min}+(n-i+k+1)\pi_{\max}}
	$
	of \eqref{eq:G1_edge_high2} is just
	the lower bound of the probability for choosing an admissible edge not belonging to $\X_t^{[1]}$ in a 
	step $i \in \mathbb{M}$.
	
	As a result, the probability of the random event that Algorithm \ref{alg:TSP_RNG_Edge} produces a 
	$k$-exchange of $\X_t^{[1]}$ \Wu{with $k\in O(1)$} in any of the $N$ independent draws in iteration $t\!+\!1$ is bounded from below by
	$
	|\mathcal{M}|\cdot \Theta(\frac{1}{n^k})=\Theta(n^k) \cdot\Theta(\frac{1}{n^k}) \in \Omega\b{1},
	$
	since new edges can also be added in steps $l\ge n/2.$
	
	\end{proof}
	
	Notice that in the edge-based random solution generation, 
	for any $k=2,3,\ldots,n,$ any two $k$-exchanges of $\X_t^{[1]}$ are generated with the same 
	probability, since the generation does not require adding the edges in a particular order. 
	Therefore, by Claim \ref{theo:Lemma_RNG_Edge}, for any $k\in O(1),$ any specified $k$-exchange
	of $\X_t^{[1]}$ will be produced with a probability $\Theta(1/n^k).$ Since reproducing $\X_t^{[1]}$
	can be seen as a $0$-exchange of $\X_t^{[1]},$ we can thus derive the following conclusion. 
	\begin{claim}\label{theo:Unchange}
		Let $M=1,\rho=1.$ With probability $\Omega(1),$ the random solution generated by
		Algorithm \ref{alg:TSP_RNG_Edge} has a cost not larger than that of $\X_t^{[1]}.$
	\end{claim}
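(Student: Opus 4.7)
The plan is to reduce the claim directly to Claim \ref{theo:Lemma_RNG_Edge} by viewing the event of reproducing $\X_t^{[1]}$ as a $0$-exchange move (i.e., an exchange of zero edges). Since a $0$-exchange leaves the Hamiltonian cycle unchanged and therefore does not alter its cost, it suffices to show that Algorithm~\ref{alg:TSP_RNG_Edge} reproduces $\X_t^{[1]}$ exactly with probability $\Omega(1).$

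First I would argue that reproducing $\X_t^{[1]}$ amounts to selecting exclusively high edges (i.e., edges of $\X_t^{[1]}$) at every one of the $n$ steps of Algorithm~\ref{alg:TSP_RNG_Edge}. By Claim~\ref{theo:Edge_Crucial}, provided that at most $\sqrt{n}$ low edges have been chosen so far and at least one admissible high edge remains, the probability of picking a high edge at any given step is at least $1-12/n.$ Since in the scenario where all previously chosen edges belong to $\X_t^{[1]}$ the admissible high edges always include the remaining edges of $\X_t^{[1]}$ (until the very last edge, which is forced to close the cycle), the chain of conditional probabilities can be multiplied over the $n$ steps.

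Concretely, the probability of this successful sequence is bounded from below by $(1-12/n)^{n} \ge e^{-O(1)} \in \Omega(1)$ for large enough $n,$ which matches precisely the specialization of the bound \eqref{eq:G1_edge_high2} in the proof of Claim~\ref{theo:Lemma_RNG_Edge} to $k=0,$ where $\mathcal{M}=\{\emptyset\}$ and the product over $i\in\mathbb{M}$ is empty. Hence with probability $\Omega(1),$ Algorithm~\ref{alg:TSP_RNG_Edge} outputs exactly $\X_t^{[1]},$ whose cost equals $f(\X_t^{[1]})$ and is in particular not larger than $f(\X_t^{[1]}).$

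The main (mild) obstacle is a bookkeeping issue: Claim~\ref{theo:Edge_Crucial} only guarantees the lower bound on choosing a high edge at a step where there exists at least one admissible high edge. One must verify that along the successful trajectory where all edges of $\X_t^{[1]}$ are being chosen in turn, this condition is automatically maintained throughout, since a partial prefix of $\X_t^{[1]}$ can always be extended by some remaining edge of $\X_t^{[1]}.$ Once this is observed, the product estimate above goes through and the claim follows.
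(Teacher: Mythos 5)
Your proposal is correct and follows essentially the same route as the paper: the paper also treats reproducing $\X_t^{[1]}$ as the $k=0$ case of Claim~\ref{theo:Lemma_RNG_Edge} (a $0$-exchange produced with probability $\Theta(1)$), which ultimately rests on Claim~\ref{theo:Edge_Crucial} exactly as in your $(1-12/n)^n\in\Omega(1)$ computation. Your version merely spells out the $k=0$ instantiation and the admissibility bookkeeping explicitly, which the paper leaves implicit.
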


	Claim \ref{theo:Edge_Gap} shows that it is unlikely that the random solution generated
	by Algorithm \ref{alg:TSP_RNG_Edge} is ``very" different from the last iteration-best
	solution $\X_t^{[1]}.$ This will be fundamental for deriving the runtime lower bound.
	\begin{claim}\label{theo:Edge_Gap}
		Let $M=1,\rho=1.$ For any $\delta\in (0,1],$ with an overwhelming
		probability $1-e^{-\omega(n^{\min\{\delta, 1/4\}/2})}$, the random solution generated by Algorithm \ref{alg:TSP_RNG_Edge}
		is a $k$-exchange move from $\X_t^{[1]}$ for some
		$k<n^{\delta}.$
	\end{claim}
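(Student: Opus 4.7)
The plan is to bound the number $L_n$ of low edges in the generated tour via Claim~\ref{theo:Edge_Crucial} together with a Chernoff-type tail bound. Set $\delta':=\min\{\delta,1/4\}$. Since a tour containing $k$ low edges differs from $\X_t^{[1]}$ in exactly $k$ edges, and since $n^{\delta'}\le n^{\delta}$, it suffices to show $\Pr[L_n\ge n^{\delta'}]\le e^{-\omega(n^{\delta'/2})}$.

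First, I would verify that Claim~\ref{theo:Edge_Crucial} applies throughout a long initial segment of the generation. A counting argument analogous to the one inside the proof of Claim~\ref{theo:Edge_Crucial} shows that, at step $i$, if $L_{i-1}$ low edges have been chosen, then at most $3L_{i-1}$ high edges are blocked and $i-1-L_{i-1}$ high edges have already been chosen, so at least $n-i+1-2L_{i-1}$ admissible high edges remain. As long as $L_{i-1}<n^{\delta'}$ and $i\le n-2n^{\delta'}$, this is $\ge 1$, ensuring the existence precondition of Claim~\ref{theo:Edge_Crucial}. Because $n^{\delta'}\le n^{1/4}\le\sqrt n$ for large $n$, the safe-regime precondition $l\le\sqrt n$ is also met, so the conditional probability of choosing a low edge at each such step is at most $12/n$.

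Next, I would apply a Chernoff-type tail bound. Let $\tau:=\min\{i:L_i\ge\lceil n^{\delta'}\rceil\}$. Coupling the low-edge indicators inside the safe region with independent $\mathrm{Bernoulli}(12/n)$ trials $Y_1,\ldots,Y_n$ and using the standard estimate $\Pr[\sum_i Y_i\ge k]\le\binom{n}{k}(12/n)^k\le(12e/k)^k$ with $k=\lceil n^{\delta'}/2\rceil$, one obtains
\[
\Pr\!\bigl[L_{\min\{\tau,\,n-2n^{\delta'}\}}\ge\lceil n^{\delta'}/2\rceil\bigr]\le\bigl(24e/n^{\delta'}\bigr)^{\lceil n^{\delta'}/2\rceil}=e^{-\Omega(n^{\delta'}\ln n)},
\]
which is $e^{-\omega(n^{\delta'/2})}$ since $n^{\delta'}\ln n\in\omega(n^{\delta'/2})$. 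This controls the number of low edges in the first $n-2n^{\delta'}$ steps.

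The main obstacle will be handling the tail $i>n-2n^{\delta'}$ of length $2n^{\delta'}$, where admissibility of high edges may fail and Claim~\ref{theo:Edge_Crucial} no longer applies directly. I plan to resolve this either by (a) extending the counting argument of Claim~\ref{theo:Edge_Crucial} with slightly weaker constants throughout the tail and applying Chernoff anew to that stretch, or by (b) arguing structurally that the number of ``forced'' low edges (those chosen at steps with no admissible high edge) is bounded by a constant multiple of $L_{n-2n^{\delta'}}$, so that conditional on the bulk estimate the tail contributes fewer than $n^{\delta'}/2$ additional low edges with overwhelming probability. Combining the bulk and tail estimates will then yield the claimed bound $\Pr[L_n\ge n^{\delta'}]\le e^{-\omega(n^{\delta'/2})}$.
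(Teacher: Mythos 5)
Your overall strategy is the same as the paper's: bound the number of low edges chosen during the run via Claim~\ref{theo:Edge_Crucial} and a binomial tail estimate, and note that $k$ low edges means a $k$-exchange of $\X_t^{[1]}$. However, the tail issue you flag as ``the main obstacle'' is a genuine gap, and with your specific parameters it cannot be patched in either of the two ways you suggest. The difficulty is that the two requirements are in tension: to keep Claim~\ref{theo:Edge_Crucial} applicable at step $m$ you need at least one admissible high edge, and the counting argument only guarantees roughly $n-m-3k$ of them when $k$ low edges have been chosen so far; so if you budget $k\approx n^{\delta'}/2$ low edges in the bulk, the bulk must stop at least $\tfrac{3}{2}n^{\delta'}$ steps before the end. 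But a tail of length $\tfrac{3}{2}n^{\delta'}$ (or your $2n^{\delta'}$) is already longer than the target $n^{\delta'}$, so it cannot be conceded wholesale, and it genuinely cannot be controlled by Claim~\ref{theo:Edge_Crucial} either --- near the very end of the construction there may be no admissible high edge at all (e.g.\ the closing edge of the cycle can be forced to be low), so option (a) fails, and option (b) is an unproved structural assertion.

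The paper escapes this by budgeting only $n^{\gamma/2}$ (not $n^{\gamma}/2$) low edges in the bulk, where $\gamma=\min\{\delta,1/4\}$, and cutting the bulk at $T=n-\tfrac{3}{4}n^{\gamma}$. Then at every step $m\le T$ with at most $n^{\gamma/2}$ low edges chosen there remain at least $\tfrac{3}{4}n^{\gamma}-3n^{\gamma/2}\ge 3$ admissible high edges, so Claim~\ref{theo:Edge_Crucial} applies throughout the bulk; and the tail of length $\tfrac{3}{4}n^{\gamma}$ is simply conceded entirely as low edges, since $n^{\gamma/2}+\tfrac{3}{4}n^{\gamma}<n^{\gamma}$. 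The union bound over the at most $\binom{T}{n^{\gamma/2}+1}$ placements of the low-edge steps, each contributing at most $(12/n)^{n^{\gamma/2}+1}$, gives exactly the $e^{-\omega(n^{\gamma/2})}$ failure probability in the statement --- note that the claimed exponent is $n^{\gamma/2}$, not $n^{\gamma}\ln n$, precisely because the bulk budget is $n^{\gamma/2}$. To repair your argument, replace your bulk target $\lceil n^{\delta'}/2\rceil$ by $n^{\delta'/2}$ and your cutoff $n-2n^{\delta'}$ by $n-\tfrac{3}{4}n^{\delta'}$; your Chernoff/union-bound computation then goes through and the tail needs no analysis at all.
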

	\begin{proof}
		Let $\delta\in (0,1]$ be arbitrarily fixed, and put $\gamma =\min\{\delta, 1/4\}.$ 
		To prove the claim, we just need to show that with an overwhelming probability,
		the random solution generated by Algorithm \ref{alg:TSP_RNG_Edge} is a $k$-exchange
		of $\X_t^{[1]}$ for some $k\le n^{\gamma}\le n^{1/4}.$ This is again implied by the fact
		that with an overwhelming probability, at most $n^{\gamma/2}$ low edges are chosen within the first $T:=n-\frac{3n^{\gamma}}{4}$ steps in Algorithm \ref{alg:TSP_RNG_Edge}, since
		the best case $n^{\gamma/2}+\frac{3n^{\gamma}}{4}$ is still smaller than $n^{\gamma}.$
		
		
		By Claim \ref{theo:Edge_Crucial}, for any $k\le n^{\gamma/2}$ and any
		$m\le T,$ Algorithm \ref{alg:TSP_RNG_Edge} chooses high edges with a probability 
		at least $1-12/n$ at step $m$ if at most $k$ edges have been chosen before step $m,$
		since there exist at least $n-m-3k\ge \frac{3n^\gamma}{4}-3n^{\gamma/2}\ge 3$
		 admissible high edges at step $m.$ 
		
	    Let $P$ denote the probability of the random event that at most $n^{\gamma/2}$  low edges
	    are chosen within $T$ steps, and $Q$ the probability of the random event that  at least $n^{\gamma/2}+1$ low edges
	    are chosen within the same $T$ steps. Then $P=1-Q.$ We shall bound $Q$ from above, which will give a
	    lower bound for $P.$
	    
	    Let $\EE$ be the random event that at least $n^{\gamma/2}+1$ low edges are chosen within
	    $T$ steps. Then $Q=\P[\EE].$ For each $l=1,\ldots,n^{\gamma/2}+1,$ we define a random 
	    variable $v_l$ denoting the first step $m\le T$ such that $l$ low edges are chosen within
	    $m$ steps. Obviously, $\EE$ implies the random event $\EE_1$ that $v_1<v_2<\cdots<v_{n^{\gamma/2}+1}\le T.$ Thus, $Q\le \P[\EE_1],$ and $P\ge 1-\P[\EE_1].$ 
	    
	    Observe that
	    \begin{equation*}
	    	\P[\EE_1]=\sum_{a_1<a_2<\cdots<a_{n^{\gamma/2}+1}\le T}\P[v_1=a_1,\ldots,v_{n^{\gamma/2}+1}=a_{n^{\gamma/2}+1}],
	    \end{equation*}
	    and $v_1=a_1,\ldots,v_{n^{\gamma/2}+1}=a_{n^{\gamma/2}+1}$ is equivalent
	    to the random event that before step $a_1$ only high edges are chosen, that
	    at any step between $a_l$ and $a_{l+1}$ only high edges are chosen for any $l$ with $1\le l \le n^{\gamma/2},$ and that at steps $a_1, \ldots, a_{n^{\gamma/2}+1}$ only low edges are chosen. 
	    Thus, we have by Claim \ref{theo:Edge_Crucial}
	    that
	    \[
	    \P[v_1=a_1,\ldots,v_{n^{\gamma/2}+1}=a_{n^{\gamma/2}+1}]\le \big(\frac{12}{n}\big)^{n^{\gamma/2}+1},
	    \]
	    since at each step
	    $a_l,$ there exists at least one admissible high edge and
	    we do not care about what happens after step $v_{n^{\gamma/2}+1}.$
	    
	    There are at most $\binom{T}{n^{\gamma/2}+1}$ different combinations for $a_1<a_2<\cdots<a_{n^{\gamma/2}+1}.$ Therefore,
	    $
	    P \ge 1-\P[\EE_1]\ge 1-\binom{T}{n^{\gamma/2}+1}\big(\frac{12}{n}\big)^{n^{\gamma/2}+1}.
	    $
	    
	    By Stirling's formula, and observing that $n^{\gamma/1}+1\in o(T), T\in \Theta(n),$ we have
	    $
	    \binom{T}{n^{\gamma/2}+1}\big(\frac{12}{n}\big)^{n^{\gamma/2}+1}
	    =e^{-\omega(n^{\gamma/2})}.
	   $
	    Hence, $P\ge 1-e^{-\omega(n^{\gamma/2})}$ is overwhelmingly large.
	\end{proof}
}

\section{Main results}
\label{sec:RunTime_PCE}
\rightnote{Starting here, you still did not observe many of my English correction and repeated old mistakes. I changed them again.\\
	\Wu{Thank you and sorry for that! But I used the sources you sent me last time.} } 
We shall now analyze the stochastic runtime of our two different random solution generation
methods for two classes of TSP instances that have been well studied in 
the literature.
\subsection{Stochastic runtime analysis for simple instances}

We first consider a class of simple TSP instances that is defined by the following distance function $d: E \rightarrow \R$ on a graph with $n$ vertices.\begin{equation}
\label{eq:G1_def}
d(\{i,j\})=
\begin{cases}
1&\text{if } \{i,j\}=\{i,i+1\}\text{ for each } i=1,2,\ldots,n-1,\\
1&\text{if } \{i,j\}=\{n,1\},\\
n&\text{otherwise}.
\end{cases}
\end{equation}
Obviously, TSP instances with this distance function have a unique optimal solution $s^*=(1,2,\ldots,n)$ \WWu{(in the sense of the underlying Hamiltonian cycle),} 
and $s^*$ has a cost of $n$.
The cost of an arbitrary feasible solution $s$ equals $k+(n-k)\cdot n,$ 
where $k$ is the number of edges $\in s$ that are also in $s^*.$ We shall refer to these instances as  $G_1$  in
the sequel.

The class $G_1$ has been used in \citep{Zhou2009Runtime} and \citep{K2012Theoretical}
for analyzing the expected runtime of variants of $\mathcal{MMAS}.$  Zhou \citep{Zhou2009Runtime}
proved that the $(1+1)$ MMAA algorithm has an expected runtime of $O(n^6+\frac{n\ln n}{\rho})$ on $G_1$
in the case of non-visibility (i.e., without the greedy distance information in the sampling), and has an expected runtime of $O(n^5 + \frac{n\ln n}{\rho})$ in the case of visibility (i.e., with considering the greedy distance information in the sampling).
K{\"o}tzing et al \citep{K2012Theoretical}
continued the study in \citep{Zhou2009Runtime}. They investigated the expected runtime of $(1+1)$ MMAA and its
variant MMAS$^*_{\text{Arb}}$ on $G_1$ and other TSP instances on which both $(1+1)$ MMAA 
and MMAS$^*_{\text{Arb}}$ have exponential expected runtime. MMAS$^*_{\text{Arb}}$ {\em differs } with
$(1+1)$ MMAA only in the random solution generation. MMAS$^*_{\text{Arb}}$ uses Algorithm \ref{alg:TSP_RNG_Edge} as
its random solution generation method, while $(1+1)$ MMAA used Algorithm~\ref{alg:TSP_RNG_Vertex}.
K{\"o}tzing et al \citep{K2012Theoretical} proved that MMAS$^*_{\text{Arb}}$
has an expected runtime of $O(n^3\ln n + \frac{n\ln n}{\rho})$ on $G_1.$


Theorem \ref{theo:stochastic_runtime_G1_1} shows a stochastic runtime
of $O(n^{6+\epsilon})$ for the CE variant with the add-on, i.e., Algorithm \ref{alg:ce} with max-min calibration
\eqref{eq:adjustment}, the vertex-based random solution generation,
and a stochastic runtime of $O(n^{4+\epsilon})$ for the edge-based random solution generation. These results
are comparable with the above known expected runtimes.
Although
we are not able to get strictly superior runtimes,
our results are actually stronger and more informative.
\begin{theorem}[Stochastic runtime of Algorithm \ref{alg:ce} with max-min calibration on $G_1$]
\label{theo:stochastic_runtime_G1_1}
Assume that we set $M=1$, $\rho=1$, and use Algorithm \ref{alg:ce} with the max-min calibration \eqref{eq:adjustment} for the values $\pi_{\min}=\frac{1}{n(n-2)}, \pi_{\max}=1-\frac{1}{n}$.
\rightnote{Giving the definition of min-max calibration again is not necessary. I commented this out\\
	\Wu{Thank you!}} 
%
Then
\begin{itemize}
\item[a)] if we use the vertex-based random solution generation method (Algorithm \ref{alg:TSP_RNG_Vertex}), 
and take a sample size $N\in\Omega\b{n^{5+\epsilon}}$ for any constant $\epsilon \in (0,1),$ then with a probability
at least $1-e^{-\Omega\b{N/n^{5}}}$ the optimal solution $s^*$ can be found within $n$ iterations;
\item[b)] if we use the edge-based random solution generation method (Algorithm \ref{alg:TSP_RNG_Edge}), and take
a sample size $N\in\Omega\b{n^{3+\epsilon}}$ for a constant $\epsilon\in (0,1),$ then with a probability at least
$1-e^{-\Omega\b{N/n^{3}}},$ the optimal solution can be found within \WWu{$n$ iterations.}
\end{itemize}
\end{theorem}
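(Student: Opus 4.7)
The plan is to show that in each iteration, the iteration-best solution $\X_t^{[1]}$ strictly improves with overwhelming probability, so that within $n$ iterations the optimum is reached. On $G_1$, a feasible solution $s$ with exactly $k$ edges from $s^*$ has cost $k+(n-k)n$; hence the cost is strictly monotone in $k$, and \emph{strict improvement} is equivalent to \emph{gaining at least one more $s^*$-edge}. Since $k\in\{0,1,\ldots,n\}$, a chain of $n$ consecutive strict improvements starting from any initial iteration-best must reach $s^*$. A standard combinatorial observation used in \citep{Zhou2009Runtime} and \citep{K2012Theoretical} states that every non-optimal Hamiltonian cycle on a $G_1$-instance admits either an improving $2$-exchange or an improving $3$-exchange; this will be the basis for the one-step improvement argument.

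For part (a), fix an iteration $t$ with $\X_t^{[1]}\ne s^*$, and let $\sigma_t$ be a designated improving $k$-exchange of $\X_t^{[1]}$ with $k\in\{2,3\}$. By Claim \ref{theo:Lemma_RNG_Vertex} applied with $\rho=1$ and $M=1$, each of the $N$ independent draws from $\PPi_{t+1}=\W_t$ in iteration $t+1$ produces $\sigma_t$ with probability at least $c/n^{2k-1}\ge c/n^{5}$ for some constant $c>0$. Hence the probability that none of the $N$ draws realizes $\sigma_t$ is at most $(1-c/n^{5})^{N}\le e^{-cN/n^{5}}$. Whenever $\sigma_t$ is realized by some draw, the iteration-best $\X_{t+1}^{[1]}$ has cost no larger than that of $\sigma_t$, which is strictly smaller than $f(\X_t^{[1]})$, so $\X_{t+1}^{[1]}$ has strictly more $s^*$-edges than $\X_t^{[1]}$. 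Taking a union bound over at most $n$ iterations yields a total failure probability of at most $n\cdot e^{-cN/n^{5}}$; since $N\in\Omega(n^{5+\epsilon})$ implies $N/n^{5}\in\Omega(n^{\epsilon})$ and $\ln n\in o(n^{\epsilon})$, this is still $e^{-\Omega(N/n^{5})}$, and the resulting chain of $n$ strict improvements reaches $s^*$.

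Part (b) follows by exactly the same argument, but the per-draw lower bound now comes from Claim \ref{theo:Lemma_RNG_Edge} together with the observation (stated immediately after it in the excerpt) that any specified $k$-exchange of $\X_t^{[1]}$ is generated by Algorithm \ref{alg:TSP_RNG_Edge} with probability $\Theta(1/n^{k})$ for $k\in O(1)$. For the designated improving $2$- or $3$-exchange this yields a per-draw probability $\Omega(1/n^{3})$, a per-iteration missing probability of $(1-\Omega(n^{-3}))^{N}\le e^{-\Omega(N/n^{3})}$, and after a union bound over $n$ iterations, an overall failure probability of $e^{-\Omega(N/n^{3})}$, since $N/n^{3}\in\Omega(n^{\epsilon})$ dominates $\ln n$. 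The main technical point to be careful about is the coupling between one-step improvement and the existence of a \emph{small-$k$} improving exchange: we must invoke the lemma ``every non-optimal $G_1$-tour admits a $2$- or $3$-opt move'' so that we can always point to a \emph{specific} exchange whose probability is controlled by the cited Claims, rather than relying on an arbitrary improving move whose probability could in principle be much smaller.
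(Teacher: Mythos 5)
Your proposal is correct and follows essentially the same route as the paper's own proof: both reduce one-step strict improvement to realizing a designated $2$- or $3$-opt move (which exists for every non-optimal $G_1$-tour by the Zhou-type argument), bound its per-draw probability via Claim~\ref{theo:Lemma_RNG_Vertex} (resp.\ Claim~\ref{theo:Lemma_RNG_Edge}) by $\Omega(n^{-5})$ (resp.\ $\Omega(n^{-3})$), and then combine the $n$ per-iteration success probabilities, using that cost on $G_1$ is strictly monotone in the number of $s^*$-edges so $n$ strict improvements force optimality. Your explicit union bound $n\cdot e^{-cN/n^5}=e^{-\Omega(N/n^5)}$ is just a slightly more careful rendering of the product bound the paper uses.
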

\begin{proof} 
	\Wurevision{We prove the Theorem by showing that the probability of the
		random event that before the optimal solution is met, the number
		of edges shared by the iteration-best and optimal solution strictly increases is overwhelmingly large. This implies that the optimal solution is found within
		$n$ iterations, since the optimal solution has 
		only $n$ edges. Furthermore, the runtimes presented in the Theorem hold. We only discuss the case of $a),$ $b)$ follows with an almost 
		identical argument.
		
		By \citep{Zhou2009Runtime} (see also proof of Theorem \ref{theo:G_1_Small_N}), if $\X_t^{[1]}$ is not optimal,
		it has at least either a $2$-opt move or a $3$-opt move.
		Note that for $G_1,$ any $k$-opt move of the iteration-best
		solution increases
		the number of its edges shared with the optimal solution. 
		By Claim \ref{theo:Lemma_RNG_Vertex}, any $2$-opt move
		is generated by Algorithm \ref{alg:TSP_RNG_Vertex} with 
		probability $\Omega(n^{-3}),$ and any $3$-opt move is generated
		with probability $\Omega(n^{-5}).$ Thus, if $\X_t^{[1]}$ is not
		optimal, $\X_{t+1}^{[1]}$ shares more edges with the optimal solution
		than $\X_t^{[1]}$ with a probability at least \SecondRevision{
		$1-(1-n^{-5})^{N}=1-e^{-\Omega(N/n^{5})}\in 1-e^{-\Omega(n^{\epsilon})}$} if $N\in \Omega(n^{5+\epsilon})$
		for any $\epsilon>0.$ Thus, this repeatedly happens within polynomially many
		number of iterations with overwhelming probability \SecondRevision{$1-e^{-\Omega(N/n^5)}.$} This completes the proof.
		}
\end{proof} 

The stochastic runtimes of Theorem \ref{theo:stochastic_runtime_G1_1} are derived for a relatively large sample size,
namely $N=\Omega(n^{5+\epsilon})$ and $N=\Omega(n^{3+\epsilon}).$ Actually, Theorem~\ref{theo:stochastic_runtime_G1_1}
may still hold for a smaller sample size. Theorem \ref{theo:G_1_Small_N} partially asserts this.
It states that the total number of iterations required
to reach the optimal solution for both generation schemes may increase considerably if a smaller sample size is used. However, the stochastic runtime does not increase. Interestingly, one can obtain a smaller stochastic runtime with a small sample size
for the edge-based random solution generation.
\begin{theorem}[Stochastic runtime of Algorithm \ref{alg:ce} on $G_1$ for a small sample size]
\label{theo:G_1_Small_N}
Assume the conditions in Theorem \ref{theo:stochastic_runtime_G1_1}, \Wu{but set} $N\in\Omega\b{n^{\epsilon}}$ for any
$\epsilon\in (0,1)$. Then:
\begin{itemize}
\item[a)] For the vertex-based random solution generation, 
Algorithm \ref{alg:ce} finds the optimal solution $s^*$ within 
$n^{6}$  iterations with a probability of $1-e^{-\Omega\b{N}}$. 
\item[b)] For the edge-based random solution generation, Algorithm \ref{alg:ce} finds the optimal solution $s^*$ within $n^{3}\ln n$
 iterations
with a probability of $1-e^{-\Omega\b{N}}$. 
\end{itemize}
\end{theorem}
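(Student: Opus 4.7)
The plan is a fitness-level analysis combined with phase-budgeting. Define the potential $k(s):=|s\cap s^*|\in\{0,1,\ldots,n\}$. Because the distance on $G_1$ takes only the two values $\{1,n\}$, the cost satisfies $f(s)=k(s)+\bigl(n-k(s)\bigr)n$, so a strict improvement of the iteration-best cost is equivalent to a strict increase of $k(\X_t^{[1]})$, and the algorithm finds $s^*$ as soon as $k(\X_t^{[1]})=n$. Hence at most $n$ strict improvements of the iteration-best are needed.

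I first argue that, with probability $1-e^{-\Omega(N)}$, the iteration-best cost is non-increasing throughout the entire run. By Claim~\ref{theo:Vertex_nonincreasing} for part~$a)$ and Claim~\ref{theo:Unchange} for part~$b)$, a single draw from $\PPi_{t+1}$ produces a solution with cost $\le f(\X_t^{[1]})$ with probability $\Omega(1)$; hence among the $N$ independent draws in iteration $t+1$, at least one such solution appears with probability $1-e^{-\Omega(N)}$. A union bound over the $T=n^{6}$ (resp.\ $T=n^{3}\ln n$) iterations keeps the total failure probability at $e^{-\Omega(N)}$, since $N\in\Omega(n^{\epsilon})\subseteq\omega(\ln n)$ absorbs the polynomial factor $T$.

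Next I bound the per-iteration probability of a strict improvement at a given fitness level. For part~$a)$, Zhou's structural lemma for $G_1$ \citep{Zhou2009Runtime} guarantees that whenever $k(\X_t^{[1]})<n$ there exists at least one 2-opt or 3-opt move of $\X_t^{[1]}$; by Claim~\ref{theo:Lemma_RNG_Vertex} this move is produced in a single sample with probability $\Omega(1/n^{5})$, so the per-iteration improvement probability satisfies $p\ge\Omega(N/n^{5})$, uniformly in the current level. For part~$b)$, I will prove the sharper level-dependent bound that at level $k$ there are $\Omega(n-k)$ distinct improving 2- or 3-exchanges of $\X_t^{[1]}$, obtained by associating with each of the $n-k$ defect edges a specific ``repair'' 2-opt (or, failing that, 3-opt) that creates a correct edge. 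Combining this with the consequence of Claim~\ref{theo:Lemma_RNG_Edge} that every specific $j$-exchange with $j\in O(1)$ is produced with probability $\Theta(1/n^{j})$ per sample, the per-iteration improvement probability at level $k$ is at least $p_k\ge c(n-k)N/n^{3}$ for some constant $c>0$.

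Finally, I budget iterations per fitness level and apply the geometric tail bound. For part~$a)$, give each of the at most $n$ levels a budget of $n^{5}$ iterations: the probability of failing to improve at one level within this budget is at most $\bigl(1-\Omega(N/n^{5})\bigr)^{n^{5}}\le e^{-\Omega(N)}$, and a union bound over the $n$ levels still gives $e^{-\Omega(N)}$ (using $N\in\omega(\ln n)$); the total budget is $n\cdot n^{5}=n^{6}$. For part~$b)$, give level $k$ the budget $T_k:=c'n^{3}/(n-k)$: the failure probability per level is $(1-p_k)^{T_k}\le e^{-cc'N}=e^{-\Omega(N)}$, and the total budget telescopes via the harmonic series to $c'n^{3}\sum_{k=0}^{n-1}1/(n-k)=O(n^{3}\ln n)$; a union bound over the $n$ levels again yields $e^{-\Omega(N)}$. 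Intersecting with the monotonicity event of the second paragraph then produces the claimed runtimes. The principal technical obstacle is the level-dependent counting step in part~$b)$: establishing that $\Omega(n-k)$ improving low-order exchanges exist at level $k$ is what produces the harmonic-sum factor and thereby the $\ln n$ in the bound, whereas the worst-case bound $p_k\ge\Omega(N/n^{3})$ alone would give only $O(n^{4}/N)$ iterations and lose the $\ln n$ factor.
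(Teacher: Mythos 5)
Your proposal is correct and follows essentially the same route as the paper's proof: monotonicity of the iteration-best via Claims~\ref{theo:Vertex_nonincreasing}/\ref{theo:Unchange}, a uniform $\Omega(1/n^5)$ improvement probability with $n$ phases of length $n^5$ for part~$a)$, and for part~$b)$ the level-dependent bound $\Omega\bigl((n-k)/n^3\bigr)$ obtained by associating a repairing $2$- or $3$-opt move to each missing edge of $s^*$, with harmonic phase budgeting yielding the $n^3\ln n$ bound. The only cosmetic difference is that you count shared edges where the paper counts missing edges, and you sum over disjoint exchange events directly where the paper routes the same count through an ordering argument.
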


\begin{proof}[Proof of Theorem \ref{theo:G_1_Small_N}] 
	\Wurevision{The proof shares a similar idea with that of Theorem \ref{theo:stochastic_runtime_G1_1}. However, we consider here the random event  that 
	the	number of edges shared by the iteration-best and optimal
		solution does not decrease and strictly increases enough times within a specified polynomial number of iterations.}
	
	\Wurevision{For $a),$ we shall consider the first $n^6$ iterations. By Claim \ref{theo:Vertex_nonincreasing}, the number of edges shared by the iteration-best and optimal solution
		does not decrease with a probability $1-\big(1-\Omega(1)\big)^N=1-e^{-\Omega(N)}$
		($N\in \Omega(n^{\epsilon})$). Therefore, the
		number does not decrease \SecondRevision{within the first $n^6$ iterations} with probability
		\SecondRevision{$\prod_{t=0}^{n^6}(1-e^{-\Omega(N)})=1-e^{-\Omega(N)}.$} By Claim \ref{theo:Lemma_RNG_Vertex}, for every
		consecutive $n^5$ iterations, if the starting iteration-best solution
		is not optimal, then with probability $1-\big((1-n^{-5})^{N}\big)^{n^5}=1-e^{-\Omega(N)},$ the number
		will strictly increase at least once within these $n^5$ iterations. Therefore, with
		overwhelming probability $1-e^{-\Omega(N)}$, the optimal solution will be
		reached within the period of the first $n^6$ iterations, since there are $n$ many
		consecutive $n^5$ iterations within that period.  
		}

$b)$ can be proved by a similar way with $a).$ \Wurevision{We shall consider
	the first $n^3\ln n$ iterations.}
\Wurevision{By Claim \ref{theo:Lemma_RNG_Edge}, with probability 
	$1-(1-\Omega(1))^N=1-e^{-\Omega(N)},$ the number of shared edges does not
	decrease in consecutive two iterations. To complete the proof, we need
	an extra fact on $2,3$-exchanges.}


K{\"o}tzing et al \citep{K2012Theoretical} showed for MMAS$_{Arb}^*$ that if
the best solution $s_t^*$ found so far has $n\!-\!k$ edges from \Wurevision{the optimal solution} $s^*$, then the probability of the event that $s_{t+1}^*$ has at least 
$n\!-\!k+1$ edges from $s^*$, is in $\Omega\b{k/n^3}.$ We
\rightnote{simple or simpler proof?\\
	\Wu{simpler, a more easier proof}} 
shall use a different but \Wu{simpler} proof to show that this also holds in our case of iteration-best reinforcement. \Wurevision{And with this fact, if $|\X_t^{[1]}\bigcap s^*|=n\!-\!k$ for some $0<k\le n,$ then with probability $1-((1-k\cdot n^{-3})^{N})^{n^3/k}=1-e^{-\Omega(N)},$ the number of edges shared by the iteration-best solution
	and $s^*$ will strictly increase at least once within the period $[t, t+n^3/k].$
	This implies that $s^*$ is sampled within the first $n^3\ln n$ iterations with
	overwhelming probability $1-e^{-\Omega(N)}$, since $n^3\ln n$ iterations can be
	partitioned into $n$ many consecutive phases $[0, n^2), [n^2, n^2+n^3/(n-1)), [n^2+n^3/(n-1), n^2+n^3/(n-1)+n^3/(n-2)), \ldots.$ We now prove that fact.
	}

We first show that when $|\X_t^{[1]}\cap s^*|=n-k$ with $k>0,$ then there exists a
$2$-opt move or a $3$-opt move for $\X_t^{[1]}$ \Wurevision{(see also \citep{Zhou2009Runtime}
	for a similar proof).}
Assume that $\X_t^{[1]}$ contains exactly $n-k$ edges
from $s^*$ for some integer $k>0.$ 
Let \Wu{$e^*=\{i, i+1\}$} be an edge in $s^*$ but
{\em not} in $\X_t^{[1]}.$ Note that each node of the graph
is exactly incident to two edges of $s^*$ and $\X_t^{[1]}$, respectively.
Therefore there exists an edge $e_0\in \X_t^{[1]}$ incident
to $i,$  an edge $e'_0\in \X_t^{[1]}$ incident to $i\!+\!1,$
and $e_0, e'_0$ are not in $s^*$. Figure~\ref{fig:Demo_NewEdgeAdding} shows an example, where 
$e_0$ is either $\{i,u\}$ or $\{i,v\}$, and 
$e'_0$ is either $\{i\!+\!1,w\}$ or $\{i\!+\!1, y\}.$
\begin{figure}[!htb]
	\centering
	\includegraphics[scale=0.15]{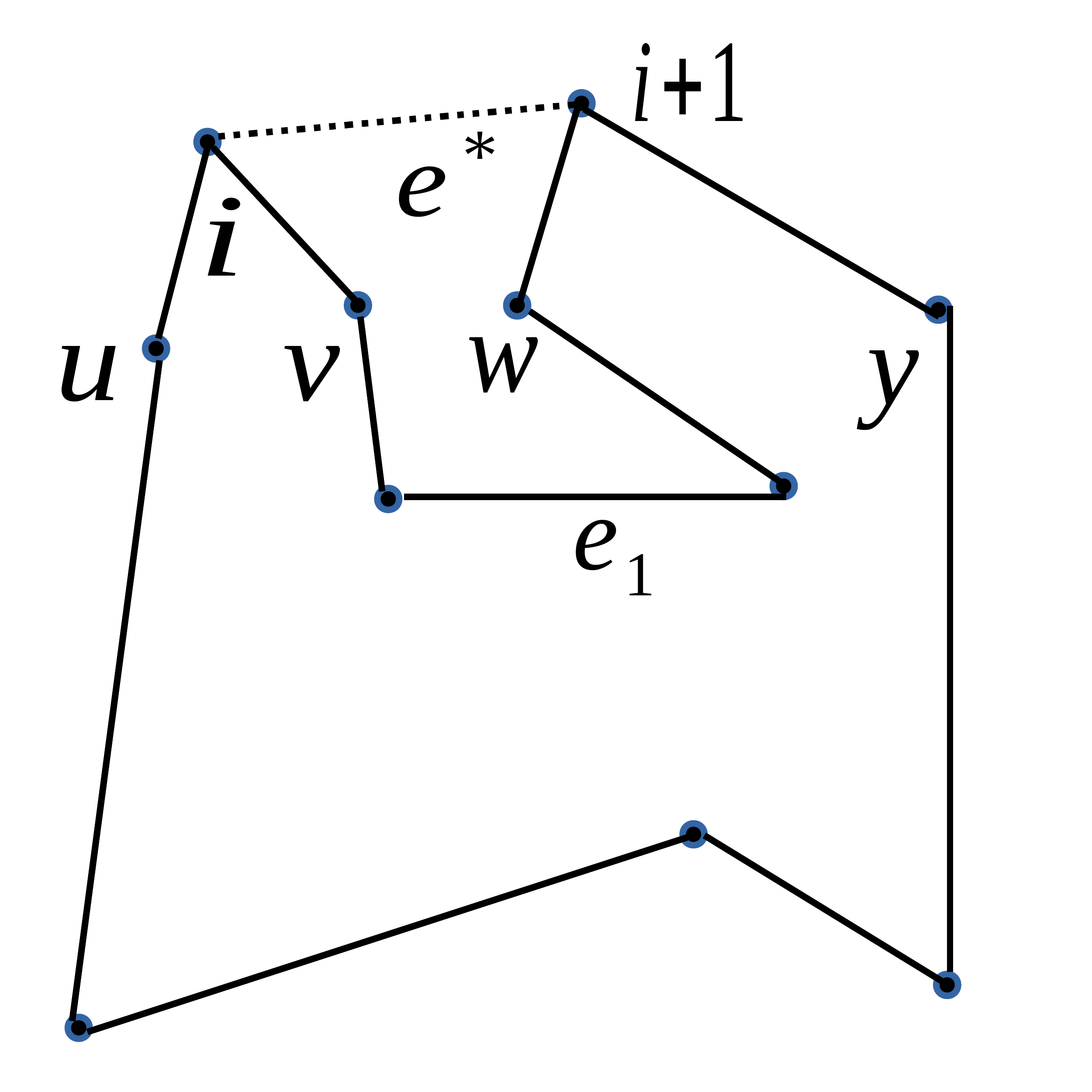}
	\caption{Demonstration of adding a new edge. The solid edges represent the cycle $\X_t^{[1]}$.}
	\label{fig:Demo_NewEdgeAdding}
\end{figure}
If $e_0=\{i,u\}$ and $e'_0=\{i\!+\!1,w\}$
or if $e_0=\{i,v\}$ and $e'_0=\{i\!+\!1,y\},$ then there exists a $2$-opt move of
\rightnote{made it clear that the improvement depends on the special distance function and added the second edge in the 2-opt move\\
	\Wu{thank you!}} 
$\X_t^{[1]}$ which removes $e_0, e'_0$ of distance $n$ and 
adds $e^*$ and another edge (either $\{u,w\}$ or $\{v,y\}$) of distance at most $n+1$ together. 
If
$e_0=\{i,u\}, e'_0=\{i\!+\!1, y\}$ or
$e_0=\{i,v\}, e'_0=\{i\!+\!1,w\},$ there is 
a $3$-opt move of $\X_t^{[1]}$ which
removes $e_0, e'_0,$ 
and an edge $e_1 \notin s^*,$ and adds edge $e^*$ and another two edges, this replacing 3 edges of distance $n$ by 3 edges of distance at most $2n\!+\!1$ together. Here,
observe the fact that adding $e^*$ to $\X_t^{[1]}$ and removing $e_0,e'_0$
from $\X_t^{[1]}$ results in graph containing a cycle, and
there must be an edge $e_1\in \X_t^{[1]}$ on that cycle that does not belong to $s^*$. We choose this edge as the edge $e_1$. Therefore, for each $e^*$ of the $k$ remaining  edges in $s^*$ that are not in $\X_t^{[1]}$, 
there exists a $2$-opt or $3$-opt move of $\X_t^{[1]}$ that
adds $e^*.$ 

\Wurevision{By Claim \ref{theo:Lemma_RNG_Edge},} 
for any $l\in O(1),$ the probability of producing an $l$-exchange of the iteration-best solution $\X_t^{[1]}$ \WWu{by Algorithm \ref{alg:TSP_RNG_Edge}}
in iteration $t+1$ is $\Omega(1).$ Since any two $l$-exchanges are produced with the
same probability, the probability of producing a {\em particular} $l$-exchange
in iteration $t+1$ is $\Omega(1/n^l).$  
As a result, Algorithm \ref{alg:TSP_RNG_Edge} produces for each edge $e^*\in s^*-\X_t^{[1]}$ a $2$-opt
or $3$-opt move of $\X_t^{[1]}$ 
that \Wu{adds} 
edge $e^*$ with probability at least
$\Omega(1/n^3)$.

Note that the generation of
a $2$-exchange (or a $3$-exchange) with two
newly added edges $e_2,e_3$ by Algorithm \ref{alg:TSP_RNG_Edge} includes two mutually 
exclusive cases ($3!$ cases
for a $3$-exchange): $e_2$ is chosen before $e_3,$ or
$e_3$ is chosen before $e_2.$ It is not difficult to
see that these two cases ($3!$ cases
for $3$-exchange) have the same probability. Therefore, 
the probability of the {\em event} that Algorithm~\ref{alg:TSP_RNG_Edge}
generates a $2$-opt or $3$-opt
move of $\X_t^{[1]}$ that $e^*$ as one of the
newly added edges and selects $e^*$ before the other newly added
edges, is bounded from below by $\Omega(1/3!n^3)=\Omega(1/n^3).$ Since $\X_t^{[1]}$ has
$k$ such $e^*$ and the corresponding $k$ events are also mutually exclusive, we obtain that the probability that
$\X_{t+1}^{[1]}$ has  more edges from $s^*$
than $\X_t^{[1]}$ if $\X_t^{[1]}$ has exactly $n-k$ edges
from $s^*$ for a constant $k>0$ is $\Omega(k/n^3)$ 

\end{proof}


\Wurevision{Corollary \ref{theo:G1_Perfect} further improves the stochastic
	runtime for an even smaller sample size. It can be proved by an argument similar to
	the proof of Theorem \ref{theo:G1_Perfect}, where we observe that \SecondRevision{$(1-(1-p(n))^{\omega(\ln n)})^{n^l}=1-n^{-\omega(1)}$ for any constant $l>0$ and probability $p(n)\in \Omega(1),$ and that
		$1-e^{-\omega(\ln n)}=1-n^{-\omega(1)}.$}
\begin{corollary}\label{theo:G1_Perfect}
	Assume the conditions in Theorem \ref{theo:stochastic_runtime_G1_1}, \Wu{but let} $N\in\omega\b{\ln n}$. Then:
	\begin{itemize}
		\item[a)] For the vertex-based random solution generation, 
		Algorithm \ref{alg:ce} finds the optimal solution $s^*$ within 
		$n^{6}$  iterations with a probability of $1-n^{-\omega(1)}$. Particularly,
		if \SecondRevision{$N=(\ln n)^{2},$} the runtime is \SecondRevision{$n^6(\ln n)^{2}$} with 
		probability $1-n^{-\omega(1)}.$
		
		\item[b)] For the edge-based random solution generation, Algorithm \ref{alg:ce} finds the optimal solution $s^*$ within $n^{3}\ln n$
		iterations
		with a probability of $1-n^{-\omega(1)}$. Particularly,
		if \SecondRevision{$N=(\ln n)^{2},$} the runtime is \SecondRevision{$n^3(\ln n)^{3}$} with 
		probability $1-n^{-\omega(1)}.$
	\end{itemize}
\end{corollary}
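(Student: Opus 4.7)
The plan is to mirror the proof of Theorem \ref{theo:G_1_Small_N} step by step, but to replace the concentration estimates valid for $N\in\Omega(n^\epsilon)$ with ones valid for the weaker assumption $N\in\omega(\ln n)$. The two arithmetic facts that drive the whole argument are recorded in the excerpt: first, $(1-\Omega(1))^{\omega(\ln n)}=e^{-\omega(\ln n)}=n^{-\omega(1)}$; and second, if an event of failure probability $n^{-\omega(1)}$ is tested only polynomially many, say $n^\ell$, times, then a union bound still yields $n^\ell\cdot n^{-\omega(1)}=n^{-\omega(1)}$. These two observations are what allows every ``$e^{-\Omega(N)}$'' appearing in the proof of Theorem \ref{theo:G_1_Small_N} to be replaced by ``$n^{-\omega(1)}$''.

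For part $a)$, I would recycle the two events used in the proof of Theorem \ref{theo:G_1_Small_N}$a)$: the event $\EE_1$ that the iteration-best cost is non-increasing over the first $n^6$ iterations, and the event $\EE_2$ that in each of the $n$ disjoint windows of length $n^5$ the iteration-best solution either reaches $s^*$ or picks up an additional edge of $s^*$. By Claim \ref{theo:Vertex_nonincreasing} the per-iteration failure probability for $\EE_1$ is $(1-\Omega(1))^N=n^{-\omega(1)}$, so a union bound over the $n^6$ iterations still gives $\P[\EE_1]\ge 1-n^{-\omega(1)}$. For $\EE_2$, Claim \ref{theo:Lemma_RNG_Vertex} together with the existence of a $2$-opt or $3$-opt move for every non-optimal $\X_t^{[1]}$ (cf.\ the proof of Theorem \ref{theo:G_1_Small_N}$b)$) shows that within one window of $n^5$ iterations the probability of no strict progress is $(1-\Omega(n^{-5}))^{N\cdot n^5}=e^{-\Omega(N)}=n^{-\omega(1)}$; a union bound over the $n$ windows still yields $\P[\EE_2]\ge 1-n^{-\omega(1)}$. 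Combining the two gives the $n^6$-iteration bound, and the concrete runtime $n^6(\ln n)^2$ for $N=(\ln n)^2$ is then merely the product of $N$ and the iteration bound.

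For part $b)$ the plan is identical, with Claim \ref{theo:Unchange} playing the role of Claim \ref{theo:Vertex_nonincreasing}, and with the $\Omega(k/n^3)$ improvement probability from the proof of Theorem \ref{theo:G_1_Small_N}$b)$ replacing the $\Omega(1/n^5)$ bound. Partition the first $n^3\ln n$ iterations into consecutive phases of lengths $n^2, n^3/(n-1), n^3/(n-2), \dots, n^3/1$, one phase for each possible value of $k=n-|\X_t^{[1]}\cap s^*|$. Within the $k$-th phase the probability of failing to gain a further edge of $s^*$ is $(1-\Omega(k/n^3))^{N\cdot n^3/k}=e^{-\Omega(N)}=n^{-\omega(1)}$, and again a union bound over the at most $n$ phases keeps this $n^{-\omega(1)}$. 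The cost-non-increase event is handled exactly as in part $a)$.

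The main obstacle is really only bookkeeping: one has to make sure that every polynomial factor introduced by union bounds (over the $n^6$ or $n^3\ln n$ iterations, over the $n$ windows/phases, and over the $N$ samples inside a single iteration) is absorbed into the superpolynomial slack provided by $\omega(\ln n)$. As long as this is tracked carefully, no new probabilistic ingredient beyond Claims \ref{theo:Lemma_RNG_Vertex}--\ref{theo:Unchange} is needed, and the specialization to $N=(\ln n)^2$ is a direct multiplication of the iteration count by $N$.
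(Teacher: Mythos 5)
Your proposal is correct and follows essentially the same route as the paper, which proves the corollary by rerunning the argument of Theorem \ref{theo:G_1_Small_N} with the observations that $(1-(1-p(n))^{\omega(\ln n)})^{n^{l}}=1-n^{-\omega(1)}$ for $p(n)\in\Omega(1)$ and that $1-e^{-\omega(\ln n)}=1-n^{-\omega(1)}$; your replacement of every $e^{-\Omega(N)}$ tail by $n^{-\omega(1)}$ and the subsequent polynomial union bounds are exactly these observations made explicit. The specialization to $N=(\ln n)^{2}$ by multiplying the iteration count by $N$ is likewise what the paper intends.
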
}

\Wurevision{Theorem \ref{theo:G_1_Small_N} tells that, for any $\epsilon\in (0,1),$ a sample size of $N\in \Theta(n^{\epsilon})$ is already sufficient
	for iteration-best reinforcement to efficiently find an optimal solution of simple TSP instances with an overwhelming probability.
	Corollary \ref{theo:G1_Perfect} further shows that $N\in \omega(\ln n)$ even leads to a better runtime with a slightly smaller but still
	overwhelming probability. 
	Theorem \ref{theo:G_1_VerySmall_N} below shows that with an overwhelming
	probability, the runtime of iteration-best reinforcement will be exponential if $N\in O(\ln n)$, even if the instances are as simple as those in $G_1$.} 

\Wurevision{
\begin{theorem}
	\label{theo:G_1_VerySmall_N}
	Assume the conditions of Theorem \ref{theo:stochastic_runtime_G1_1}, \Wu{but set} $N< \frac{1}{220}\ln n$. Then, with probability  $1-e^{-\Omega\b{n^{1/200}}},$
		 Algorithm \ref{alg:ce} with edge-based solution generation
		  does not find the optimal solution $s^*$ within $e^{\Theta(n^{1/300})}$
		 iterations. 
\end{theorem}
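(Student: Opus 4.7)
The plan is to apply a generalized negative-drift theorem to $X_t := |\X_t^{[1]}\cap s^*|$. Because on $G_1$ the cost is strictly decreasing in the shared-edge count, $\X_{t+1}^{[1]}$ is the sample from $\PPi_{t+1}$ maximizing the number of $s^*$-edges, so $(X_t)$ is a process on $\{0,1,\ldots,n\}$ with $X_t=n$ exactly when the optimum is found. Since $\PPi_0$ is uniform and a random Hamiltonian cycle shares only $O(1)$ edges with $s^*$ in expectation (Poisson approximation), a concentration bound gives $X_0\le n^{1/4}$ with probability $1-n^{-\omega(1)}$, so the starting level is safely below the target boundary $n$.

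First I would supply the exponential-tail jump hypothesis required by the drift theorem. Following the argument inside the proof of Claim~\ref{theo:Edge_Gap}, the probability that a single draw of Algorithm~\ref{alg:TSP_RNG_Edge} uses at least $j$ low edges is at most $\binom{n}{j}(12/n)^{j}\le(12e/n)^{j}$ for $j\le\sqrt{n}$, so a union bound over the $N$ samples of an iteration gives, for every $j\ge1$,
\[
\P\!\big[\,|X_{t+1}-X_t|\ge j\,\big|\,X_t\,\big]\;\le\;N\cdot(12e/n)^{j}\;\le\;r(1+\delta)^{-j}
\]
for suitable constants $r,\delta>0$. This simultaneously yields the drift-theorem tail hypothesis and guarantees $|X_{t+1}-X_t|<n^{1/4}$ throughout the first $T:=e^{\Theta(n^{1/300})}$ iterations with probability $1-e^{-\Omega(n^{1/200})}$.

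Next I would establish negative drift. Write $p_{+}:=\P[X_{t+1}>X_t\mid X_t]$ and $p_{-}:=\P[X_{t+1}<X_t\mid X_t]$. Refining Claim~\ref{theo:Lemma_RNG_Edge}, the per-sample probability of producing a short $k$-exchange that adds a net $s^*$-edge is $O(1/n)$, so $p_{+}=O(N/n)=O(\ln n/n)$ by the union bound. Conversely, for $X_t\ge1$ the per-sample probability of a strict degradation is at least some constant $c'\in(0,1)$, because already the simple $-1$ $2$-exchanges have aggregate per-sample probability $\Omega(1)$; hence $p_{-}\ge(c')^{N}\ge n^{-|\ln c'|/220}$. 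Combining with the $\Theta(1)$ typical jump magnitudes on either side,
\[
\E[X_{t+1}-X_t\mid X_t]\;\le\;O(\ln n/n)-\Omega(n^{-c_1})\;\le\;-\Omega(n^{-c_1})
\]
uniformly for $n^{1/4}\le X_t<n$, where $c_1:=|\ln c'|/220$. The threshold $N<\tfrac{1}{220}\ln n$ is calibrated so that $c_1$ is a positive constant strictly below $1-1/300$.

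Finally, applying the generalized simplified negative drift theorem (Oliveto--Witt, as extended by Rowe--Sudholt to vanishing drift) with the tail bound above and the drift just derived, the first hitting time $T_{\mathrm{opt}}:=\min\{t\ge0:X_t\ge n\}$ satisfies $\P[T_{\mathrm{opt}}\le e^{\kappa n^{1-c_1}}]\le e^{-\Omega(n^{1-c_1})}$ for some $\kappa>0$. Since $1-c_1>1/300$, choosing the hidden constant $C$ in $e^{\Theta(n^{1/300})}$ small enough gives $e^{Cn^{1/300}}\le e^{\kappa n^{1-c_1}}$, and the overall failure probability collapses to $e^{-\Omega(n^{1-c_1})}\le e^{-\Omega(n^{1/200})}$. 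The hard part will be the drift estimate, in particular the constant lower bound $c'$ on per-sample strict degradations together with the matching $O(1/n)$ bound on per-sample $+1$-moves; the numerical threshold $\tfrac{1}{220}$ is chosen so that the resulting $c_1$ keeps $-n^{-c_1}$ dominant over $p_{+}$ while $n^{1-c_1}$ stays comfortably above $n^{1/300}$.
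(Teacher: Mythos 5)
Your route --- casting $X_t=|\X_t^{[1]}\cap s^*|$ as a real-valued process and invoking a negative-drift theorem --- is genuinely different from the paper's proof, which instead anchors the analysis at the first time $T_0$ the level $n-n^{1/4}+n^{1/5}$ is reached, counts ``successful'' versus ``failure'' iterations in a window of length $n^{1/6}$ via Chernoff bounds, and argues that the level must drop back below the anchor and that this repeats exponentially often. Both arguments ultimately rest on the same two estimates: near the optimum a single sample improves the shared-edge count only with probability $O(1/\sqrt{n})$ (so an improving iteration has probability $O(N/\sqrt{n})$), while a single sample strictly degrades it with constant probability $c'\ge e^{-5}$, so that an all-$N$-degrading iteration has probability $(c')^N\ge n^{-5/220}=n^{-1/44}$, which dominates because $N<\frac{1}{220}\ln n$. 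The drift formulation is cleaner and would give a stronger exponent, so as a strategy it is sound.

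There is, however, a genuine error in the drift step: the claimed negative drift ``uniformly for $n^{1/4}\le X_t<n$'' is false at the lower end of that interval. The constant lower bound $c'$ on the per-sample probability of a strict degradation requires that almost all edges of $\X_t^{[1]}$ already lie in $s^*$: a $2$-exchange degrades only if both removed edges are optimal, which happens with probability roughly $(X_t/n)^2$ over the (equiprobable) $2$-exchanges. At $X_t=n^{1/4}$ this is $\Theta(n^{-3/2})$, whence $p_-=(p_-^{(1)})^N\le n^{-\Theta(\ln n)}$ is superpolynomially small, while $p_+$ is inverse-polynomial there (each of the $n-X_t=\Theta(n)$ missing optimal edges is recoverable by a $2$- or $3$-opt move of per-sample probability $\Omega(n^{-3})$, and these events are disjoint). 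So the drift is strictly positive at the bottom of your interval, not $-\Omega(n^{-c_1})$. Similarly, your per-sample bound $p_+^{(1)}=O(1/n)$ is asserted only for ``short'' exchanges; the exchanges of length $1\ll k\le n^{\delta}$ must be controlled by the proportion argument $1-\binom{n-n^{\delta}}{k}/\binom{n}{k}=O(n^{2\delta-1})$, which requires $X_t\ge n-n^{\delta}$ with $\delta<1/4$ and yields only $O(n^{-1/2})$ --- still sufficient against $n^{-1/44}$, but neither $O(1/n)$ nor uniform over $[n^{1/4},n)$. Both defects are repaired by running the drift argument on $[n-n^{1/4},\,n]$ only: the process starts far below this band, your jump bound prevents overshooting it, and an interval of length $n^{1/4}$ against drift $-\Omega(n^{-1/44})$ still gives a hitting-time lower bound $e^{\Omega(n^{1/4-1/44})}\gg e^{\Theta(n^{1/300})}$ with the required failure probability. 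A small slip besides: $\binom{n}{j}(12/n)^j\le(12e/j)^j$, not $(12e/n)^j$; the geometric-tail hypothesis of the drift theorem still holds, but only after absorbing the small-$j$ range into the constant $r$.
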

\begin{proof}
	We prove the Theorem by inspecting the probability of the random event that,
	before the optimal solution is found, the cost of the iteration-best solution 
	$\X_t^{[1]}$ will \SecondRevision{oscillate} for exponentially many iterations with an
	overwhelming probability. We shall consider this in the last stages of
	the optimization process.
	
	Let $T_0$ be the first iteration which samples
	a solution containing at least $n-n^{1/4}+n^{1/5}$  edges from the
	optimal solution. We show that with an overwhelming probability, the number
	of common edges in the iteration-best and optimal
	solution will drop below $n-n^{1/4}+n^{1/5}$ and the
	optimal solution is not sampled before that. This will imply 
	the conclusion of Theorem \ref{theo:G_1_VerySmall_N}, since, with an overwhelming
	probability, this phenomenon
	can repeatedly occur exponentially many times before optimal solution is found.
	
	To that end, we need to show the following:
	\begin{itemize}
		\item[$1)$] For any $1/4>\delta>0,$ if $\X_t^{[1]}$ contains at least
		$n-n^{\delta}$ edges from the optimal solution, then with a probability
		$O(\frac{1}{\sqrt{n}}),$ the random solution generated by Algorithm \ref{alg:TSP_RNG_Edge}
		will contain more edges from the optimal solution than $\X_t^{[1]}$ in iteration $t+1;$
		\item[$2)$] For any $1/4>\delta>0,$ if $\X_t^{[1]}$ contains at least
		$n-n^{\delta}$  edges from the optimal solution, then with a probability
		$\Omega(1)$ (at least $e^{-5}$), the random solution generated by Algorithm
		\ref{alg:TSP_RNG_Edge} will contain fewer edges from the optimal solution than $\X_t^{[1]}$
		in iteration $t+1.$
	\end{itemize}
	However, we first use these two facts and show them afterwards.
	
	By Claim \ref{theo:Edge_Gap}, with probability $1-e^{-\omega(n^{1/20})},$
	$\X_{T_0}^{[1]}$ contains at most $n-n^{1/4}+n^{1/5}+n^{1/10}$ edges
	from the optimal solution, since the random event that the number of common edges
	from the iteration-best and optimal solution increases more than 
	$n^{1/10}$ in one iteration implies an occurrence of a $\Omega(n^{1/10})$-exchange. Similarly, by Claim \ref{theo:Edge_Gap} again,
	with probability $1-e^{-\omega(n^{1/200})},$ the iteration-best solution contains $k\in [n-n^{1/4}+n^{1/5}-n^{1/6+1/100}, n-n^{1/4}+n^{1/5}+n^{1/10}+n^{1/6+1/100}]$
	 edges from the optimal solution in each iteration $t\in [T_0, T_0+ n^{1/6}].$ 
	 This means that the optimal solution is not found in the period $[T_0, T_0+n^{1/6}]$
	 with an overwhelming probability.
	 With the help of $1)$ and $2),$ we are now to show that within this period,
	the number of edges shared by the iteration-best and optimal solution is significantly reduced with an overwhelming probability. This
	 will complete the proof.
	 
	To facilitate our discussion, we call an iteration a {\em successful iteration}
	if its iteration-best solution contains more edges from the optimal solution than the
	last iteration-best solution, and an iteration a {\em failure iteration} if its iteration-best
	solution contains fewer edges from the optimal solution than the last iteration-best solution.
	
	By $1)$ and the subsequent discussion, the expected number of successful iterations within $[T_0, T_0+n^{1/6}]$ is
	$O(\frac{\ln n}{n^{1/3}}),$ since $N< \frac{1}{220}\ln n.$ Thus, by the Chernoff bound, with probability $1-e^{-\Omega(n^{1/6})},$ 
	at most $n^{1/100}$  successful iterations can occur within $[T_0, T_0+n^{1/6}].$
	By $2)$ and the subsequent discussion, the expected number of failure iterations in $[T_0, T_0+n^{1/6}]$ is $\Omega(n^{\frac{1}{6}-\frac{1}{44}}),$
	since $N< \frac{1}{220}\ln n.$
	By the Chernoff bound, it happens that with probability $1-e^{-\Omega(n^{1/6})},$ at least $n^{1/7}$ 
	failure iterations will occur in $[T_0, T_0+n^{1/6}].$ Since  a successful iteration
	can add at most $n^{1/100}$ edges from the optimal solution with probability $1-e^{-\omega(n^{1/200})},$  it totally adds at most
	$
	n^{1/100}\times n^{1/100}=n^{1/50}
    $
	 edges from the optimal solution to the iteration-best solution within $[T_0, T_0+n^{1/6}]$ with probability $1-e^{-\omega(n^{1/200})}.$ Note that within
	$[T_0, T_0+n^{1/6}],$ with probability $1-e^{-\Omega(n^{1/6})},$ at least
	$
	n^{1/7}\times 1=n^{1/7}
	$
	 ``good" edges are removed from the iteration-best solution.
	Therefore, with overwhelming probability $1-e^{-\Omega(n^{1/200})},$ $\X_{T_0+n^{1/6}}^{[1]}$
	will contain at most
	\[
	n-n^{1/4}+n^{1/5}+n^{1/10}-n^{1/7}+n^{1/50}<n-n^{1/4}+n^{1/5}
	\]
	 edges from the optimal solution, since $\X_{T_0}^{[1]}$ contains at most 
	$n-n^{1/4}+n^{1/5}+n^{1/10}$  iterations with probability $1-e^{-\Omega(n^{1/20})}.$
	As a result, with probability $1-e^{-\Omega(n^{1/200})},$ the number of common edges in the iteration-best
	and optimal solution will again be smaller than $n-n^{1/4}+n^{1/5}$ in some iteration after $T_0,$ and  the optimal solution is not found before
	that. And this will repeatedly happen $e^{\Theta(n^{1/300})}$ times
	with probability $1-e^{-\Omega(n^{1/200})}.$
	
	To finish the proof, we now formally prove $1)$ and $2).$ We first consider $2).$
	By taking $k=2$ and considering the $\binom{n}{2}$ $2$-exchanges that happen in the first $n-3\sqrt{n}$ steps in the proof of Claim \ref{theo:Lemma_RNG_Edge}, one can
	show a tighter probability lower bound $\frac{1}{e^{5}}$ for producing $2$-exchanges of $\X_t^{[1]}$ by Algorithm \ref{alg:TSP_RNG_Edge}. Here, we observe
	that the
	probability of choosing a high edge at a step before $n-3\sqrt{n}$ is at least $1-3/(n-2),$
	see the proof of Claim \ref{theo:Edge_Crucial}.
	
	Note that if $2$-exchanges deleting $2$ edges from the optimal solution happen $N$ times in an iteration, then
	the iteration will be a failure iteration. By the above and
	the fact that any two $k$-exchanges happen with the same probability, a failure
	iteration then occurs with a probability at least
	\[
	\Bigg(\frac{1}{e^{5}}\frac{\binom{n-n^{\delta}}{2}}{\binom{n}{2}}\Bigg)^N\ge 
		\Bigg(\frac{1}{e^{5}}\frac{\binom{n-n^{\delta}}{2}}{\binom{n}{2}}\Bigg)^{\frac{1}{220}\ln n}\in \Omega(n^{-1/44}),
	\]
	where $\delta\in (0,1/4)$ and $N<\frac{1}{220}\ln n.$ This asserts $2).$
	
	$1)$ follows with a similar discussion. Since $\X_t^{[1]}$ is assumed to contain at least
	$n-n^{\delta}$  edges from the optimal solution for some $\delta \in (0,1/4),$
	and since $\Omega(n^{\delta})$-exchanges happen with an overwhelmingly small probability,
	we need to consider only $O(n^{\delta})$-exchanges when we estimate the probability of
	a successful iteration. For each $k\in \Omega(n^{\delta}),$ the proportion of failure
	$k$-exchanges is bounded from below by
	\[
	\frac{\binom{n-n^{\delta}}{k}}{\binom{n}{k}}=e^{-\frac{2kn^{\delta}}{n}}+o(1)
	\ge e^{-2n^{-1/2}}+o(1),
	\] 
	since $0<\delta<1/4,$ and
	$k$-exchanges removing $k$ edges shared by the iteration-best and optimal solution
	are not ``successful" $k$-exchanges. Since for any $k\in \Omega(n^{\delta}),$ any
	two $k$-exchanges happen with the same probability, and since the sum of the probabilities of
	successful and failure $k$-exchanges is smaller than $1,$ we conclude that
	successful $O(n^{\delta})$-exchanges happen with a probability smaller than
	$1-e^{2n^{-1/2}}\in O(\frac{1}{\sqrt{n}}).$ Therefore, a successful iteration
	happens with a probability $1-(1-O(\frac{1}{\sqrt{n}}))^N\in O(\frac{\ln n}{\sqrt{n}})$
	since $N< \frac{1}{220}\ln n.$
	\end{proof}
	Theorem \ref{theo:G_1_VerySmall_N} generalizes the finding of \citep{Neumann2010A}
	to simple TSP instances. It formally states that for $\rho=1,$  $N\in \Omega(\ln n)$ is necessary to efficiently find an
	optimal solution to TSP. By Theorem \ref{theo:G_1_VerySmall_N},
	Theorem \ref{theo:stochastic_runtime_G1_1}, Theorem \ref{theo:G_1_Small_N} and its Corollary
	\ref{theo:G1_Perfect}, we have clearly analyzed
	the impact of the size of $N$ on the resulting
	stochastic runtime for the simple TSP instances in
	the case of that $\rho=1$.
	$N\in \omega(\ln n)$ is sufficient to find the optimal
	solution in a stochastically polynomial runtime, and
	the degree of the polynomial may increase with $N$,
	but the probability guaranteeing the runtime is also increasing with $N$.
}
\subsection{Stochastic runtime analysis for grid instances}

Now, we consider more general TSP instances.
Herein, the $n$ vertices are positioned on an $m\times m$ grid for some integer $m\in \N_{+}.$ The vertices are positioned
in a way that no three of them are {\em collinear}. Figure \ref{fig:grid_set} gives an example of such an instance
where $m\!=\!5$ and $n\!=\!8$. 
\begin{figure}[!htb]
\centering
\includegraphics[scale=0.4]{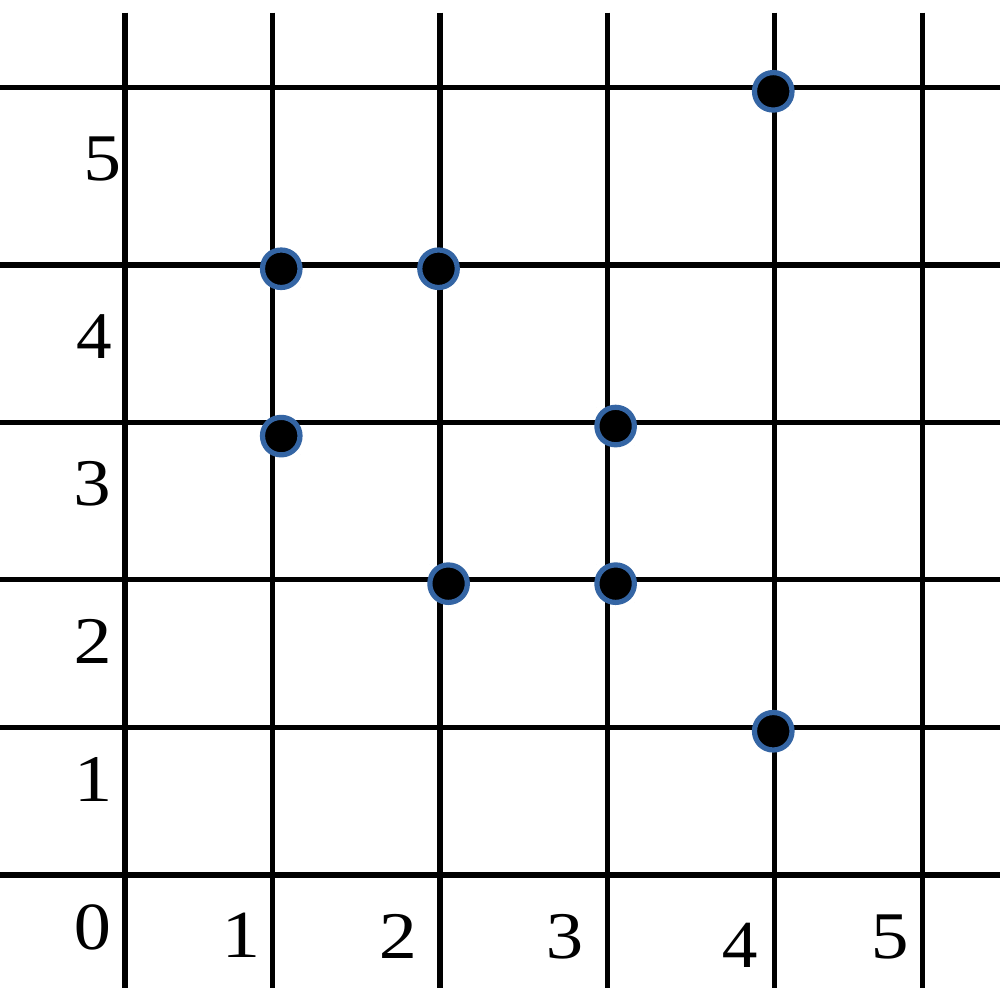}
\caption{A grid instance}
\label{fig:grid_set}
\end{figure}
The weight of an edge \Wu{$\{l,k\}\in E$} in this case is defined
\rightnote{In all figures, the letters should be of the same size or only a little smaller than in the text. In Figure 1 and 2 it is just okay, but here and in figures they are much too small\\
	\Wu{Thank you for pointing out this, I have changed the letters!}} 
as the usual Euclidean distance $d(l,k)$ between vertex $l$ and vertex $k$ for every $l,k=1,\ldots,n.$
\Wurevision{In this section,} we shall refer to these TSP instances as {\em grid instances}.



Grid instances have been studied in \citep{Sutton2012A} and \citep{Sutton2014Parameterized}.
Sutton and Neumann \citep{Sutton2012A} investigated the expected runtime of $(1\!+\!1)$ EA and RLS for these instances.
As a continuation of \citep{Sutton2012A}, Sutton et al \citep{Sutton2014Parameterized} further proved
that the more extensive algorithm $(\mu+\lambda)$ EA finds an optimal solution for the instances {\em expectedly} in 
\[
O\b{(\mu/\lambda) n^3 m^5\!+\!nm^5\!+\!(\mu/\lambda) n^{4k}(2k\!-\!1)!}
\]
  \WWu{{\em iterations}} if every of the $\lambda$ selected parents is mutated by taking a random number of consecutive 2-exchange moves, and expectedly in 
\[
O\b{(\mu/\lambda) n^3 m^5\!+\!nm^5\!+\!(\mu/\lambda) n^{2k}(k\!-\!1)!}
\] 
iterations with a mixed mutation operator, where $k$ denotes the 
number of vertices that are not on the boundary of the convex hull of $V.$ Sutton et al \citep{Sutton2014Parameterized} also studied general Euclidean TSP instances (without collinearity) and showed similar
results in terms of the maximum distance value $d_{\max},$ the minimum distance value $d_{\min},$ $k$
and the minimum angle in the triangles formed by the vertices.

Before we present our stochastic runtime, we summarize some structural properties of grid instances (some just follow from properties of general Euclidean instances). 
We say that two different edges $\{i,j\}$ and $\{k,l\}$ {\em intersect} with each other if there exists a point $p$ such that 
$p\notin \{i,j,k,l\}$ locates on both of the two edges, see, e.g., Figure \ref{fig:Intersection}. We say that a
solution is {\em intersection-free} if the corresponding Hamiltonian cycle does not contain intersections, 
see, e.g., Figure \ref{fig:IntersectionFree}.
\begin{figure}[!htb]
\centering
\begin{subfigure}{0.48\textwidth}
\centering
\includegraphics[scale=0.4]{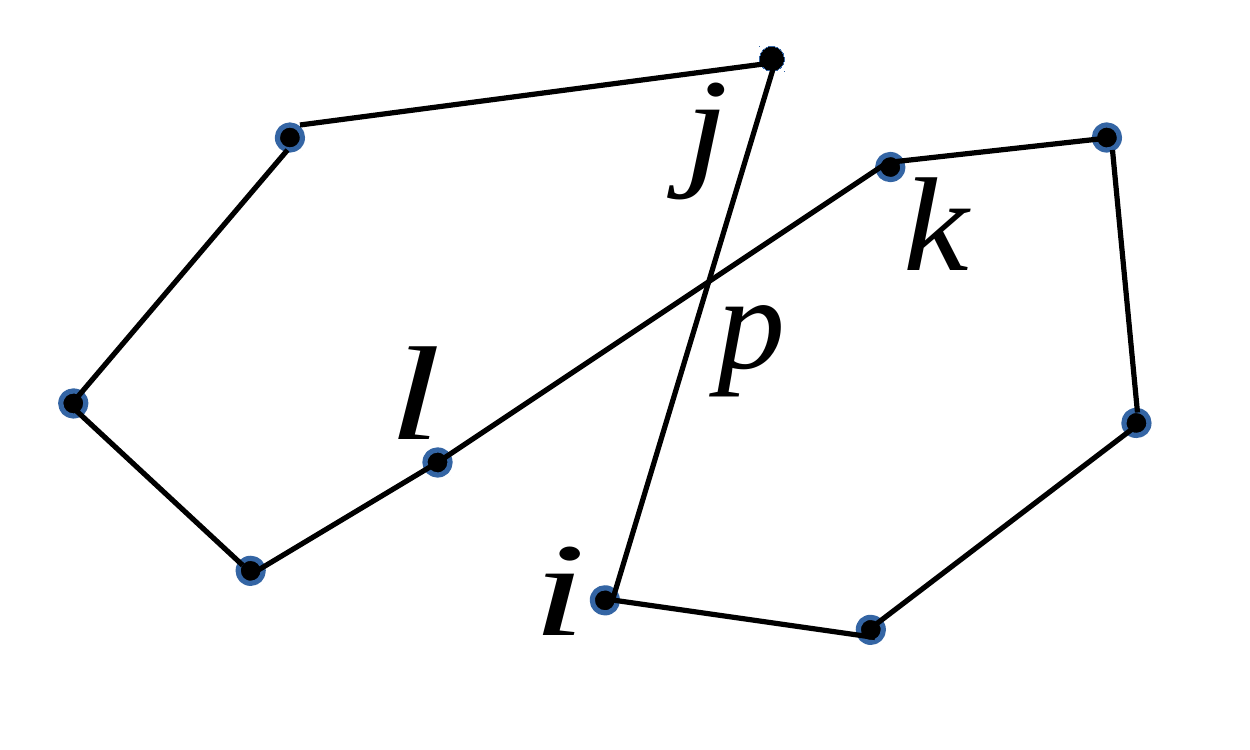}
\caption{intersection}
\label{fig:Intersection}
\end{subfigure}
\begin{subfigure}{0.48\textwidth}
\centering
\includegraphics[scale=0.4]{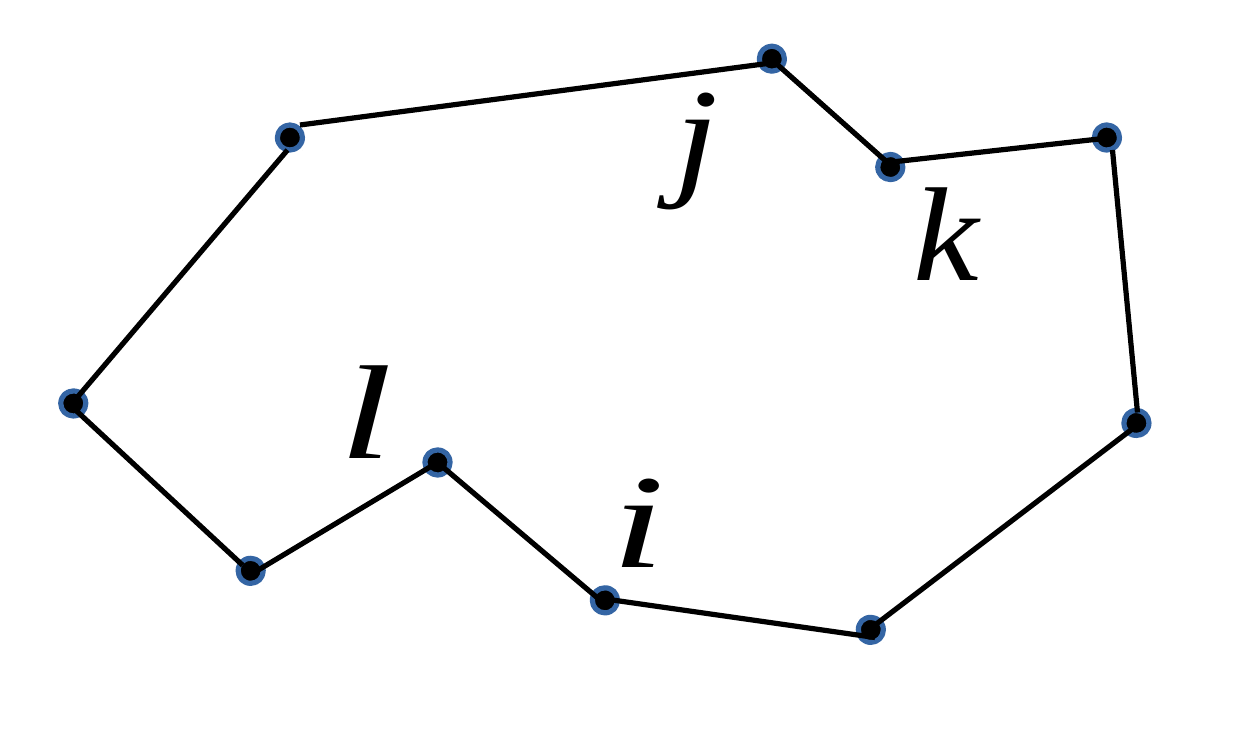}
\caption{intersection free}
\label{fig:IntersectionFree}
\end{subfigure}
\caption{Example for intersections}
\label{fig:example_intersection}
\end{figure}

Obviously, the {\em triangle inequality} \citep{Khamsi2001An} holds for grid instances. Therefore, removing
an intersection by a (unique) 2-exchange move in a solution strictly reduces the total traveling cost, see Figure \ref{fig:Intersection}. Lemma~\ref{theo:TSP_Optimality_Intersection} states the well known fact that an optimal solution of grid instances is
intersection-free.
\begin{lemma}
\label{theo:TSP_Optimality_Intersection}
Optimal solutions of grid instances are intersection-free. 
\end{lemma}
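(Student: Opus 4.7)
The plan is to argue by contradiction. Suppose $s^*$ is an optimal solution containing two edges $e_1=\{i,j\}$ and $e_2=\{k,l\}$ that intersect at an interior point $p\notin\{i,j,k,l\}$. I would exhibit a single 2-exchange move that replaces $e_1,e_2$ by a different pair of edges, still yields a Hamiltonian cycle, and strictly decreases the total traveling cost, thereby contradicting optimality.

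First I would identify the 2-exchange explicitly. Walking along the underlying cycle of $s^*$, the four vertices $i,j,k,l$ appear in an order that, after suitable relabeling, may be taken to be $\ldots,i,j,\ldots,k,l,\ldots,i$. Removing $e_1,e_2$ then leaves two vertex-disjoint paths $P_1$ from $j$ to $k$ and $P_2$ from $l$ to $i$. Of the three candidate pairs of new edges on $\{i,j,k,l\}$, the pair $\{j,k\}\cup\{i,l\}$ would produce two disjoint smaller cycles (it closes each $P_r$ on itself) and is therefore infeasible, the pair $\{i,j\}\cup\{k,l\}$ simply reproduces $s^*$, and the remaining pair $\{i,k\}\cup\{j,l\}$ reconnects $P_1$ and $P_2$ into a single Hamiltonian cycle: traverse $P_2$ from $i$ to $l$, jump via the new edge $\{l,j\}$ to $j$, traverse $P_1$ to $k$, and close via the new edge $\{k,i\}$. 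This is the \emph{uncrossing} 2-exchange.

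Next I would establish the strict cost improvement via the triangle inequality. Since $p$ lies strictly in the interior of both segments, we have $d(i,j)=d(i,p)+d(p,j)$ and $d(k,l)=d(k,p)+d(p,l)$. Applying the triangle inequality to the triangles $ipk$ and $jpl$ yields $d(i,k)\le d(i,p)+d(p,k)$ and $d(j,l)\le d(j,p)+d(p,l)$, with strict inequality in at least one of them: indeed, equality in, say, $d(i,k)= d(i,p)+d(p,k)$ would force $i,p,k$ to be collinear, and since $p$ lies on the segment $\{i,j\}$ this would put $i,j,k$ on a common line, contradicting the no-three-collinear assumption on grid instances. Summing the two inequalities gives $d(i,k)+d(j,l)<d(i,j)+d(k,l)$, so the uncrossing 2-exchange strictly reduces $f(s^*)$, contradicting the optimality of $s^*$.

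The one step that deserves care, and which I view as the main (though essentially combinatorial) obstacle, is showing that the uncrossing reconnection produces a \emph{single} Hamiltonian cycle rather than two shorter disjoint cycles. This is forced by the order in which $i,j,k,l$ are visited along $s^*$, which is itself forced by the fact that $e_1$ and $e_2$ are two distinct edges of a common Hamiltonian cycle; once this ordering is fixed, the feasible replacement pair is uniquely determined, and the strict triangle inequality above then delivers the desired contradiction.
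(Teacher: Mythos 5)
Your proof is correct and follows the same route the paper takes: the paper does not give a formal proof but justifies the lemma immediately before its statement by the same observation, namely that the triangle inequality makes the (unique feasible) uncrossing $2$-exchange strictly cost-decreasing, so an optimal tour cannot contain an intersection. Your write-up merely supplies the details the paper leaves implicit --- the identification of the single reconnecting pair $\{i,k\},\{j,l\}$ and the use of the no-three-collinear assumption to turn the triangle inequality into a strict one --- both of which are handled correctly.
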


\begin{figure}[!htb]
\centering
\includegraphics[scale=0.4]{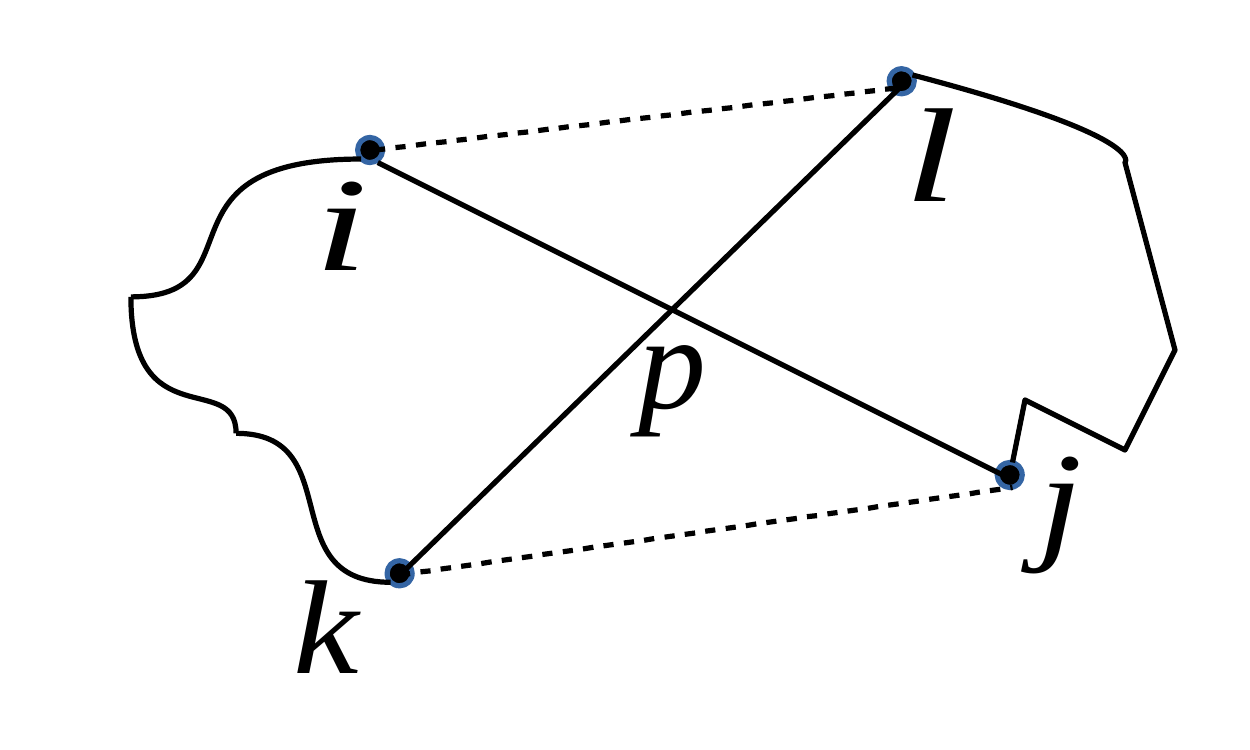}
\caption{Example for a $2$-opt move}
\label{fig:2optmove}
\end{figure}
We now {\em restrict} 2-opt moves to 2-exchange moves that remove an intersection. For
example, removing edges $\{i,j\},\{k,l\}$ in Figure \ref{fig:2optmove} and adding new edges $\{i,l\},\{k,j\}$
form such a 2-opt move.
Lemma \ref{theo:TSP_Inverse_Incremental} below says that for grid instances, removing
one intersection may reduce the total traveling cost \Wurevision{ $\Omega\b{m^{-4}}$} if it is
applicable. We omit the simple proof here. Interested readers
may refer to \citep{Sutton2014Parameterized} for a proof.
\begin{lemma}
\label{theo:TSP_Inverse_Incremental}
If a feasible solution to a grid instance contains intersections, then removing
the intersection can reduce the total traveling cost \Wurevision{$\Omega\b{m^{-4}}$.} 
\end{lemma}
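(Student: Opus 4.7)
The plan is to carry out a direct planar-geometric analysis of the 2-opt move that deletes the crossing. Let the two crossing edges be $\{i,j\}$ and $\{k,l\}$, meeting at an interior point $p$, and write $a = |ip|$, $b = |jp|$, $c = |kp|$, $d = |lp|$, together with the angle $\theta \in (0,\pi)$ between the two edges at $p$. The 2-opt that removes the crossing replaces $\{i,j\},\{k,l\}$ by the pair $\{i,l\},\{k,j\}$ (the one that yields a valid tour), so the cost reduction equals
\[
\Delta \;=\; (a+b+c+d) - |il| - |kj|.
\]

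Applying the law of cosines in the triangles $ipl$ and $kpj$ (both have angle $\theta$ at $p$) and rationalizing each triangle slack in the usual way, I would first derive
\[
\Delta \;\geq\; \Bigl(\tfrac{ad}{a+d}+\tfrac{bc}{b+c}\Bigr)\cdot\tfrac{\sin^{2}\theta}{2},
\]
using the elementary inequality $1+\cos\theta \geq \tfrac{1}{2}\sin^{2}\theta$. This reduces the job to lower-bounding $\sin\theta$ and the two harmonic-mean expressions.

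Both bounds come from elementary lattice geometry. The direction vectors of lines $ij$ and $kl$ are non-parallel integer vectors (non-parallel because no three of $i,j,k,l$ are collinear), so their $2\times 2$ determinant is a non-zero integer, yielding $\sin\theta \geq 1/(|ij||kl|)$. Each of the triangles $ikl, jkl, ikj, ilj$ has lattice vertices in non-degenerate position and thus area at least $1/2$ (e.g.\ by Pick's theorem or by integrality of the side-vector determinant); converting area to altitude and using the identity $a\sin\theta = \mathrm{dist}(i, \mathrm{line}(kl))$, together with its three analogues, gives $a,b \geq 1/(|kl|\sin\theta)$ and $c,d \geq 1/(|ij|\sin\theta)$. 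Plugging these into the harmonic means yields $\tfrac{ad}{a+d}, \tfrac{bc}{b+c} \geq 1/\bigl((|ij|+|kl|)\sin\theta\bigr)$, and one concludes $\Delta \geq \sin\theta/(|ij|+|kl|) \geq 1/\bigl(|ij||kl|(|ij|+|kl|)\bigr)$. Since $|ij|, |kl| \leq m\sqrt{2}$, this is $\Omega(m^{-3})$, in fact stronger than the claimed $\Omega(m^{-4})$.

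The main obstacle is managing a delicate cancellation: in the nearly-antiparallel regime $\sin^{2}\theta$ can be as small as $\Theta(m^{-4})$, but the very same regime forces each of $a,b,c,d$ to be at least $\Omega\bigl(1/(m\sin\theta)\bigr)$ via the Pick-theorem distance bound, and it is precisely this $\sin^{-1}\theta$ factor that kills one of the two powers of $\sin\theta$. Making sure the angle identifications ($\angle ipl = \angle kpj = \theta$, not $\pi-\theta$) and the choice of which pair of edges is added by the 2-opt are done consistently is the main bookkeeping concern; once those are nailed down the estimate is mechanical.
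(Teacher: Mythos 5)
Your proposal is correct, and it is worth noting that the paper gives no proof of this lemma at all: it explicitly omits the argument and refers the reader to Sutton et al.\ for the $\Omega(m^{-4})$ bound. Your argument is the same in spirit as theirs --- lattice integrality gives $\sin\theta\ge 1/(|ij||kl|)$ for the crossing angle and a minimum point-to-line distance of $1/|kl|$ via the half-integer area of non-degenerate lattice triangles, and the law-of-cosines slack contributes a factor $\sin^2\theta$ --- but your extra step of feeding the distance bounds $a,b\ge 1/(|kl|\sin\theta)$, $c,d\ge 1/(|ij|\sin\theta)$ into the harmonic means $\tfrac{ad}{a+d},\tfrac{bc}{b+c}$ cancels one power of $\sin\theta$ and yields the strictly stronger bound $\Delta\ge 1/\bigl(|ij||kl|(|ij|+|kl|)\bigr)=\Omega(m^{-3})$, which of course implies the stated $\Omega(m^{-4})$. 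I checked the chain: $(a+d)-|il|\ge\frac{ad(1+\cos\phi)}{a+d}\ge\frac{ad}{a+d}\cdot\frac{\sin^2\phi}{2}$ is valid for either of the two vertical-angle pairs $\phi\in\{\theta,\pi-\theta\}$ since $1+\cos\phi\ge\frac12\sin^2\phi$ holds on all of $(0,\pi)$, so the bookkeeping worry about which side pair the tour-preserving 2-opt adds is genuinely harmless. Two small points of hygiene: non-parallelism of the two edges follows from the fact that they properly cross together with the no-three-collinear assumption (parallel crossing segments would be collinear), and one should note that the crossing point $p$ cannot coincide with any vertex for the same reason, so $a,b,c,d>0$; both are immediate. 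The argument is complete and self-contained.
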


\rightnote{it is convex hull, not convex hall. I had already corrected that, but you changed it back. So I changed it again...\\
	\Wu{Sorry for the mistakes!}} 
The {\em convex hull} \  $\mathfrak{Y}(V)$ of the vertex set $V$ is the smallest convex set in $\mathbb{R}^2$ that contains $V$. Its boundary is a convex polygon spanned by some vertices with possibly other vertices in the interior of that polygon. Let $V^{b}$ denote the set of vertices on the boundary of $\mathfrak{Y}(V)$.
Figure \ref{fig:convexhall} illustrates this.
\begin{figure}[!htb]
\centering
\includegraphics[scale=0.5]{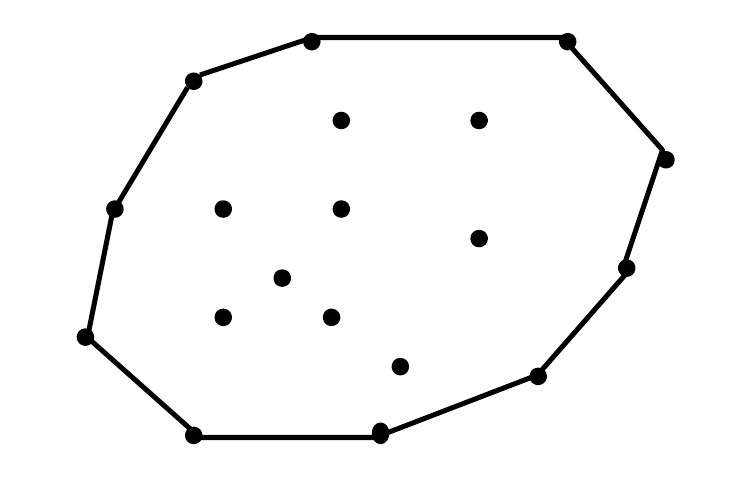}
\caption{Example of a convex hull}
\label{fig:convexhall}
\end{figure}

Quintas and Supnick \citep{Quintas1965On} proved that
if a solution $s$ is intersection-free, then the solution respects the
 hull-order, i.e., any two vertices in the subsequence of $s$ induced by the boundary (the outer polygon) of $\mathfrak{Y}(V)$ are consecutive in $s$
 if and only if they are consecutive on the boundary of $\mathfrak{Y}(V).$ Therefore, if
 $V^b=V,$ \Wurevision{i.e., all of the vertices are on the convex hull,} then every intersection-free solution is optimal.
 
  Theorem \ref{theo:TSP_Convex_Case} below analyzes the stochastic runtime of Algorithm \ref{alg:ce} for grid instances
 for the case that $V\!=\!V^b.$ It states that the
 stochastic runtime is $O(n^4\cdot m^{5+\epsilon})$ for the vertex-based random solution generation, and 
  $O(n^3 \cdot m^{5+\epsilon})$ for the edge-based random solution generation. 
  \Wurevision{Corollary \ref{theo:Corr2} further
  	improves the runtime by sacrificing the probability guarantee.} 
  These stochastic runtimes are close to the expected runtime
 $O(n^3\cdot m^5)$ for RLS reported by Sutton et al \citep{Sutton2012A} and \citep{Sutton2014Parameterized}.
 
\begin{theorem}
 \label{theo:TSP_Convex_Case}
 Consider a TSP instance with $n$ vertices located on an $m\times m$ grid such that no three of them are collinear.
 Assume that $V^b=V$, \Wurevision{i.e., every vertex in $V$ is on the convex hull $V^b$,} that we apply the max-min calibration \eqref{eq:adjustment} with $\pi_{\max}=1-\frac{1}{n}, \pi_{\min}=\frac{1}{n(n-2)}$, $\rho=1, \ M=1$ and $N\in  \Omega\b{m^\epsilon}$
 for some constant $\epsilon>0.$ Then: 
 \begin{itemize}
       \item[$a)$] With an overwhelming probability of $1-e^{-\Omega(N)},$
   Algorithm \ref{alg:ce} finds the optimal solution within 
    at most $n^4\cdot m^5$ iterations with the vertex-based random solution generation.
    
     \item[$b)$]  With an overwhelming probability
      of $1-e^{-\Omega(N)},$ Algorithm \ref{alg:ce} finds an optimal solution within 
      at most $n^3\cdot m^5$ iterations with edge-based random solution generation.
 \end{itemize}
 \end{theorem}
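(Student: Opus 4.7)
The plan is to prove Theorem~\ref{theo:TSP_Convex_Case} by a potential-function argument combined with a Chernoff-style bound on the number of ``untangling'' iterations. Since $V^b = V$, the Quintas--Supnick hull-order theorem recalled after Lemma~\ref{theo:TSP_Optimality_Intersection} tells us that every intersection-free tour is already optimal. So it suffices to show that, within $T = n^4 m^5$ iterations in part~$a)$ and $T = n^3 m^5$ iterations in part~$b)$, the iteration-best $\X_t^{[1]}$ becomes intersection-free with probability $1 - e^{-\Omega(N)}$.

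Three ingredients drive the argument. \emph{(Monotonicity.)} By Claim~\ref{theo:Vertex_nonincreasing} for the vertex-based scheme, resp.\ Claim~\ref{theo:Unchange} for the edge-based scheme, every single draw reproduces $\X_t^{[1]}$ with probability $\Omega(1)$. Hence with probability at least $1 - (1-\Omega(1))^N = 1 - e^{-\Omega(N)}$ at least one of the $N$ samples has cost $\le f(\X_t^{[1]})$, forcing $f(\X_{t+1}^{[1]}) \le f(\X_t^{[1]})$. A union bound over $T$ iterations, together with $n \le m^2$ on the grid and $N \in \Omega(m^\epsilon)$, yields that $f(\X_t^{[1]})$ is non-increasing on $[0, T]$ with probability $1 - e^{-\Omega(N)}$. \emph{(Untangling step size.)} Define $U_t$ to be the event that iteration $t+1$ produces, as one of its $N$ samples, the 2-exchange that removes some intersection currently present in $\X_t^{[1]}$. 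On $U_t$ the best sample realizes the cost drop $\Omega(m^{-4})$ from Lemma~\ref{theo:TSP_Inverse_Incremental}, so $f(\X_{t+1}^{[1]}) \le f(\X_t^{[1]}) - \Omega(m^{-4})$. Because every tour has cost at most $O(nm)$ (each edge has Euclidean length at most $m\sqrt{2}$), on the monotone trajectory $U_t$ can occur at most $c_1 n m^5$ times before $\X_t^{[1]}$ must be intersection-free. \emph{(Untangling probability.)} Whenever $\X_t^{[1]}$ has an intersection, the corresponding 2-exchange is produced with probability $\Omega(1/n^3)$ per draw of Algorithm~\ref{alg:TSP_RNG_Vertex} by Claim~\ref{theo:Lemma_RNG_Vertex} at $k=2$, and $\Omega(1/n^2)$ per draw of Algorithm~\ref{alg:TSP_RNG_Edge} by Claim~\ref{theo:Lemma_RNG_Edge} at $k=2$ together with its equal-probability remark. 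Consequently $\P(U_t) \ge 1 - (1-\Omega(n^{-3}))^N \ge \Omega(N/n^3)$, resp.\ $\Omega(N/n^2)$.

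To conclude part~$a)$, I would assume for contradiction that $\X_t^{[1]}$ carries an intersection for every $t \le T$. Conditioning on history, each $\1_{U_t}$ stochastically dominates an independent Bernoulli$(p)$ with $p = \Omega(N/n^3)$. So $\sum_{t < T}\1_{U_t}$ dominates a sum of $T$ i.i.d.\ Bernoulli$(p)$ with mean $\Omega(T\,N/n^3) = \Omega(n N m^5)$, and a Chernoff lower-tail bound gives $\sum_{t<T}\1_{U_t} > c_1 n m^5$ except with probability $e^{-\Omega(n N m^5)} \subseteq e^{-\Omega(N)}$, contradicting the cap from the second ingredient. So with probability $1 - e^{-\Omega(N)}$ some $t \le T$ has $\X_t^{[1]}$ intersection-free, hence optimal. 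Part~$b)$ is identical after replacing $n^3$ by $n^2$ and $T = n^4 m^5$ by $T = n^3 m^5$.

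The main subtlety is that the monotonicity event and the untangling event both consume the same $N$ draws per iteration. I would resolve this by keeping the two favorable outcomes disjoint within an iteration: verbatim reproductions of $\X_t^{[1]}$ (good for monotonicity) and samples equal to the untangling 2-exchange (good for $U_t$) are disjoint sample outcomes, so the per-iteration bounds from Claims~\ref{theo:Vertex_nonincreasing}/\ref{theo:Unchange} and \ref{theo:Lemma_RNG_Vertex}/\ref{theo:Lemma_RNG_Edge} combine without double-counting and the Chernoff bound applies to the independent Bernoulli lower dominator.
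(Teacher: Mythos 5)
Your proposal is correct and follows essentially the same route as the paper's proof: reduce to reaching an intersection-free (hence optimal) tour via Quintas--Supnick, combine cost-monotonicity from Claims \ref{theo:Vertex_nonincreasing}/\ref{theo:Unchange} with the $\Omega(m^{-4})$ per-uncrossing gain of Lemma \ref{theo:TSP_Inverse_Incremental} to cap the number of required 2-opt steps at $O(nm^5)$, and use the per-draw probabilities $\Omega(n^{-3})$ resp.\ $\Omega(n^{-2})$ from Claims \ref{theo:Lemma_RNG_Vertex}/\ref{theo:Lemma_RNG_Edge}; the paper merely replaces your Chernoff count by partitioning the $n^4m^5$ (resp.\ $n^3m^5$) iterations into $nm^5$ phases of length $n^3$ (resp.\ $n^2$) and union-bounding over phases. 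The only point needing a touch of care in your version is that $\P(U_t)=\Omega(N/n^3)$ must be capped at a constant once $N\in\Omega(n^3)$, which weakens the naive Chernoff exponent for very large $N$; the claimed $1-e^{-\Omega(N)}$ guarantee is then recovered by the phase decomposition or a direct counting of failed iterations.
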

 
 \begin{proof}[Proof of Theorem \ref{theo:TSP_Convex_Case}]
Note that under the conditions of Theorem \ref{theo:TSP_Convex_Case}, every intersection free solution is
optimal. By Lemma \ref{theo:TSP_Inverse_Incremental}, we know that a $2$-opt move reduces the total
traveling cost by $\Omega(m^{-4}).$ Therefore, $n\cdot m^{5}$  consecutive 2-opt
moves turn a feasible solution into an optimal one, since the worst solution in this case has a 
total traveling cost smaller than $n\cdot m$ and the optimal solution has total traveling cost larger 
than $n.$ %
Notice also that $m\!\ge \!n/2,$ since the $n$ vertices are positioned on the $m\times m$ grid and
no three of them are collinear. \Wurevision{With these facts, we prove the Theorem
	by a similar argument to the one used in the proof of Theorem \ref{theo:G_1_Small_N}.}

\Wurevision{Again, we consider the random event that the cost of the iteration best solution does not
	increase within a specified period of polynomially many iterations and strictly decreases
	sufficiently many times within that period. For $a),$ we consider the first $n^4m^5$
	iterations. For $b),$ we consider the first $n^3m^5$ iterations.}

\Wurevision{For $a):$ By Claim \ref{theo:Vertex_nonincreasing}, with 
	probability $(1-(1-\Omega(1))^{N})^{n^4m^5}=1-e^{-\Omega(N)},$ the cost of the iteration-best solution does not increase within $n^4m^5$ iterations. By Claim
	\ref{theo:Lemma_RNG_Vertex}, for a phase consisting of consecutive
	$n^3$ iterations, with probability $1-(1-n^{-3})^{N\cdot n^3}=1-e^{-\Omega(N)}$, in at least one iteration  of that phase  an intersection
	is removed from the iteration-best solution, provided the phase starts with an iteration-best solution containing at least one intersection. Since the first $n^4m^5$ iterations can have
	$nm^5$ such phases, $a)$ follows.}

\Wurevision{$b)$ follows with an almost identical discussion. We therefore omit the
	proof.}
 \end{proof}
 \Wurevision{
 \begin{corollary}\label{theo:Corr2}
 	 Consider a TSP instance with $n$ vertices located on an $m\times m$ grid such that no three of them are collinear.
 	 Assume that $V^b=V$, \Wurevision{i.e., every vertex in $V$ is on the convex hull $V^b$,} that we apply the max-min calibration \eqref{eq:adjustment} with $\pi_{\max}=1-\frac{1}{n}, \pi_{\min}=\frac{1}{n(n-2)}$, $\rho=1, \ M=1$ and $N\in  \omega\b{\ln m}.$ Then: 
 	 \begin{itemize}
 	 	\item[$a)$] With probability $1-m^{-\omega(1)},$
 	 	Algorithm \ref{alg:ce} finds the optimal solution within 
 	 	at most $n^4\cdot m^5$ iterations with the vertex-based random solution generation.
 	 	
 	 	\item[$b)$]  With probability
 	 	$1-m^{-\omega(1)},$ Algorithm \ref{alg:ce} finds an optimal solution within 
 	 	at most $n^3\cdot m^5$ iterations with the edge-based random solution generation.
 	 \end{itemize}
 \end{corollary}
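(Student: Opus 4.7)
The plan is to mirror exactly the structure of the proof of Theorem \ref{theo:TSP_Convex_Case}, and simply retrack how the resulting probability bound deteriorates when we weaken the sample-size assumption from $N \in \Omega(m^{\epsilon})$ to $N \in \omega(\ln m)$. The key algebraic fact I will repeatedly use is that any $\Omega(1)$ or $\Omega(n^{-c})$ per-draw success probability amplified over $N$ (respectively $N\cdot n^{c}$) independent draws gives failure probability at most $(1-\Omega(1))^{N} = e^{-\Omega(N)} = e^{-\omega(\ln m)} = m^{-\omega(1)}$. Moreover, $m^{-\omega(1)}$ denotes a quantity decaying faster than any inverse polynomial in $m$, and is therefore closed under multiplication by a polynomial in $m$, i.e. $\text{poly}(m)\cdot m^{-\omega(1)} = m^{-\omega(1)}$. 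This is the same trick already used in Corollary \ref{theo:G1_Perfect}.

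For part $a$, I would consider the first $n^{4}m^{5}$ iterations. By Claim \ref{theo:Vertex_nonincreasing}, in a single iteration the iteration-best cost fails to be non-increasing with probability at most $(1-\Omega(1))^{N} = m^{-\omega(1)}$. Since $m \ge n/2$, the quantity $n^{4}m^{5}$ is polynomial in $m$, so a union bound keeps the probability that non-increase holds throughout all these iterations at $1-m^{-\omega(1)}$. Next, partition the $n^{4}m^{5}$ iterations into $nm^{5}$ consecutive blocks of length $n^{3}$. By Claim \ref{theo:Lemma_RNG_Vertex}, within a block in which the iteration-best solution is still non-optimal the probability that no $2$-opt move is ever produced is at most $(1-\Omega(n^{-3}))^{N\cdot n^{3}} = e^{-\Omega(N)} = m^{-\omega(1)}$. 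By Lemma \ref{theo:TSP_Inverse_Incremental} each such $2$-opt cuts the cost by $\Omega(m^{-4})$, while the worst feasible cost is at most $O(nm)$ and the optimum is at least $n$; hence after at most $nm^{5}$ successful blocks the iteration-best solution must be intersection-free and therefore optimal (using $V^{b}=V$ together with Lemma \ref{theo:TSP_Optimality_Intersection} and Quintas--Supnick). A final union bound over the $nm^{5}$ blocks yields a global success probability of $1-m^{-\omega(1)}$.

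For part $b$, the argument is essentially the same, except that we now partition the first $n^{3}m^{5}$ iterations into $nm^{5}$ blocks of length $n^{2}$, using Claim \ref{theo:Unchange} in place of Claim \ref{theo:Vertex_nonincreasing}, and Claim \ref{theo:Lemma_RNG_Edge} (applied with $k=2$, so that each specific $2$-exchange has probability $\Omega(n^{-2})$ in one draw) in place of Claim \ref{theo:Lemma_RNG_Vertex}. The same telescoping of probabilities, combined with $(1-\Omega(n^{-2}))^{N\cdot n^{2}} = m^{-\omega(1)}$, delivers again the bound $1-m^{-\omega(1)}$.

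I do not foresee any real obstacle beyond careful bookkeeping: all the structural ingredients (Claims \ref{theo:Vertex_nonincreasing}, \ref{theo:Lemma_RNG_Vertex}, \ref{theo:Unchange}, \ref{theo:Lemma_RNG_Edge}, Lemmas \ref{theo:TSP_Optimality_Intersection} and \ref{theo:TSP_Inverse_Incremental}) are already in place. The only point that deserves explicit verification is the closure of the form $m^{-\omega(1)}$ under multiplication by a polynomial in $m$, which holds because $m^{-\omega(1)} = o(m^{-c})$ for every constant $c$, and therefore $m^{d}\cdot m^{-\omega(1)} = o(m^{d-c})$ is $m^{-\omega(1)}$ as well.
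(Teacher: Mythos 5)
Your proposal is correct and follows essentially the same route the paper intends: it reruns the proof of Theorem \ref{theo:TSP_Convex_Case} with $N\in\omega(\ln m)$ and uses the observation (already invoked for Corollary \ref{theo:G1_Perfect}) that $e^{-\Omega(N)}=e^{-\omega(\ln m)}=m^{-\omega(1)}$ is closed under multiplication by any polynomial in $m$ (with $m\ge n/2$ making $n^4m^5$ polynomial in $m$). The bookkeeping of blocks, the per-block $2$-opt probabilities, and the non-increase arguments all match the paper's argument.
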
}
 
 Now, we consider the more interesting case that \Wurevision{$|V|-|V^b|= k\in O(1),$}
 \Wurevision{i.e., $k$ vertices are not on the convex hull.} Note that 
 we can turn an arbitrary intersection-free solution to an optimal solution only by rearranging the positions of those $k$ interior points in that solution, and this requires
 at most $k$  consecutive jump moves (see \citep{Sutton2014Parameterized} for a proof). 
 A \emph{jump move} $\delta_{i,j}$ transforms a solution into another solution by shifting positions $i$, $j$ as follows. Solution
 $s$ is transformed into solution $\delta_{i,j}(s)$ by moving the vertex at position 
 $i$ into position $j$ while vertices at positions between $i$ and $j$ are shifted appropriately,
 e.g.,
 \[
 \delta_{2,5}(i_1,i_2,i_3,i_4,i_5,i_6,i_7)=(i_1,i_3,i_4,i_5,i_2,i_6,i_7)\quad\text{and}
 \]
 \[
 \delta_{5,2}(i_1,i_2,i_3,i_4,i_5,i_6,i_7)=(i_1,i_5,i_2,i_3,i_4,i_6,i_7).
 \]
 It is not difficult to see that a jump move $\delta_{i,j}$ can be simulated by either a $2$-exchange move (in the case that $|i\!-\!j|\!=\!1$)
 or a $3$-exchange move (in all other cases). Therefore, we can actually turn an intersection-free
 solution into an optimal one by a sequence of at most $k$  consecutive $2$-exchange or 
 $3$-exchange moves. Furthermore, a sequence of $k$  consecutive $2$-exchange or $3$-exchange moves can be
 simulated by a $\kappa$-exchange move with an integer $\kappa \le 3k.$ This means that any intersection-free solution can be turned into an optimal solution by
 a $\kappa$-exchange move with $\kappa\le 3k.$
  We shall call such a
 $\kappa$-exchange move in the sequel
 {\em a $3k$-opt move}, although $\kappa$ may be smaller than $3k$.
 Recall that a $3k$-opt move is produced with a probability of $\Omega\b{\frac{1}{n^{6k-1}}}$ by
 Algorithm \ref{alg:TSP_RNG_Vertex} (see \Wurevision{Claim \ref{theo:Lemma_RNG_Vertex}}), and with a probability of $\Omega\b{\frac{1}{n^{3k}}}$ by 
 Algorithm \ref{alg:TSP_RNG_Edge} (see Lemma 6 of \citep{K2012Theoretical}, \Wurevision{or
 Claim \ref{theo:Lemma_RNG_Edge}}) in \WWu{any} of
 the $N$  independent draws in iteration $t$, if $\X_{t-1}^{[1]}$ is
 intersection-free \WWu{and not optimal}. As a result, we obtain by a similar proof \WWu{as above} Theorem \ref{theo:grid_2} below.
 
 \begin{theorem}
 \label{theo:grid_2}
Consider a TSP instance with $n$ vertices located on an $m\times m$ grid such that no three of them are collinear.
 Assume that \Wurevision{$|V|-|V^b|= k\in O(1)$ ($k$ vertices are not on the convex hull $V^b$),} that we apply the max-min calibration \ref{eq:adjustment} with $\pi_{\max}=1-\frac{1}{n}, \pi_{\min}=\frac{1}{n(n-2)},$ and set $\rho=1, M=1,$
 for some constant $\epsilon>0.$ Then:
\begin{itemize} 
\item[$a)$] If we set $N\in \Omega(n^{3}\cdot m^{\epsilon}),$ then with an overwhelming probability of $1-e^{-\Omega(N/n^3)},$
   Algorithm \ref{alg:ce} finds an optimal solution within 
    at most $n\cdot m^5+n^{6k-4}$ iterations with the vertex-based random solution generation;
    
     \item[$b)$] If we set $N\in \Omega(n^{2}\cdot m^{\epsilon}),$  then with an overwhelming probability
      of $1-e^{-\Omega(N/n^2)},$ Algorithm \ref{alg:ce} finds an optimal solution within 
      at most $n\cdot m^5+n^{3k-2}$ iterations with the edge-based random solution generation.
\end{itemize}
 \end{theorem}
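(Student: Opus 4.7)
The plan is to split the run into two phases and combine their failure probabilities via a union bound. Phase 1 lasts until the iteration-best $\X_t^{[1]}$ first becomes intersection-free; Phase 2 waits for a single $3k$-opt move that transforms such an intersection-free iteration-best into the optimum. For Phase 1 I would follow the blueprint of Theorem \ref{theo:TSP_Convex_Case}. By Lemma \ref{theo:TSP_Inverse_Incremental}, each successful (intersection-removing) 2-opt reduces cost by $\Omega(m^{-4})$, and since the largest achievable cost is $O(nm)$ while the optimum is $\Omega(n)$, at most $O(nm^5)$ such successes drive the iteration-best to become intersection-free. For part $a)$, Claim \ref{theo:Vertex_nonincreasing} ensures the iteration-best cost does not increase with probability $1-e^{-\Omega(N)}$ per iteration, and Claim \ref{theo:Lemma_RNG_Vertex} gives probability $\Omega(1/n^3)$ per draw for any specified 2-opt. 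A union bound over the first $nm^5$ iterations, combined with $N\in\Omega(n^3m^\epsilon)$, then bounds the Phase 1 failure probability by $e^{-\Omega(N/n^3)}$. Part $b)$ is analogous with Claim \ref{theo:Unchange} and Claim \ref{theo:Lemma_RNG_Edge} replacing their vertex-based counterparts and 2-opt probability $\Omega(1/n^2)$ per draw, giving failure probability $e^{-\Omega(N/n^2)}$.

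For Phase 2 I would invoke the structural observation given immediately before the theorem: from any intersection-free (non-optimal) solution, at most $k$ consecutive jump moves rearranging the interior vertices yield the optimum, and such a sequence is realizable as a single $\kappa$-exchange with $\kappa\le 3k$. In part $a)$, Claim \ref{theo:Lemma_RNG_Vertex} bounds the probability of producing this specific $3k$-opt from below by $\Omega(1/n^{6k-1})$ per draw, so the failure probability over $n^{6k-4}$ iterations is
\[
\bigl(1-\Omega(n^{-(6k-1)})\bigr)^{N\cdot n^{6k-4}}=e^{-\Omega(N/n^3)}.
\]
In part $b)$, Claim \ref{theo:Lemma_RNG_Edge} gives probability $\Omega(1/n^{3k})$ per draw, and $n^{3k-2}$ iterations yield failure probability $e^{-\Omega(N/n^2)}$. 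Summing the two phases via the union bound then delivers the total runtimes $nm^5+n^{6k-4}$ for part $a)$ and $nm^5+n^{3k-2}$ for part $b)$, each with the claimed overwhelming probability.

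The main obstacle I foresee is that during Phase 2 the iteration-best need not remain intersection-free: cost non-increase (Claims \ref{theo:Vertex_nonincreasing} and \ref{theo:Unchange}) does not forbid oscillations that temporarily reintroduce intersections, as long as the cost stays below the Phase 1 threshold. I would handle this by noting that every such intersection-containing iteration-best again admits a cost-reducing 2-opt, and that the total cost budget permits at most $O(nm^5)$ additional intersection-containing iteration-bests. Absorbing these extra iterations into an enlarged Phase 1 budget of $O(nm^5)$ preserves all probability bounds, so that the Phase 2 estimate applies uniformly to the remaining iterations, in each of which the iteration-best is intersection-free and hence admits a $3k$-opt witness to the optimum with the stated probability.
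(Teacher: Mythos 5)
Your proposal is correct and, after your self-correction at the end, coincides with the paper's own argument: the paper conditions on the two events that the iteration-best cost never increases and that every intersection-containing iteration-best triggers a 2-opt, then uses the $\Omega(m^{-4})$ cost-decrease budget to bound the number of intersection-containing iterations by $n\cdot m^5$, leaving at least $n^{6k-4}$ intersection-free iterations in each of which the specific $3k$-opt to the optimum is produced with probability $\Omega(n^{-(6k-1)})$ per draw (resp.\ $\Omega(n^{-3k})$ for the edge-based generation). The obstacle you flag — that intersection-free and intersection-containing iterations can interleave — is real, and your fix (a global count of intersection-containing iterations rather than a strict two-phase split) is exactly how the paper resolves it.
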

 
 \begin{proof}[Proof of Theorem \ref{theo:grid_2}]
We only prove $a).$ $b)$ can be derived by a very similar argument.
We define two random events as following: 
\begin{itemize}
\item[$\EE_1:$] for each $t \le n\cdot m^5+n^{6k-4},$ $f(\X_{t-1}^{[1]}) \ge f(\X_t^{[1]});$
\item[$\EE_2:$] for each $t\le n\cdot m^5+n^{6k-4},$ if $\X_{t-1}^{[1]}$ is not intersection-free, then a 2-opt move happens in iteration $t.$
\end{itemize}
By a similar argument as the one for Theorem \ref{theo:TSP_Convex_Case}, we obtain that 
$
\P\sb{\EE_1\cap \EE_2} \ge 1-e^{-\Omega\b{N/n^3}}.
$
Let $\eta$ be a random variable denoting the number of iterations for which $\X_{t-1}^{[1]}$
is intersection-free.
Notice that, conditioned on $\EE_1\cap \EE_2,$ $\eta\ge n\cdot m^5$ implies that an optimal solution
occurs within $n\cdot m^5+n^{6k-4}$ iterations.

Conditioned on $\EE_1\cap\EE_2$ and $\eta<n\cdot m^5,$ there are at least $\Omega(n^{6k-4})$ iterations in which
$\X_{t-1}$ is intersection-free, since each $\X_{t-1}^{[1]}$ is either intersection-free or not intersection-free.
Note also that in each iteration in which $\X_{t-1}^{[1]}$ intersection-free and not optimal, a $3k$-opt move that  turns $\X_{t-1}^{[1]}$ into an optimal solution  happens with probability
of at least
$
1-\b{1-\Omega\b{\frac{1}{n^{6k-1}}}}^N.$
This means for any fixed $t\in \N,$ if $\X_{t-1}^{[1]}$ is intersection-free, then the probability of the event that
$\X_t^{[1]}$ is optimal is bounded from below by
$
1-\b{1-\Omega\b{\frac{1}{n^{6k-1}}}}^N.
$
Therefore, for any fixed $\Omega\b{n^{6k-4}}$  iterations in which the iteration-best
solution $\X_{t-1}^{[1]}$ is intersection-free and \WWu{not optimal}, the probability of the
event that the corresponding $\Omega(n^{6k-4})$  $\X_t^{[1]}$'s are still not optimal,
is bounded from above by
$
\b{1-\Omega\b{\frac{1}{n^{6k-1}}}}^{N\cdot n^{6k-4}}=e^{-\Omega\b{N/n^3}}.
$
This means that, conditioned on $\EE_1\cap \EE_2$ and $\eta<n\cdot m^5,$ an optimal solution occurs within
$n\cdot m^5+n^{6k-4}$ iterations with a probability of $1-e^{-\Omega\b{N/n^3}}.$

As a result, an optimal solution occurs within the first \WWu{$n\cdot m^5+n^{6k-4}$} iterations with a probability
of $1-e^{-\Omega\b{N/n^3}}.$
 \end{proof}

 Theorem \ref{theo:grid_2} shows a stochastic runtime of $n^{3}m^{5+\epsilon}+n^{6k-1}m^{\epsilon}$
 for Algorithm \ref{alg:ce} equipped with the vertex-based solution generation, and a stochastic runtime of
 $n^{3}m^{5+\epsilon}+n^{3k}m^{\epsilon}$ for Algorithm \ref{alg:ce} equipped with edge-based solution generation,
 in the case of that $|V|-|V^b|=k\in O(1).$ This is much better than the expected runtime 
 \[
O\b{\mu\cdot n^3 m^5\!+\!nm^5\!+\!\mu\cdot n^{4k}(2k\!-\!1)!}
 \]
 for
 $(\mu\!+\!\lambda)$ EA with sequential $2$-opt mutations reported by Sutton et al  \citep{Sutton2014Parameterized}. 
 However, we are not able to analyze
 the stochastic runtime in the case that $k\in \omega\b{1},$ since $k\in \omega\b{1}$  interior points may require
 super-polynomially many iterations to turn an intersection-free solution into an optimal solution
 when a polynomial sample size is used. 
 
 


\section{Conclusion}\label{sec:conclusion}
We have analyzed the stochastic runtime of a CE algorithm on two classes of
TSP instances under two different random solution generation methods. The
stochastic runtimes are comparable with corresponding expected runtimes
reported in the literature.

Our results show that the edge-based random solution generation method
makes the algorithm more efficient for TSP instances in most cases. More-
over, \Wurevision{$N\in \Omega(\ln n)$} is \Wurevision{necessary} for
efficiently finding an optimal solution with iteration-best reinforcement. For
\SecondRevision{simple} instances, \Wurevision{$N\in \omega(\ln n)$ is sufficient to efficiently find an
optimal solution with an overwhelming probability, \Wu{and $N\in O(\ln n)$} results in an exponential runtime with an overwhelming 
probability. However,} for more difficult instances,
one may need to use a relatively large sample size.

Our stochastic runtimes are better than the
expected runtimes of the $(\mu+\lambda)$ EA on the grid instances. The EA randomly
changes local structures of some of its current solutions by a Poisson distributed number of consecutive $2$-exchange moves in every iteration, while our algorithm refrains from local operations on current solutions and only refreshes solutions by
sampling from an evolving distribution. The solution reproducing mechanism
in the EA stays the same throughout the optimization, 
only the current solutions in every iteration vary. However, the solution reproducing mechanism (sampling distribution) of our algorithm also evolves.
This is the essential difference of MBS with
 traditional EAs. The comparison of our results
with the expected runtimes in \citep{Sutton2014Parameterized} therefore show  that using a self-adaptive dynamic 
solution reproducing mechanism is helpful (in efficiently finding an optimal solution) when the search space 
\rightnote{what is rough?\\
	\Wu{same as rugged!}} 
\WWu{becomes rugged.}
The stochastic runtimes in Theorem 4 are only valid for instances with a
bounded number of interior points. In the future, it should be interesting to
analyze the case that \Wurevision{$|V|-|V^b|\in \omega\b{1}.$}
This might also give more insight
to the problem of $\mathcal{RP}\ v.s.\ \mathcal{P}$ 
\citep{Gasarch2015Classifying}.


Our analysis is actually a kind of worst-case analysis, \SecondRevision{
which is} rather pessimistic. We analyze the optimization progress by only
checking some very particular random events. This may not only underestimate
the probability \Wurevision{of finding an optimal solution with our algorithm}, but also overestimate the required number of iterations. In the future, it should be of great interest to consider
a smoothed runtime analysis over an $\epsilon$-neighborhood of the $n$ nodes
in the real plane as has been done for the Simplex method by Spielman and Teng in their
famous paper \citep{Spielman2004Smoothed}.



\Wurevision{
\section*{Acknowledgment}
We thank the anonymous reviewers for their numerous useful suggestions on 
improving the scientific quality and English presentation of this article.
}


\end{document}